	\theoremstyle{definition}
		\newtheorem{lemma}{Lemma}
		\newtheorem{assumption}{Assumption}
		\newtheorem{theorem}{Theorem}
		\newtheorem{definition}{Definition}
		\newtheorem{remark}{Remark}
	\definecolor{colorhkust}{HTML}{142B8C}
	\definecolor{colorshanghaitech}{HTML}{A20005}
	\definecolor{colortsinghua}{HTML}{743481}
	\definecolor{colordark}{RGB}{184,134,11}
	\definecolor{colorRed}{RGB}{128, 0, 0}
	\definecolor{colorGreen}{RGB}{0, 64, 0}
	\definecolor{colorBlue}{RGB}{0, 0, 128}
\begin{document}

\title{FedLoDrop: Federated LoRA with Dropout for Generalized LLM Fine-tuning
% \thanks{Identify applicable funding agency here. If none, delete this.}
}

\author
{Sijing Xie, Dingzhu Wen, Changsheng You, Qimei Chen, Mehdi Bennis, and Kaibin Huang
    \thanks{S. Xie and D. Wen are with the School of Information Science and Technology, ShanghaiTech University, Shanghai 201210, China (e-mail: \{xiesj2023, wendzh\}@shanghaitech.edu.cn). %\textit{(Corresponding author: Dingzhu Wen).}
    }
    \thanks{C. You is with the Department of Electronic and Electrical Engineering, Southern University of Science and Technology, Shenzhen 518055, China (e-mail: youcs@sustech.edu.cn).}
    \thanks{Q. Chen is with the School of Electronic Information, Wuhan University, Wuhan 430072, China (e-mail: chenqimei@whu.edu.cn).}
    \thanks{M. Bennis is with the Centre for Wireless Communications, University of Oulu, Oulu 90014, Finland (e-mail: mehdi.bennis@oulu.fi).}
    \thanks{K. Huang is with the Department of Electrical and Electronic Engineering, The University of Hong Kong, Hong Kong SAR, China (e-mail: huangkb@eee.hku.hk).}

}

\maketitle

\begin{abstract}
Fine-tuning (FT) large language models (LLMs) is crucial for adapting general-purpose models to specific tasks, enhancing accuracy and relevance with minimal resources. To further enhance generalization ability while reducing training costs, this paper proposes Federated LoRA with Dropout (FedLoDrop), a new framework that applies dropout to the rows and columns of the trainable matrix in Federated LoRA. A generalization error bound \textcolor{black}{and convergence analysis} under sparsity regularization are obtained, which elucidate the fundamental trade-off between underfitting and overfitting. The error bound reveals that a higher dropout rate increases model sparsity, thereby lowering the upper bound of pointwise hypothesis stability (PHS). While this reduces the gap between empirical and generalization errors, it also incurs a higher empirical error, which, together with the gap, determines the overall generalization error. On the other hand, though dropout reduces communication costs, deploying FedLoDrop at the network edge still faces challenges due to limited network resources. To address this issue, an optimization problem is formulated to minimize the upper bound of the generalization error, by jointly optimizing the dropout rate and resource allocation subject to the latency and per-device energy consumption constraints. To solve this problem, a branch-and-bound (B\&B)-based method is proposed to obtain its globally optimal solution. Moreover, to reduce the high computational complexity of the B\&B-based method, a penalized successive convex approximation (P-SCA)-based algorithm is proposed to efficiently obtain its high-quality suboptimal solution. Finally, numerical results demonstrate the effectiveness of the proposed approach in mitigating overfitting and improving the generalization capability.
\end{abstract}

\begin{IEEEkeywords}
     Large Language Models, Federated Learning, Low-rank Adaptation, Dropout, Generalization Error
\end{IEEEkeywords}

\section{Introduction}
Recent developments in large language models (LLMs), such as ChatGPT, LLaMA, and Vision Transformers, have driven significant progress in artificial general intelligence (AGI) \cite{shen2024large}. These models exhibit enhanced capabilities in generalization and inference. To adapt these models for specific tasks and data distributions, fine-tuning (FT) has become a key approach, which is crucial for establishing native artificial intelligence (AI) in the era of 6G \cite{10558825, jiang2024semantic,jiang2025comprehensive}. However, as model sizes grow, full-model FT becomes unaffordable in practice due to prohibitively high computational and memory costs.

To tackle these challenges, parameter-efficient fine-tuning (PEFT) has emerged as a pivotal approach \cite{wu2025adaptive, zhu2023prompt, bai2023knowprefix}. PEFT updates only a small subset of model parameters, providing a balanced trade-off between computational efficiency and training performance. One notable method within PEFT is Low-Rank Adaptation (LoRA) \cite{hu2021lora}, which employs low-rank matrices to approximate weight changes while keeping the original model weights frozen. The flexibility of LoRA allows it to be merged with the backbone, eliminating additional costs. Compared to other PEFT methods based on pruning, LoRA achieves superior efficiency-utility trade-offs \cite{kuo2024federated}.
% As the size of these models continues to grow, traditional fine-tuning methods, which require updating all model parameters, become increasingly computationally prohibitive and memory-intensive. In contrast, PEFT updates only a small subset of parameters or additional modules while keeping the rest of the model frozen, which can reduce the computation workload on devices. Key methodologies within the PEFT paradigm include the integration of adapters, which introduce lightweight, trainable modules within existing model architectures, allowing only these parameters to be fine-tuned while keeping the bulk of the model's weights frozen. Another prominent technique is Prompt Tuning, which focuses on optimizing the input prompts as a means to elicit desired outputs from the model, thus circumventing the need for extensive parameter adjustments. 
% decomposes weight updates into low-rank matrices, enabling efficient parameter modifications without the overhead of full model updates.
The authors in \cite{10472574} applied LoRA to a hierarchical LLM, classifying the instruction type and then utilizing task-specific networks to accomplish respective tasks. FT LLMs on a single device often result in suboptimal performance due to limited data and memory constraints \cite{10908556}. Centralized FT faces significant challenges, mainly because of privacy concerns and regulatory restrictions on data access \cite{10835069}. This is exacerbated by the fact that massive amounts of fragmented data are distributed across numerous devices at the network edge.

Federated learning (FL) enables privacy-preserving FT of pre-trained LLMs on distributed clients by sharing model updates between the server and clients, ensuring that distributed data remains localized \cite{11143883, 10453339, 11017442, 10909031, ZW_spectrum}. This makes FL an appealing choice for aligning LLMs with specialized domains \cite{10877919, 11026882}. Specifically, the authors in \cite{zhang2024towards, ye2024openfedllm} incorporated LoRA with FedAvg, significantly reducing the number of parameters that need to be synchronized across distributed devices. Several works have focused on improving the efficiency of federated LoRA. Due to the frequent exchange of LLM parameters among distributed devices, \cite{kuo2024federated} proposed integrating communication compression with federated LoRA to further reduce the communication cost. In addition, a hierarchical FedLoRA framework was proposed in \cite{liu2025resource}, which dynamically assigned diverse and suitable FT depths for each group, hence greatly reducing the computation and communication cost. The authors in \cite{sun2024improving} proposed a memory-efficient FT method, which sets the $\boldsymbol{A}$ matrices to fixed after initialization and trains only the $\boldsymbol{B}$ matrices. Other works focus on the deployment and performance of federated LoRA. \cite{babakniya2023slora} proposed an improved initialization strategy for LoRA’s weights, resulting in enhanced performance under the FedFM framework. Due to limited storage and computational capabilities at the edge devices, deploying the full model at edge devices is impractical. \cite{10855336} presented a split federated LoRA framework, where the computationally intensive encoder is deployed at the edge server, while 
others remain on edge devices.
% embedding and task modules remain on edge devices.

However, one significant challenge when FT models on downstream tasks is overfitting. Simply reducing the rank of LoRA could help alleviate overfitting, but fewer learnable parameters indicate less expressive power, and might lead to suboptimal performances. AdaLoRA \cite{zhangadaptive} optimizes LoRA by automatically pruning unimportant parameters with learned importance scores during training to prevent overfitting. However, this parameter selection method heavily relies on gradients of parameters on the training data, making the models less generalizable to unseen test data. To address this issue, a federated dropout framework was proposed in \cite{9707474}, which utilized the typical technique of dropout in deep learning. \cite{xie2024federated} further provided the convergence analysis for federated dropout, quantitatively showing the influence of dropout rate on convergence. In addition, the authors in \cite{lin2024lora} combined LoRA with dropout in an aggregation way. This, however, has the centralized challenges as mentioned before.

Towards this end, we propose a practical Federated LoRA with dropout (FedLoDrop) approach, which introduces random noise to the learnable low-rank matrices. FedLoDrop effectively mitigates overfitting while simultaneously reducing communication costs and minimizing update sizes transmitted from clients to the server \footnote{\textcolor{black}{Mixture-of-Experts (MoE) has recently emerged as an effective paradigm for reducing computational overhead and can serve as a complementary approach to FedLoDrop. The adaptive dropout mechanisms in FedLoDrop could be specifically tailored to promote more balanced expert utilization across heterogeneous clients, thereby preventing any single expert from dominating the learning process or remaining underutilized. }}. We provide the theoretical analysis using pointwise hypothesis stability (PHS) and Taylor expansion. 
% It is shown that a lower dropout rate enlarges the gap between generalization and empirical errors. 
Then, we aim to minimize the generalization error in each round by jointly optimizing dropout rate and network resource allocation. Finally, simulations are presented to validate the effectiveness of our proposed approach. The main contributions of this paper are listed below.
\begin{itemize}
    \item {\bf FedLoDrop Framework with Reduced Communication Cost}: We propose a practical FedLoDrop framework, where dropout is applied to the rows and columns of the tunable low-rank parameter matrices, e.g., $\boldsymbol{A}$ and $\boldsymbol{B}$. Specifically, for device $k$ with a dropout rate of $\gamma_k \in [0,1)$, dropout technique deactivates neurons on the input and output sides of both trainable LoRA matrices \cite{lin2024lora}. After applying dropout, the communication overhead is scaled down to $(1-\gamma)$ times of the original one.  

    \item {\bf Theoretical Analysis}: We characterize the effect of LoRA dropout on the sparsity and PHS upper bound. It is shown that a lower dropout rate enhances model complexity, increasing overfitting potential and widening the gap between generalization and empirical errors (adaptation function class). Conversely, a drastically high dropout rate may incur underfitting, significantly impairing the representation ability of the model and resulting in higher empirical error. This provides a theoretical foundation to balance the tradeoff between the adaptation function class and empirical error. \textcolor{black}{Moreover, a convergence analysis is also conducted, revealing that the convergence rate becomes slower with increasing dropout rate. }

    \item {\bf Joint Dropout Control and Resource Allocation}: Based on the theoretical analysis for FedLoDrop, we formulate an optimization problem to minimize the generalization error of each FT round under the network resource constraint, which is shown to be dependent on the dropout rate of each device. Particularly, a larger dropout rate leads to a smaller gap between empirical and generalization errors. As the exact form of the learning loss reduction in the generalization error bound is intractable, its upper bound is minimized instead without loss of generality, under the constraints of limited system subcarriers, completion latency, and per-device energy consumption. A branch-and-bound (B\&B)-based method is proposed to find the globally optimal solution. Moreover, to efficiently solve the problem, a low-complexity penalized successive convex approximation (P-SCA)-based solution is proposed to find a high-quality suboptimal solution.

    \item {\bf Performance Evaluation}: Extensive simulations based on multi-language tasks are conducted to evaluate the performance of the proposed schemes. We mainly FT two LLMs, RoBERTa-large (355M) and LLaMA (7B), on the GLUE and MMLU benchmarks, respectively. With dropout, the generalization ability of the fine-tuned model is significantly improved, thus enabling the capability to effectively apply the knowledge from the FT dataset to natural language response tasks. In both scenarios, more network resources, i.e., a longer latency, allow lower dropout rates of all devices, leading to improved testing performance.
\end{itemize}

% The remainder of this paper is organized as follows. 

\section{Framework of Federated LoRA Dropout}

\subsection{Preliminary-LoRA}

LoRA fine-tunes LLMs efficiently by maintaining the original model weights $\boldsymbol{\theta}_0$ frozen and adding low-rank trainable matrices  \cite{hu2021lora}. Specifically, the loss function of an LLM in the FT stage is
% \begin{equation}
%     \mathcal{L}=\frac{1}{|\mathcal{D}|}\sum_{i \in \mathcal{D} }\ell \left( \Delta \boldsymbol{\theta};\boldsymbol{\theta}_0, \boldsymbol{x}_i   \right),
% \end{equation}
$    \mathcal{L}=\frac{1}{|\mathcal{D}|}\sum_{i \in \mathcal{D} }\ell \left( \Delta \boldsymbol{\theta};\boldsymbol{\theta}_0, \boldsymbol{x}_i   \right),$
where $\mathcal{D}$ is the dataset, $|\mathcal{D}|$ is the size of dataset, $\boldsymbol{x}_i$ is the $i$-th data sample therein, and $\ell(\cdot;\cdot)$ is the empirical loss function that characterizes the difference between the output and real label.

Consider an arbitrary layer in an arbitrary training round $\boldsymbol{W}_{u,t}$ applied with LoRA for FT,  a LoRA update is thus characterized by a set of low-rank trainable weights $\Delta \boldsymbol{\theta}_t \triangleq \{ \Delta \boldsymbol{W}_{u,t} \}_{u=1}^{U^{\prime}} $, a set of pre-trained weight $\boldsymbol{\theta}_0 \triangleq \{ \boldsymbol{W}_{u,0} \}_{u=1}^U$, where $U^{\prime}$ is the number of weight matrices applying LoRA and $U$ is the number of all matrices. LoRA may not update all matrices, in which case $U^{\prime} \leq U$. The parameter matrix $\boldsymbol{W}_{u,t}\in \mathbb{R}^{n_1 \times n_2}$ can be regarded as the sum of a frozen pre-trained matrix $\boldsymbol{W}_{u,0} \in \mathbb{R}^{n_1 \times n_2}$ and a trainable low-rank decomposable matrix $\Delta \boldsymbol{W}_{u,t} \in \mathbb{R}^{n_1 \times n_2}$, i.e.,
\begin{equation}
    \begin{aligned}
            \boldsymbol{W}_{u,t} &= 
            \boldsymbol{W}_{u,0}+\Delta \boldsymbol{W}_{u,t}=\boldsymbol{W}_{u,0}+\boldsymbol{B}_{u,t}\boldsymbol{A}_{u,t},
    \end{aligned}
\label{equ:lora}
\end{equation}
% {\color{red}$           \boldsymbol{W}_u = 
%             \boldsymbol{W}_{u,0}+\Delta \boldsymbol{W}_u=\boldsymbol{W}_{u,0}+\boldsymbol{B}_u\boldsymbol{A}_u,$}
where  $\boldsymbol{A}_{u,t} \in \mathbb{R}^{r \times n_2}$ and  $\boldsymbol{B}_{u,t} \in \mathbb{R}^{n_1 \times r}$ are trainable low-rank matrices, with $r \ll \{n_1,n_2\}$. For an arbitrary training round $t$ and data sample $i$, the training procedure of LoRA is presented below.

\begin{figure*}[htbp]
    \centering
    \includegraphics[width=0.7\linewidth]{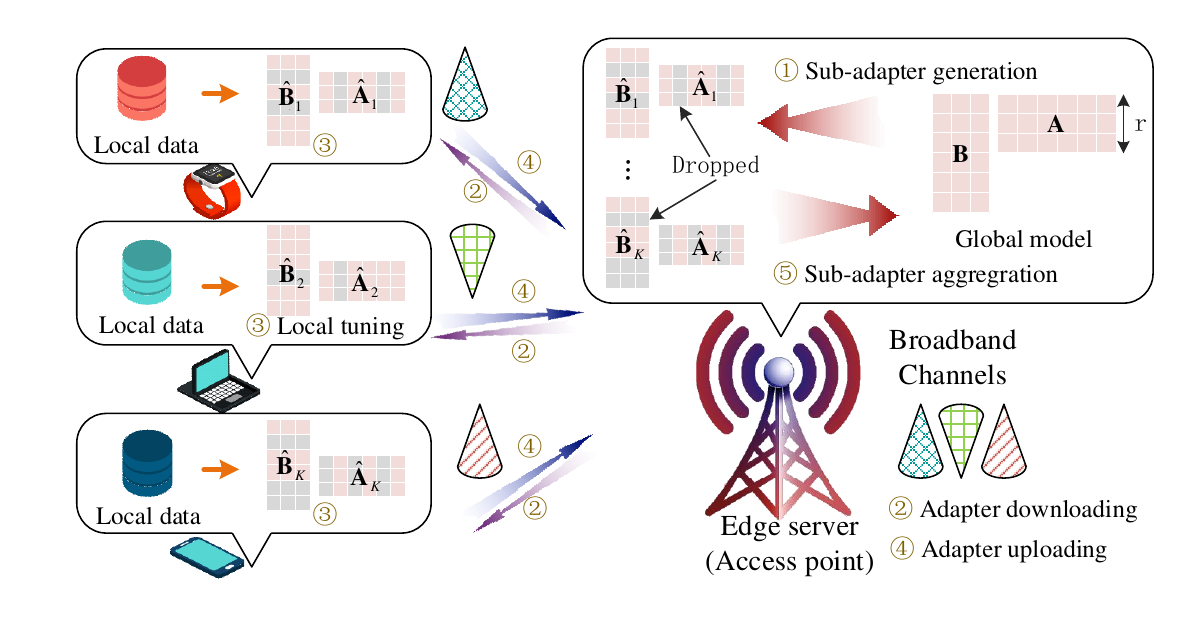}
    \caption{The operations of FedLoDrop in a wireless system.}
    \label{fig:system}\vspace{-0.5cm}
\end{figure*}

\subsubsection{Forward Pass} 
The output of the $u$-th layer is given by 
\begin{equation}
        \begin{aligned}
            {\boldsymbol{h}}_{u,i,t} &= \boldsymbol{W}_{u,t-1}{\boldsymbol{f}_{u-1,i,t}}=(\boldsymbol{W}_{u,0}+\Delta \boldsymbol{W}_{u,t-1}) \boldsymbol{f}_{u-1,i,t}\\
            & = (\boldsymbol{W}_{u,0}+{\boldsymbol{B}_{u,t-1}}{\boldsymbol{A}_{u,t-1}}) \boldsymbol{f}_{u-1,i,t},
    \end{aligned}
\label{equ:lora_for}
\end{equation}
where $\boldsymbol{f}_{u-1,i,t}$ is the output of the previous layer.
% based on $\boldsymbol{x}_i$ in $t$-th round.

\subsubsection{Backward Pass}
During the backward propagation, the stochastic gradients of the two low-rank matrices $\textbf{B}$ and $\textbf{A}$ are calculated individually, given by
\begin{equation}\label{equ:backpropa_pri_1}
\begin{aligned}
     \Delta\boldsymbol{B}_{u,i,t-1} 
     &=\frac{\partial \ell_{t}}{\partial {\boldsymbol{B}_{u,t-1}}} = \frac{\partial \ell_{t}}{\partial  {\boldsymbol{h}_{u,i,t}}}  \frac{\partial  {\boldsymbol{h}_{u,i,t}}}{\partial {\boldsymbol{B}_{u,t-1}}} 
     \\
     & = \frac{\partial \ell_{t}}{\partial  {\boldsymbol{h}_{u,i,t}}} \cdot \left( {\boldsymbol{A}_{u,t-1}} \boldsymbol{f}_{u-1,i,t}\right)^{\top},
\end{aligned}
\end{equation}

\begin{equation}\label{equ:backpropa_pri_2}
\begin{aligned}
    \Delta\boldsymbol{A}_{u,i,t-1} 
     &= \frac{\partial \ell_{t}}{\partial {\boldsymbol{A}_{u,t-1}}} = \frac{\partial \ell_{t}}{\partial  {\boldsymbol{h}_{u,i,t}}}  \frac{\partial  {\boldsymbol{h}_{u, i,t}}}{\partial {\boldsymbol{A}_{u,t-1}}} \\
     & =  {\boldsymbol{B}^{\top}_{u,t-1}}  \cdot \frac{\partial \ell_{t}}{\partial  {\boldsymbol{h}_{u,i,t}}} \cdot   \boldsymbol{f}^{\top}_{u-1,i,t}.
\end{aligned}
\end{equation}
% for all ${\bf x}_i \in \mathcal{D}$.

\subsubsection{Low-rank Matrices Updating}

$\boldsymbol{B}_{u,t}$ and $\boldsymbol{A}_{u,t}$ are updated by the aggregated stochastic gradients of all data samples \cite{lin2024lora,10807365,cho2023heterogeneous}, i.e.,
%, we calculate the gradients of $\boldsymbol{B}_t$ and $\boldsymbol{A}_t$ by the aggregated loss,
\begin{equation}\label{equ:backpropa_1}
\begin{aligned}
     \Delta\boldsymbol{B}_{u,t-1} 
     &=\frac{\partial \mathcal{L}}{\partial {\boldsymbol{B}_{u,t-1}}} = \frac{1}{|\mathcal{D}|}\sum_{i \in \mathcal{D} } \frac{\partial \ell_{t}}{\partial {\boldsymbol{B}_{u,t-1}}} \\
     & = \frac{1}{|\mathcal{D}|}\sum_{i \in \mathcal{D} } \Delta\boldsymbol{B}_{u,i,t-1},
\end{aligned}
\end{equation}
\begin{equation}\label{equ:backpropa_2}
\begin{aligned}
    \Delta\boldsymbol{A}_{u,t-1} 
     &= \frac{\partial \mathcal{L}}{\partial {\boldsymbol{A}_{u,t-1}}} = \frac{1}{|\mathcal{D}|}\sum_{i \in \mathcal{D} } \frac{\partial \ell_{t}}{\partial {\boldsymbol{A}_{u,t-1}}}\\
     & = \frac{1}{|\mathcal{D}|}\sum_{i \in \mathcal{D} } \Delta\boldsymbol{A}_{u,i,t-1},
\end{aligned}
\end{equation}
where $\Delta\boldsymbol{B}_{u,i,t-1}$ and $\Delta\boldsymbol{A}_{u,i,t-1}$ are the sample-wise stochastic gradient matrices of the $u$-th layer defined in  \eqref{equ:backpropa_pri_1}, \eqref{equ:backpropa_pri_2}. 
%for each data sample, $\Delta\boldsymbol{B}_{i,t}$ and $\Delta\boldsymbol{A}_{i,t}$ are calculated as Eq. \eqref{equ:backpropa_pri}, 
Then ${\boldsymbol{B}_{u,t}}$ and ${\boldsymbol{A}_{u,t}}$ are updated through gradient descent, with $\alpha_B$ and $\alpha_A$ being the learning rate:
\begin{equation}\label{equ:sgd_1}
    \begin{aligned}
            \boldsymbol{B}_{u,t} &= \boldsymbol{B}_{u,t-1} - \alpha_B\Delta\boldsymbol{B}_{u,t-1},
    \end{aligned}
\end{equation}
\begin{equation}\label{equ:sgd_2}
    \begin{aligned}
            \boldsymbol{A}_{u,t} &= \boldsymbol{A}_{u,t-1} - \alpha_A\Delta\boldsymbol{A}_{u,t-1}.
    \end{aligned}
\end{equation}
Next, the trainable matrix $\Delta \boldsymbol{W}_{u,t}$ is updated by
\begin{equation}\label{Eq:UpdateMatrix}
\Delta \boldsymbol{W}_{u,t}=\boldsymbol{B}_{u,t}\boldsymbol{A}_{u,t},
\end{equation}
and the parameter matrix $\boldsymbol{W}_{u,t}$ is updated by
\begin{equation}\label{Eq:UpdateW}
\boldsymbol{W}_{u,t}=\boldsymbol{W}_{u,0}+\Delta \boldsymbol{W}_{u,t}.
\end{equation}

\begin{remark}
The low-rank matrices updating method presented in \eqref{equ:backpropa_1} -- \eqref{Eq:UpdateMatrix} is not equivalent to the sample average of the updates of $\Delta {\bf W}_u$, i.e.,
\begin{equation*}
        \begin{aligned}
        & \Delta \boldsymbol{W}_{u,t} = \left( \boldsymbol{B}_{u,t-1} - \frac{\alpha_B }{|\mathcal{D}|}\sum_{i \in \mathcal{D} } \Delta \boldsymbol{B}_{u,i,t-1} \right) \\
                      & \left( \boldsymbol{A}_{u,t-1} - \frac{\alpha_A }{|\mathcal{D}|}\sum_{i \in \mathcal{D} } \boldsymbol{A}_{u,i,t-1} \right)\\
                     & \neq \frac{1}{|\mathcal{D}|}\sum_{i \in \mathcal{D} } (\boldsymbol{B}_{u,t-1} - \alpha_{B,i}\Delta\boldsymbol{B}_{u,i,t-1})\\
                     & (\boldsymbol{A}_{u,t-1} - \alpha_{A,i}\Delta\boldsymbol{A}_{u,i,t-1} ).
        \end{aligned}
    \end{equation*}
However, this enjoys two benefits. On one hand, it simplifies the initialization of $\boldsymbol{B}$ and $\boldsymbol{A}$ by utilizing the ones in the last training round. On the other hand, this approach is well-suited for distributed implementations \cite{zhang2024towards, babakniya2023slora}. In each round, devices only need to transmit $\boldsymbol{B}_{u,t-1}$ and $\boldsymbol{A}_{u,t-1}$, which have a lower dimensionality compared to $\Delta\boldsymbol{W}_{u,t-1}$, thereby reducing the communication overhead.
\end{remark}

\subsection{LoRA Dropout}\label{Sect:LoraDropout}
The original dropout technique was proposed in \cite{srivastava2014dropout} to avoid overfitting during training. In standard dropout, each neuron in the network is dropped from the network with a certain probability. Given that dropout techniques have proven effective in controlling overfitting, we introduce a LoRA dropout framework to enhance generalization when adapting to downstream tasks. The loss function can be re-written as
% \begin{equation}
%     \mathcal{L}=\frac{1}{|\mathcal{D}|}\sum_{i \in \mathcal{D} }\ell \left(\Delta \boldsymbol{\theta}(\boldsymbol{m}_t); \boldsymbol{\theta}_0 , {x}_i \right),
% \end{equation}
$    \mathcal{L}=\frac{1}{|\mathcal{D}|}\sum_{i \in \mathcal{D} }\ell \left(\Delta \boldsymbol{\theta}(\boldsymbol{m}_t); \boldsymbol{\theta}_0 , {x}_i \right),$
where $\Delta {\boldsymbol{\theta}}(\boldsymbol{m}_t)$ is the LoRA parameters after the dropout,  $\boldsymbol{m}_t$ is the concatenation of masks of all LoRA modules, which varies across rounds, and $\boldsymbol{\theta}_0$ is the original parameters of the pre-trained model. 

As shown step \normalsize{\textcircled{\scriptsize{1}}}\normalsize in Fig. \ref{fig:system}, for a LoRA module, rows and columns are randomly dropped from both tunable low-rank parameter matrices with a probability of $\gamma_t$, which is called dropout rate. In other words, LoRA dropout strategy samples random neurons on the input and output sides of LoRA matrices with a probability $\gamma_t$ to mask them to zeros \cite{lin2024lora}. The low-rank matrices after dropout are given by
\begin{equation}
\left\{
\begin{aligned}
    &\hat{\boldsymbol{A}}_{u,t} = \boldsymbol{A}_{u,t} \cdot \mathrm{diag}(\boldsymbol{m}_{A,t}), \; \boldsymbol{m}_{A,t} \sim \text{Bern}(1-\gamma_t), \\
    &\hat{\boldsymbol{B}}_{u,t} = \left(\boldsymbol{B}_{u,t}^\top \cdot \mathrm{diag}(\boldsymbol{m}_{B,t})\right)^\top, \; \boldsymbol{m}_{B,t} \sim \text{Bern}(1-\gamma_t),
\end{aligned}
\right.
\label{equ:a&b}
\end{equation}
where $\boldsymbol{m}_{A,t} \in \mathbb{R}^{n_2}$, $\boldsymbol{m}_{B,t} \in \mathbb{R}^{n_1}$, $\text{Bern}(1-\gamma_t)$ means that all elements of a matrix are independently distributed as the identical Bernoulli distribution with parameter $(1-\gamma_t)$.
With dropout, the sizes of $\hat{\boldsymbol{A}}_{u,t}$ and $\hat{\boldsymbol{B}}_{u,t}$ are further reduced. In the $t$-th training round, for the $i$-th data sample, the training procedure is presented below.
\subsubsection{Forward Pass}
After dropout, the output of the $u$-th layer is given by $\hat{\boldsymbol{h}}_{u,i,t}=(\boldsymbol{W}_{u,0}+\hat{\boldsymbol{B}}_{u,t-1}\hat{\boldsymbol{A}}_{u,t-1})\boldsymbol{f}_{u-1,i,t}.$
% \begin{equation}\label{equ:lorafor_drop}
%         \begin{aligned}
%             \hat{\boldsymbol{h}}_{u,i,t} &=(\boldsymbol{W}_{u,0}+\hat{\boldsymbol{B}}_{u,t-1}\hat{\boldsymbol{A}}_{u,t-1})\boldsymbol{f}_{u-1,i,t}.
%     \end{aligned}
% \end{equation}

\subsubsection{Backward Pass and Low-rank Matrices Updating}
During backpropagation, the stochastic gradients of $\hat{\boldsymbol{B}}_{u,t}$ and $\hat{\boldsymbol{A}}_{u,t}$ are computed according to \eqref{equ:backpropa_1},  \eqref{equ:backpropa_2} and updated following \eqref{equ:sgd_1}, \eqref{equ:sgd_2}. Notably, the dropped parameters will not be updated, thereby reducing the computational overhead. After obtaining $\hat{\boldsymbol{B}}_{u,t}$ and $\hat{\boldsymbol{A}}_{u,t}$, the matrices $\Delta\hat{\boldsymbol{W}}_{u,t}$ and $\hat{\boldsymbol{W}}_{u,t}$ are calculated in the same manner as \eqref{Eq:UpdateMatrix} and \eqref{Eq:UpdateW}.

% \begin{equation}\label{equ:backpropa}
% \begin{aligned}
%     & \frac{\partial \mathcal{L}}{\partial \hat{\boldsymbol{B}}} = \frac{\partial \mathcal{L}}{\partial  \hat{\boldsymbol{h}}} \cdot \frac{\partial  \hat{\boldsymbol{h}}}{\partial \hat{\boldsymbol{B}}} = \frac{\partial \mathcal{L}}{\partial  \hat{\boldsymbol{h}}} \cdot ( \hat{\boldsymbol{A}} \boldsymbol{x})^{\top} = \frac{\partial \mathcal{L}}{\partial  {\boldsymbol{W}}} \cdot  \hat{\boldsymbol{A}} ^{\top}, \\
%     & \frac{\partial \mathcal{L}}{\partial \hat{\boldsymbol{A}}} = \frac{\partial \mathcal{L}}{\partial  \hat{\boldsymbol{h}}} \cdot \frac{\partial  \hat{\boldsymbol{h}}}{\partial \hat{\boldsymbol{A}}} = \hat{\boldsymbol{B}} ^{\top} \cdot \frac{\partial \mathcal{L}}{\partial  \hat{\boldsymbol{h}}} \cdot   \boldsymbol{x}^{\top} =  \hat{\boldsymbol{B}} ^{\top} \cdot \frac{\partial \mathcal{L}}{\partial  {\boldsymbol{W}}}.
% \end{aligned}
% \end{equation}
% Finally, SGD is utilized to update $\hat{\boldsymbol{B}}$ and $\hat{\boldsymbol{A}}$.

\subsection{Federated LoRA with Dropout}
\subsubsection{Procedure and Algorithm}
In this case, each device $k$ holds a local dataset, denoted by $\mathcal{D}_k = \{\boldsymbol{x}_i | i=1,2,\ldots,|\mathcal{D}_k| \}$, where $\boldsymbol{x}_i$ is the $i$-th data sample and the size of $\mathcal{D}_k$ is $|\mathcal{D}_k|$. Denote $\Delta {\boldsymbol{\theta}}(\boldsymbol{m}_{k,t})$ as the LoRA parameters after the dropout for the $k$-th device, $\boldsymbol{m}_{k,t}$ as the concatenation of all dropout masks of LoRA modules, which
varies across different training rounds and different devices, and $\boldsymbol{\theta}_0$ as the original parameters of the pre-trained model, which has been stored in each device in advance. The objective is to minimize the global loss function, given by
% \begin{equation}
%     \mathcal{L}=\sum_{k=1}^K \frac{\left|\mathcal{D}_k\right|}{|\mathcal{D}|} \ell_{k}\left( \Delta {\boldsymbol{\theta}}(\boldsymbol{m}_{k,t}); \boldsymbol{\theta}_0 ,\mathcal{D}_k  \right),
% \end{equation}
$    \mathcal{L}=\sum_{k=1}^K \tfrac{\left|\mathcal{D}_k\right|}{|\mathcal{D}|} \ell_{k}\left( \Delta {\boldsymbol{\theta}}(\boldsymbol{m}_{k,t}); \boldsymbol{\theta}_0 ,\mathcal{D}_k  \right),$
where $\mathcal{D} = \{\mathcal{D}_k\}$ is the global dataset, the local loss function of the $k$-th device can be written as
% \begin{equation}
%     \begin{aligned}
%            & \ell_{k}\left( \Delta {\boldsymbol{\theta}}(\boldsymbol{m}_{k,t}); \boldsymbol{\theta}_0, \mathcal{D}_k  \right) \\
%            &=\frac{1}{|\mathcal{D}_k|}\sum_{{x}_i\in\mathcal{D}_k}\ell_{k,i} \left(\Delta {\boldsymbol{\theta}}(\boldsymbol{m}_{k,t}); \boldsymbol{\theta}_0, \boldsymbol{x}_i \right).
%     \end{aligned}
% \end{equation} 
$            \ell_{k}\left( \Delta {\boldsymbol{\theta}}(\boldsymbol{m}_{k,t}); \boldsymbol{\theta}_0, \mathcal{D}_k  \right) =\tfrac{1}{|\mathcal{D}_k|}\sum_{{x}_i\in\mathcal{D}_k}\ell_{k,i} \left(\Delta {\boldsymbol{\theta}}(\boldsymbol{m}_{k,t}); \boldsymbol{\theta}_0, \boldsymbol{x}_i \right).$
Similar to \eqref{equ:a&b}, FedLoDrop can be written as
\begin{equation}
\begin{aligned}
    &\hat{\boldsymbol{A}}_{u,k,t} = \boldsymbol{A}_{u,t} \cdot \mathrm{diag}(\boldsymbol{m}_{k,A,t}), \boldsymbol{m}_{k,A,t} \sim \text{Bern}(1-\gamma_{k,t}), \\
    &\hat{\boldsymbol{B}}_{u,k,t} = \left(\boldsymbol{B}_{u,t}^\top \cdot \mathrm{diag}(\boldsymbol{m}_{k,B,t})\right)^\top, \boldsymbol{m}_{k,B,t} \sim \text{Bern}(1-\gamma_{k,t}),
\end{aligned}
\label{equ:ab_Fed:LoDrop} 
\end{equation}

The forward and backward pass on each device has minor changes to the aforementioned situations and thus are omitted. 

As illustrated in Fig. \ref{fig:system} and Algorithm \ref{Alg:wholeprocess}, the overall framework involves two primary components: local tuning operations on the client side, and dropout and aggregation operations on the server side, which work together to ensure efficient training. There are five steps to complete an arbitrary training round $t$, as described below.
\begin{itemize}
\item \emph{LoRA Sub-adapter Generation}: The server adopts the LoRA dropout technique introduced in Section \ref{Sect:LoraDropout} to generate a sub-adapter for each device. \textcolor{black}{Crucially, the dropout masks are statistically independent across rounds and generated prior to any client's local computation, thus preventing privacy  leaks.}

\item \emph{Adapter Downloading}: \textcolor{black}{Each device downloads its corresponding sub-adapter ${\hat{\boldsymbol{B}}}_{k,t}$ and ${\hat{\boldsymbol{A}}}_{k,t}$ from the server, where ${\hat{\boldsymbol{B}}}_{k,t} \triangleq 
 \{ {\hat{\boldsymbol{B}}}_{u,k,t} \}_{u=1}^{U^{\prime}} $, ${\hat{\boldsymbol{A}}}_{k,t} \triangleq 
 \{ {\hat{\boldsymbol{A}}}_{u,k,t} \}_{u=1}^{U^{\prime}} $.}

\item \emph{Client Local Tuning}: Each device first calculates the product of two low-rank adapters and updates them based on pre-trained model and its own datasets, as shown in  \eqref{equ:backpropa_1} -- \eqref{equ:sgd_2}. 

% The local gradient vector of device $k$ is denoted as $\tilde{\bf g}_k^{(t)}$.

\item \emph{Adapter Uploading}: \textcolor{black}{Each device uploads the updated gradients of local sub-adapters to the server by the allocated subcarriers. }

\item \emph{Server Aggregation}: \textcolor{black}{The server reconstructs the full-size adapter update via zero-padding: it inserts zeros into the dropped positions. This process is feasible because the server has knowledge of the sparsity mask. Then, all complete networks are aggregated for updating the global network, expressed as:}
% \begin{equation}\label{equ:aggre}
%     \begin{aligned}
%          \hat{\boldsymbol{B}}_t = \sum_{k=1}^K \frac{\left|\mathcal{D}_k\right|}{|\mathcal{D}|} \hat{\boldsymbol{B}}_{k,t}, \quad \hat{\boldsymbol{A}}_t = \sum_{k=1}^K \frac{\left|\mathcal{D}_k\right|}{|\mathcal{D}|} \hat{\boldsymbol{A}}_{k,t}.
%     \end{aligned}
% \end{equation}
\begin{equation}\label{equ:aggre_1}
    \begin{aligned}
          {\boldsymbol{B}}_t &= {\boldsymbol{B}}_{t-1} - \alpha_B\sum_{k=1}^K \frac{\left|\mathcal{D}_k\right|}{|\mathcal{D}|} \Delta\hat{\boldsymbol{B}}_{k,t-1}\\
         & = {\boldsymbol{B}}_{t-1} - \alpha_B \Delta\hat{\boldsymbol{B}}_{t-1}, 
    \end{aligned}
\end{equation}
\begin{equation}\label{equ:aggre_2}
    \begin{aligned}
          {\boldsymbol{A}}_t & = {\boldsymbol{A}}_{t-1} -\alpha_A\sum_{k=1}^K \frac{\left|\mathcal{D}_k\right|}{|\mathcal{D}|} \Delta\hat{\boldsymbol{A}}_{k,t-1}\\
         & = {\boldsymbol{A}}_{t-1} - \alpha_A \Delta\hat{\boldsymbol{A}}_{t-1}.
    \end{aligned}
\end{equation}
% $\hat{\boldsymbol{B}}_t = \sum_{k=1}^K \frac{\left|\mathcal{D}_k\right|}{|\mathcal{D}|} \hat{\boldsymbol{B}}_{k,t}$ and $\hat{\boldsymbol{A}}_t = \sum_{k=1}^K \frac{\left|\mathcal{D}_k\right|}{|\mathcal{D}|} \hat{\boldsymbol{A}}_{k,t}$. 
\end{itemize}
% The above steps iterate until model convergence or reaches the preset number of global iterations.

% As shown before, each device only needs to send two dropped and low-rank matrices in each round, further decreasing the communication overhead. 
\begin{remark}
    Combined with LoRA dropout, the inference caused by the cross-product of sub-adapters from different clients can be diminished. \textcolor{black}{Furthermore, FedLoDrop inherently provides a degree of privacy due to the sparsification of transmitted updates against potential privacy attacks such as gradient inversion or model extraction. That said, the dropout-based sparsification in FedLoDrop may offer incidental protection against gradient inversion attacks, as partially masked updates reduce the amount of information available to an adversary. Similarly, the local gradient tracking mechanism further decouples client-specific information from the global model, potentially mitigating model extraction risks.}
\end{remark}
%\vspace{-0.5cm}

\begin{algorithm}[t]
	\caption{The training process of FedLoDrop}\label{Alg:wholeprocess}
	\LinesNumbered
\begin{adjustwidth}{}{+0.4cm}
\begin{tcolorbox}[
    colback=gray!10,      % 背景颜色
    colframe=white,       % 边框颜色
    boxsep=0pt,           % 内容与边框之间的距离
    left=0pt,             % 左边内边距
    right=1pt,            % 右边内边距
    top=0pt,              % 上边内边距
    bottom=0pt,           % 下边内边距
    arc=0mm,              % 圆角弧度
    boxrule=0.1pt         % 边框宽度
]
        \textbf{Parameters:} Communication round $\emph{T}$; The pre-trained model $\boldsymbol{W}_0$; The local trainable and efficient parameters $\boldsymbol{B}_k$, $\boldsymbol{A}_k$ and the local dataset $\mathcal{D}_k$ of the $k$-th device.\\
\end{tcolorbox}
\end{adjustwidth}    
\begin{adjustwidth}{}{+0.4cm}
\begin{tcolorbox}[
    colback=gray!13,      % 背景颜色
    colframe=white,       % 边框颜色
    boxsep=0pt,           % 内容与边框之间的距离
    left=0pt,             % 左边内边距
    right=1pt,            % 右边内边距
    top=0pt,              % 上边内边距
    bottom=0pt,           % 下边内边距
    arc=0mm,              % 圆角弧度
    boxrule=0.1pt         % 边框宽度
]        
        \textbf{Before Training:} Store $\boldsymbol{W}_0$ on each device, and initialize $\boldsymbol{B}_0$, and $\boldsymbol{A}_0$ on the server.\\
\end{tcolorbox}
\end{adjustwidth}
\begin{adjustwidth}{}{+0.4cm}
\begin{tcolorbox}[
    colback=purple!13,      % 背景颜色
    colframe=white,       % 边框颜色
    boxsep=0pt,           % 内容与边框之间的距离
    left=0pt,             % 左边内边距
    right=1pt,            % 右边内边距
    top=0pt,              % 上边内边距
    bottom=0pt,           % 下边内边距
    arc=0mm,              % 圆角弧度
    boxrule=0.1pt         % 边框宽度
]        
        \textbf{Server executes:}\\
        \For{each communication round $t = 1$ to $\emph{T}$}{
        $\hat{\boldsymbol{B}}_{k,t-1}$, $\hat{\boldsymbol{A}}_{k,t-1}$ $\leftarrow$ (LoRA dropout)\\
        Send $\hat{\boldsymbol{B}}_{k,t-1}$, $\hat{\boldsymbol{A}}_{k,t-1}$ to each device\\
        \For{each device \textbf{in parallel}}{ClientLocalTuning ($k$, $\hat{\boldsymbol{B}}_{k,t-1}$, $\hat{\boldsymbol{A}}_{k,t-1}$)}
        Receive local updated parameters \\
        % $\hat{\boldsymbol{B}}_{k,t}$, $\hat{\boldsymbol{A}}_{k,t}$\\
        Global Aggregation by zero padding and \eqref{equ:aggre_1}, \eqref{equ:aggre_2}   
        }
\end{tcolorbox}   
\end{adjustwidth}
\begin{adjustwidth}{}{+0.4cm}
\begin{tcolorbox}[
    colback=blue!7,      % 背景颜色
    colframe=white,       % 边框颜色
    boxsep=0pt,           % 内容与边框之间的距离
    left=0pt,             % 左边内边距
    right=1pt,            % 右边内边距
    top=0pt,              % 上边内边距
    bottom=0pt,           % 下边内边距
    arc=0mm,              % 圆角弧度
    boxrule=0.1pt         % 边框宽度
] 
        \textbf{ClientLocalTuning} ($k$, $\hat{\boldsymbol{B}}_{k,t-1}$, $\hat{\boldsymbol{A}}_{k,t-1}$)\textbf{:}\\
        $\hat{\boldsymbol{B}}_{k,t}$, $\hat{\boldsymbol{A}}_{k,t}$ $\leftarrow$ \eqref{equ:backpropa_1} -- \eqref{equ:sgd_2} \\
        Send updated parameters
        % $\hat{\boldsymbol{B}}_{k,t}$, $\hat{\boldsymbol{A}}_{k,t}$ 
        to the server
\end{tcolorbox}
\end{adjustwidth}
\end{algorithm}

\subsubsection{Communication-computation-memory Overhead}
\begin{itemize}
    \item \emph{Communication overhead}: FedLoDrop provides significant competitive savings at both uploading and downloading links. In round $t$, each device downloads and uploads ${\hat{\boldsymbol{B}}}_{k,t}$ and ${\hat{\boldsymbol{A}}}_{k,t}$ to the server for aggregation. The original number of transmitted parameters is $M=(n_1 + n_2)r$. After applying dropout, with a dropout rate of $\gamma_k \in [0,1)$ for device $k$, the communication overhead becomes $\hat{M}_{k,t}=(1-\gamma_{k,t})(n_1 + n_2)r= (1-\gamma_{k,t})M.$
    % \begin{equation}
    %     \hat{M}_{k,t}=(1-\gamma_{k,t})(n_1 + n_2)r= (1-\gamma_{k,t})M_{\text{ori}}.
    % \end{equation}

    \item \emph{Computation and memory overhead}:
    FedLoDrop saves local computation workloads and memory cost by maintaining sparse adapters throughout the FL process. No part of FedLoDrop requires dense training, and the computation overhead is conducted sample-wise. As the computation and memory costs of adapters are small compared to the costs of the backbone and dropout only reduces computation overhead in backward propagation, the computation and memory benefits are omitted \cite{kuo2024federated}.
\end{itemize}

\section{Theoretical Analysis}
In the FT phase, we first formulate an optimization problem to minimize the loss function under the constraint of model sparsity. Subsequently, we characterize a generalization error bound within the framework of sparsity regularization, which demonstrates a fundamental trade-off between underfitting and overfitting in the context of LoRA dropout. Next, the upper bounding loss descent is formulated. \textcolor{black}{Finally, the convergence of FedLoDrop is performed. }

\subsection{LoRA Dropout FT Through Sparse Regularization}
Building upon the LoRA dropout mechanism presented on a single device by \cite{lin2024lora}, we extend this framework to accommodate FL environments. Assume $\boldsymbol{d}_{k,t} \in \{0,1 \}$ as a dropout instance applied to the production of LoRA matrices, i.e., $\boldsymbol{d}_{k,t} \sim \text{Bern}(1-(1-\gamma_{k,t})^2)$, namely, $\boldsymbol{d}_{k,t} \sim \text{Bern}(2\gamma_{k,t}-\gamma_{k,t}^2)$, where $\boldsymbol{d}_{k,t}$ equals to one when the corresponding entry is dropped to zero and set as zero otherwise. The FT to minimize the loss function can be formulated as:
\begin{equation}\label{problem_originial}
	\begin{array}{cl}
		\underset
		{ \{ \Delta {\boldsymbol{\theta}}_{k,t} \}	}
		{
			\min
		}
		&	
		 \sum_{k=1}^K \frac{\left|\mathcal{D}_{k}\right|}{|\mathcal{D}|} \ell_{k}\left(\Delta {\boldsymbol{\theta}}_{k,t}; \boldsymbol{\theta}_0, \mathcal{D}_k  \right),
		\\
		\text{subject to}
		& \mathbb{E}_{\boldsymbol{d}_{k,t} \sim \text{Bern}(2\gamma_{k,t}-\gamma_{k,t}^2)} \| d_{k,t} \odot \Delta {\boldsymbol{\theta}}_{k,t} \|_2^2 \leq c, \forall k,
	\end{array}
\end{equation}
where $c$ is a constant, and the condition denotes the sparsity of $\Delta \boldsymbol{\theta}_{k,t}$. 
% The original objective function is converted to 
% $$\min_{ \{ \Delta {\boldsymbol{\theta}}_{k,t} \}} \mathcal{L}^{\prime}= \sum_{k=1}^K \frac{\left|\mathcal{D}_k\right|}{|\mathcal{D}|} \min_{ \{\Delta {\boldsymbol{\theta}}_{k,t} \}} \ell_{k}\left(\Delta {\boldsymbol{\theta}}_{k,t}; \boldsymbol{\theta}_0, \mathcal{D}_k  \right). $$
% %\noindent[{\color{red}XXX: The usage of notation is bad. The derivation logic needs improvement. The physical meaning of (20) requires explanation.}]
% $c$ is eliminated as a constant, and 
The regularized optimization problem of the global one is

% \begin{equation}\label{equ:relation_loss}
%     \begin{aligned}
%             \mathcal{L}^{\prime}_{\{\nu_{k,t} \}} 
%             &= \sum_{k=1}^K \frac{\left|\mathcal{D}_k\right|}{|\mathcal{D}|} \left( \min_{ \{ \Delta {\boldsymbol{\theta}}_{k,t} \}}   \ell_{k}\left(\Delta {\boldsymbol{\theta}}_{k,t}; \boldsymbol{\theta}_0, \mathcal{D}_k \right) \right.\\
%             & \left. \quad + \sum_{k=1}^K\nu_{k,t} \mathbb{E}_{\boldsymbol{d}_{k,t} \sim \text{Bern}(2\gamma_{k,t}-\gamma_{k,t}^2)} \| d_{k,t} \odot \Delta {\boldsymbol{\theta}}_{k,t} \|_2^2 \right)\\
%             & = \sum_{k=1}^K \frac{\left|\mathcal{D}_k\right|}{|\mathcal{D}|}  \ell_{k,\lambda_t},\\
%     \end{aligned} 
% \end{equation}
\begin{equation}\label{equ:relation_loss}
    \begin{aligned}
            & \mathcal{L}^{\prime}_{\{\nu_{k,t} \}} 
            = \sum_{k=1}^K \frac{\left|\mathcal{D}_k\right|}{|\mathcal{D}|} \left( \min_{ \{ \Delta {\boldsymbol{\theta}}_{k,t} \}}   \ell_{k}\left(\Delta {\boldsymbol{\theta}}_{k,t}; \boldsymbol{\theta}_0, \mathcal{D}_k \right) \right.+\\
            & \left.  \sum_{k=1}^K\nu_{k,t} \mathbb{E}_{\boldsymbol{d}_{k,t} \sim \text{Bern}(2\gamma_{k,t}-\gamma_{k,t}^2)} \| d_{k,t} \odot \Delta {\boldsymbol{\theta}}_{k,t} \|_2^2 \right) = \sum_{k=1}^K \frac{\left|\mathcal{D}_k\right|}{|\mathcal{D}|}  \ell_{k,\lambda_t},\\
    \end{aligned} 
\end{equation}
where $\ell_{k,\lambda_t} = \min_{\Delta \boldsymbol{\theta}_{k,t}} \ell_{k}\left(\Delta {\boldsymbol{\theta}}_{k,t}; \boldsymbol{\theta}_0, \mathcal{D}_k  \right) + \lambda_t \mathbb{E}_{\boldsymbol{d}_{k,t} \sim \text{Bern}(2\gamma_{k,t}-\gamma_{k,t}^2)} \| d_{k,t} \odot \Delta \boldsymbol{\theta}_{k,t} \|_2^2 $ is the regularized optimization problem for each device.  $\nu_{k,t}$, $\lambda_t$ are arbitrary hyper-parameter. 

\subsection{Generalization Error Analysis}
The stability of the sparse-regularized algorithm is analyzed to assess the generalization error bound of LoRA dropout FT by optimizing $\ell_{k,\lambda_t}$. Stability is a well-explored subject in machine learning \cite{charles2018stability, bousquet2002stability} and we adopt a utilized analytical framework, pointwise hypothesis stability (PHS)\cite{fu2023effectiveness}. This approach examines the perturbation of the optimal model when a single training sample is removed. For each device, the entire training dataset is $\mathcal{D}_k$, and the dataset after removing a sample $\boldsymbol{x}_j$ as $\mathcal{D}_k^j=\mathcal{D}_k-\{\boldsymbol{x}_j \}$. It is assumed that $j \sim U(|\mathcal{D}_k|)$, which means the removal is sampled from a uniform distribution. $\boldsymbol{\theta}_{ \ell}(\mathcal{D}_k)$ is defined as the optimal model parameters with respect to (w.r.t.) loss function $\ell$ and dataset $\mathcal{D}_k$.

\begin{definition} \label{definition}
A learning algorithm parameterized by $\boldsymbol{\theta}$ w.r.t. a loss function $\ell$ has PHS $\beta$, if:
\begin{equation}\label{Eq:PHS}
\mathbb{E}_{\mathcal{D}_k, j \sim U(n)}\left|\ell \left( \boldsymbol{\theta}_{ \ell}(\mathcal{D}_k^j), \boldsymbol{x}_j \right)- \ell \left( \boldsymbol{\theta}_{ \ell}(\mathcal{D}_k), \boldsymbol{x}_j \right) \right|  \leq \beta.
\end{equation}
\end{definition}
In \eqref{Eq:PHS}, $\ell \left( \boldsymbol{\theta},\boldsymbol{x}_j \right)$ represents the loss for sample $\boldsymbol{x}_j$ given model parameters $\boldsymbol{\theta}$. According to \cite{lin2024lora}, if the following requirements are met: $\ell$ is $\eta$-Lipschitz, $\boldsymbol{\theta}_{ \ell_{k,\lambda}}(\mathcal{D}_k)$ is close to $\boldsymbol{\theta}_{ \ell_{k,\lambda_t}}(\mathcal{D}_k^j)$, Hessian matrix $\nabla^2{\ell_{k,\lambda_t}}(\boldsymbol{\theta}_{ \ell_{k,\lambda_t}}(\mathcal{D}_k))$ at $\boldsymbol{\theta}_{ \ell_{k,\lambda_t}}(\mathcal{D}_k)$ is positive-semidefinite with a singular value decomposition $U_{k,t} \text{diag}(\Lambda_{k,t})U_{k,t}^{-1}$, and $\Lambda_{k,t,\text{min}}=\text{min}\{\Lambda_{k,t,1}, \cdots, \Lambda_{k,t,m} \}$, then the LoRA dropout algorithm optimizing $ \ell_{k,\lambda_t}$ on $|\mathcal{D}_k|$ has an upper bound of PHS of \footnote{\textcolor{black}{We only require the regularization coefficient to be sufficiently large (i.e., $\lambda_t \geq -\frac{1}{2}\Lambda_{k,t,\min}$), without invoking stronger assumptions about the objective function.} Since uniform stability is stricter and argument stability focuses on parameters, we adopt PHS. Despite the non-convexity of LLM training, which may introduce theoretical gaps, the resulting bound provides meaningful qualitative insights into the client numbers and dropout rates. Tighter bounds under weaker assumptions remain an exciting direction for future work.}
    \begin{equation}
        \begin{aligned}
                &     \mathbb{E}_{\mathcal{D}_k, j \sim U(n)}\left|\ell_{k, \lambda_t} \left( \boldsymbol{\theta}_{ \ell_{k,\lambda_t}}(\mathcal{D}_k^j),\boldsymbol{x}_j \right)- \ell_{k,\lambda_t} \left( \boldsymbol{\theta}_{ \ell_{k,\lambda_t}}(\mathcal{D}_k),\boldsymbol{x}_j \right) \right| \\
                & \leq \frac{2\eta^2}{ (\Lambda_{k,t,\text{min}} +2\lambda_t (2\gamma_{k,t}-\gamma_{k,t}^2) ) \left|\mathcal{D}_k\right|}.
        \end{aligned}
\end{equation}
\begin{theorem} \label{theorem:phs}

    If the conditions in Definition \ref{definition} are met (i.e., \eqref{Eq:PHS}), PHS of LoRA dropout algorithm on each device can be upper-bounded as    
    % then the LoRA dropout algorithm optimizing $ \ell_{k,\lambda_t}$ on $|\mathcal{D}_k|$ has an upper bound of pointwise hypothesis stability of 
    \begin{equation}
        \begin{aligned}
                &     \mathbb{E}_{\mathcal{D}_k, j \sim U(n)}\left|\ell_{k, \lambda_t} \left( \boldsymbol{\theta}_{ \ell_{k,\lambda_t}}(\mathcal{D}_k^j),\boldsymbol{x}_j \right)- \ell_{k,\lambda_t} \left( \boldsymbol{\theta}_{ \ell_{k,\lambda_t}}(\mathcal{D}_k),\boldsymbol{x}_j \right) \right| \\
                & \leq \frac{2\eta^2}{ (\Lambda_{k,t,\text{min}} +2\lambda_t (2\gamma_{k,t}-\gamma_{k,t}^2) ) \left|\mathcal{D}_k\right|}.
        \end{aligned}
\end{equation}
\end{theorem}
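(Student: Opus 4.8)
The plan is to collapse the random dropout regularizer into a deterministic $\ell_2$ penalty and then invoke the standard stability argument for regularized empirical risk minimization. First I would evaluate the expectation appearing in $\ell_{k,\lambda_t}$: since the coordinates of $\boldsymbol{d}_{k,t}$ are independent $\mathrm{Bern}(2\gamma_{k,t}-\gamma_{k,t}^2)$ variables, one gets
\begin{equation*}
\mathbb{E}_{\boldsymbol{d}_{k,t}}\big\| \boldsymbol{d}_{k,t}\odot\Delta\boldsymbol{\theta}_{k,t}\big\|_2^2 = (2\gamma_{k,t}-\gamma_{k,t}^2)\,\|\Delta\boldsymbol{\theta}_{k,t}\|_2^2 .
\end{equation*}
Hence $\ell_{k,\lambda_t}$ is exactly an $\ell_2$-regularized empirical risk with effective coefficient $\lambda_t(2\gamma_{k,t}-\gamma_{k,t}^2)$. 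This step averages out the mask randomness and lets me treat the regularized objective as a deterministic strongly-convex-type problem near its optimum.

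Next I would identify the curvature that governs stability. Writing $\boldsymbol{\theta}^{\star}\triangleq\boldsymbol{\theta}_{\ell_{k,\lambda_t}}(\mathcal{D}_k)$ for the minimizer on the full dataset, its Hessian splits as the data-loss Hessian plus the penalty Hessian $2\lambda_t(2\gamma_{k,t}-\gamma_{k,t}^2)\boldsymbol{I}$, so its smallest eigenvalue is $\mu\triangleq\Lambda_{k,t,\min}+2\lambda_t(2\gamma_{k,t}-\gamma_{k,t}^2)$. The footnote condition $\lambda_t\geq-\tfrac12\Lambda_{k,t,\min}$ guarantees $\mu\geq0$; more importantly, for $\gamma_{k,t}\in(0,1)$ and $\lambda_t>0$ the penalty strictly increases $\mu$, which is precisely the mechanism by which a larger dropout rate tightens (lowers) the stability bound, matching the paper's qualitative story.

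The core step is bounding the leave-one-out parameter perturbation $\|\boldsymbol{\theta}_{\ell_{k,\lambda_t}}(\mathcal{D}_k^j)-\boldsymbol{\theta}^{\star}\|_2$. Both minimizers satisfy first-order stationarity for objectives that differ only by the single removed term $\tfrac{1}{|\mathcal{D}_k|}\ell(\cdot,\boldsymbol{x}_j)$. I would Taylor-expand the gradient of the full regularized objective about $\boldsymbol{\theta}^{\star}$, using the assumed closeness of the two minimizers to discard higher-order terms, invert the Hessian via its eigenvalue lower bound $\mu$, and bound $\|\nabla\ell(\cdot,\boldsymbol{x}_j)\|_2\leq\eta$ by $\eta$-Lipschitzness; this yields $\|\boldsymbol{\theta}_{\ell_{k,\lambda_t}}(\mathcal{D}_k^j)-\boldsymbol{\theta}^{\star}\|_2\lesssim \eta/(\mu|\mathcal{D}_k|)$. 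A second application of Lipschitzness, namely $|\ell_{k,\lambda_t}(\boldsymbol{\theta}_{\ell_{k,\lambda_t}}(\mathcal{D}_k^j),\boldsymbol{x}_j)-\ell_{k,\lambda_t}(\boldsymbol{\theta}^{\star},\boldsymbol{x}_j)|\leq\eta\|\boldsymbol{\theta}_{\ell_{k,\lambda_t}}(\mathcal{D}_k^j)-\boldsymbol{\theta}^{\star}\|_2$, converts the parameter bound into the stated loss bound, and taking the expectation over $\mathcal{D}_k$ and $j$ preserves it since the right-hand side is uniform in $(\mathcal{D}_k,j)$. An equivalent route that avoids the Taylor remainder is to apply the $\mu$-strong-convexity quadratic lower bound to each objective at the \emph{other's} minimizer and add the two inequalities, which directly isolates $\tfrac{1}{|\mathcal{D}_k|}\big[\ell(\boldsymbol{\theta}_{\ell_{k,\lambda_t}}(\mathcal{D}_k^j),\boldsymbol{x}_j)-\ell(\boldsymbol{\theta}^{\star},\boldsymbol{x}_j)\big]\geq\mu\|\boldsymbol{\theta}_{\ell_{k,\lambda_t}}(\mathcal{D}_k^j)-\boldsymbol{\theta}^{\star}\|_2^2$ before a single Lipschitz step.

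The hard part is exactly this parameter-perturbation step. Rigorously it requires justifying the Taylor truncation, which leans on the hypothesis that $\boldsymbol{\theta}_{\ell_{k,\lambda_t}}(\mathcal{D}_k^j)$ is close to $\boldsymbol{\theta}^{\star}$ and on the non-convex LLM loss being locally well behaved, since the Hessian is only assumed positive-semidefinite at the optimum rather than globally. The explicit factor $2$ in the numerator is not the conceptual crux; it is the constant produced by the standard pointwise-hypothesis-stability lemma for strongly convex regularized empirical risk minimization (as in the cited references), traceable to the normalization mismatch between the $|\mathcal{D}_k|$- and $(|\mathcal{D}_k|-1)$-averaged leave-one-out objectives and bounded through $|\mathcal{D}_k|/(|\mathcal{D}_k|-1)\leq 2$.
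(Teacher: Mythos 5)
Your proposal matches the paper's proof: it likewise averages the Bernoulli mask to turn the dropout penalty into an $\ell_2$ regularizer with effective coefficient $\lambda_t(2\gamma_{k,t}-\gamma_{k,t}^2)$, lower-bounds the curvature at the optimum by $\Lambda_{k,t,\min}+2\lambda_t(2\gamma_{k,t}-\gamma_{k,t}^2)$, and combines a leave-one-out quadratic-growth/optimality argument with two applications of $\eta$-Lipschitzness to obtain the stated bound. The only discrepancy is cosmetic: in the paper the factor $2$ arises from the $\tfrac{1}{2}$ in the second-order Taylor (quadratic growth) inequality, not from an $|\mathcal{D}_k|/(|\mathcal{D}_k|-1)$ normalization mismatch, which the paper's exact leave-one-out decomposition eliminates.
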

\begin{proof}
	Please refer to Appendix \ref{proof_of_theorem}.
\end{proof}

\begin{theorem}\label{theorem:FedLoDrop_PHS} Based on the relation in \eqref{equ:relation_loss}, PHS of LoRA dropout algorithm on the server can be upper-bounded as  
    \begin{equation}
        \begin{aligned}
                &     \mathbb{E}_{\mathcal{D}, {j} \sim U(n)}
                \left| \mathcal{L}^{\prime}_{ \{\nu_{k,t}\}} \left( \boldsymbol{\theta}_{ \mathcal{L}^{\prime}_{ \{\nu_{k,t}\}}}(\mathcal{D}^{\{j\}}), \boldsymbol{x}_{\{j\}} \right) \right.\\
                & \left. \quad\quad\quad\quad\quad\quad - \mathcal{L}^{\prime}_{ \{\nu_{k,t}\}} \left( \boldsymbol{\theta}_{ \mathcal{L}^{\prime}_{ \{\nu_{k,t}\}}}(\mathcal{D}), \boldsymbol{x}_{\{j\}} \right) \right| \\
                & \leq \sum_{k=1}^K \frac{2\eta^2}{ (\Lambda_{k,t,\text{min}} +2\lambda_t (2\gamma_{k,t}-\gamma_{k,t}^2) ) |\mathcal{D}|}.
        \end{aligned}
\end{equation}
\end{theorem}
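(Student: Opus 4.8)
The plan is to leverage the additive decomposition of the global regularized loss in \eqref{equ:relation_loss}, reducing the server-level PHS to a weighted combination of the per-device PHS bounds already established in Theorem \ref{theorem:phs}, and then to exploit a cancellation between the aggregation weights $|\mathcal{D}_k|/|\mathcal{D}|$ and the per-device sample counts $|\mathcal{D}_k|$.

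First I would substitute $\mathcal{L}^{\prime}_{\{\nu_{k,t}\}} = \sum_{k=1}^K \frac{|\mathcal{D}_k|}{|\mathcal{D}|}\ell_{k,\lambda_t}$ from \eqref{equ:relation_loss} into the left-hand side of the claimed bound. Writing the difference of the two global losses as a single weighted sum over $k$ of the corresponding per-device loss differences, I would then move the absolute value inside the summation via the triangle inequality and pull the constant weights $|\mathcal{D}_k|/|\mathcal{D}|$ out of the expectation. This produces the intermediate bound $\sum_{k=1}^K \frac{|\mathcal{D}_k|}{|\mathcal{D}|}\, \mathbb{E}_{\mathcal{D},j}\big|\ell_{k,\lambda_t}(\boldsymbol{\theta}_{\mathcal{L}^{\prime}}(\mathcal{D}^{\{j\}}),\boldsymbol{x}_{\{j\}}) - \ell_{k,\lambda_t}(\boldsymbol{\theta}_{\mathcal{L}^{\prime}}(\mathcal{D}),\boldsymbol{x}_{\{j\}})\big|$, in which every summand has the same structural form as the quantity controlled in Theorem \ref{theorem:phs}.

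Next I would bound each summand by the per-device PHS result of Theorem \ref{theorem:phs}. The key point is that, under the federated local-tuning structure together with the separable form of \eqref{equ:relation_loss}, the action of the server-side minimizer on the $k$-th component coincides with the device-local minimizer $\boldsymbol{\theta}_{\ell_{k,\lambda_t}}(\mathcal{D}_k)$, and a single-sample removal from the global set $\mathcal{D}$ reduces, for that component, to a removal from $\mathcal{D}_k$. Invoking the tower property to align the global uniform draw $j \sim U(|\mathcal{D}|)$ with the per-device uniform draw used in Theorem \ref{theorem:phs}, each summand is at most $\frac{2\eta^2}{(\Lambda_{k,t,\text{min}} + 2\lambda_t(2\gamma_{k,t}-\gamma_{k,t}^2))|\mathcal{D}_k|}$. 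Multiplying by the weight $|\mathcal{D}_k|/|\mathcal{D}|$, the factor $|\mathcal{D}_k|$ cancels and each term collapses to $\frac{2\eta^2}{(\Lambda_{k,t,\text{min}}+2\lambda_t(2\gamma_{k,t}-\gamma_{k,t}^2))|\mathcal{D}|}$; summing over $k$ gives exactly the stated bound.

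The main obstacle is the per-term application of Theorem \ref{theorem:phs}, namely certifying that evaluating the aggregated optimal parameters against each device's regularized loss is controlled by that device's own pointwise stability. This requires making precise the identification between the global minimizer restricted to the $k$-th component and the device-local minimizer, and showing that the global single-sample removal is equivalent, component-wise, to a local removal with the correct denominator $|\mathcal{D}_k|$; I would also need to verify that the $\eta$-Lipschitz property and the positive-semidefiniteness of the component Hessian underlying Theorem \ref{theorem:phs} continue to hold at the aggregated solution. All remaining steps (triangle inequality, linearity of expectation, and the weight cancellation) are routine.
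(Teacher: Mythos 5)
Your proposal is correct and follows essentially the same route the paper intends: the paper gives no explicit derivation for this theorem beyond citing the decomposition in \eqref{equ:relation_loss}, and the implicit argument is exactly your chain of triangle inequality over the weighted sum, per-device application of Theorem~\ref{theorem:phs}, and cancellation of $|\mathcal{D}_k|$ against the aggregation weight $|\mathcal{D}_k|/|\mathcal{D}|$. The obstacle you flag (identifying the $k$-th component of the global minimizer with the device-local minimizer, and aligning the global uniform removal with the per-device one) is real but is resolved by the separable structure of \eqref{equ:relation_loss}, and the paper itself passes over it silently.
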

It follows from Theorem \ref{theorem:FedLoDrop_PHS} that increasing dropout rates (enhancing sparsity) leads to a reduction in the upper bound, indicating that sparser models exhibit greater stability. Consequently, with established stability bounds, the generalization error for the sparse fine-tuned model can be determined.

\begin{lemma}
    For any learning algorithm $\mathbb{M}$ having parameter $W$ and bounded loss function $\ell$ satisfying $0 \leq |\ell(x)-\ell(x^{\prime}) | \leq C, \forall x, x^{\prime}$. If $\mathbb{M}$ has a PHS $\beta$, with probability $1-\delta$, we have:
    \begin{equation}
        R(\mathbb{M},\mathcal{D}) \leq \hat{R}(\mathbb{M},\mathcal{D}) + \sqrt{\frac{C^2+12C|\mathcal{D}|\beta}{2|\mathcal{D}|\delta}},
    \end{equation}
    where $R(\mathbb{M},\mathcal{D})=\sum_k\frac{|\mathcal{D}_k|}{|\mathcal{D}|} \mathbb{E}[\ell_k(\boldsymbol{\theta}, |\mathcal{D}_k|)]$ and $\hat{R}(\mathbb{M},\mathcal{D})=\sum_k\frac{|\mathcal{D}_k|}{|\mathcal{D}|}\ell_k(\boldsymbol{\theta}, |\mathcal{D}_k|)$ denote the generalization risk and empirical risk of algorithm $\mathbb{M}$ running on dataset $\mathcal{D}$, respectively.
\end{lemma}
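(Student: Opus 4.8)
The plan is to prove this stability-to-generalization bound by the classical second-moment method of Bousquet and Elisseeff, specialized to PHS. Write $n = |\mathcal{D}|$ and introduce the generalization gap $g \triangleq R(\mathbb{M},\mathcal{D}) - \hat{R}(\mathbb{M},\mathcal{D})$, viewed as a random variable over the draw of $\mathcal{D}$. Since $g$ may be signed but the claim is one-sided, I would reduce everything to controlling the single quantity $\mathbb{E}_{\mathcal{D}}[g^2]$, after which a Markov-type inequality on the nonnegative variable $g^2$ delivers the $1/\delta$ dependence appearing in the bound.

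First I would establish the renaming identities that connect the population and empirical risks to PHS. By the i.i.d.\ assumption, swapping a fresh evaluation point for a removed training sample gives $\mathbb{E}_{\mathcal{D}}[R] = \mathbb{E}_{\mathcal{D}, j}[\ell(\boldsymbol{\theta}_\ell(\mathcal{D}^{j}), \boldsymbol{x}_j)]$, while $\mathbb{E}_{\mathcal{D}}[\hat{R}] = \mathbb{E}_{\mathcal{D}, j}[\ell(\boldsymbol{\theta}_\ell(\mathcal{D}), \boldsymbol{x}_j)]$, where $\mathcal{D}^{j}$ denotes $\mathcal{D}$ with its $j$-th sample deleted. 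Subtracting and invoking Definition \ref{definition} immediately yields $\mathbb{E}_{\mathcal{D}}[g] \le \beta$; this same leave-one-out symmetrization is the engine for the second moment.

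The crux is bounding $\mathbb{E}_{\mathcal{D}}[g^2]$. Expanding the square into $\mathbb{E}[R^2]$, $\mathbb{E}[R\hat{R}]$, and $\mathbb{E}[\hat{R}^2]$, the last term $\hat{R}^2 = n^{-2}\sum_{i,i'} \ell_i \ell_{i'}$ splits into $n$ diagonal terms and $n(n-1)$ off-diagonal cross terms. The diagonal contributes a variance-like piece of order $C^2/(2n)$ through the boundedness $|\ell - \ell'| \le C$, while each off-diagonal pair ($i \ne i'$) must be handled by perturbing one coordinate at a time, comparing $\boldsymbol{\theta}_\ell(\mathcal{D})$ with $\boldsymbol{\theta}_\ell(\mathcal{D}^{i})$, and repeatedly applying the PHS bound $\beta$ alongside $C$. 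Careful bookkeeping of these cross terms accumulates the $6C\beta$ contribution, giving $\mathbb{E}_{\mathcal{D}}[g^2] \le \frac{C^2}{2n} + 6C\beta = \frac{C^2 + 12 C n \beta}{2n}$.

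Finally I would close the argument with Markov's inequality applied to $g^2$: for any $t>0$, $\Pr(g \ge t) \le \Pr(g^2 \ge t^2) \le \mathbb{E}_{\mathcal{D}}[g^2]/t^2$. Setting the right-hand side to $\delta$ and solving gives $t = \sqrt{(C^2 + 12 C n \beta)/(2 n \delta)}$, so that $R \le \hat{R} + t$ holds with probability at least $1-\delta$, which is exactly the claim. The main obstacle is the second-moment estimate: because PHS controls only the \emph{expected} perturbation from removing a sample rather than a worst-case bounded difference, a direct McDiarmid concentration is unavailable, and the off-diagonal cross terms must instead be dismantled sample-by-sample, which is where the stability parameter $\beta$ enters the estimate.
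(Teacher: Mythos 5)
Your proposal is correct and follows the canonical Bousquet--Elisseeff second-moment argument (bound $\mathbb{E}_{\mathcal{D}}[(R-\hat{R})^2]$ by $\tfrac{C^2}{2|\mathcal{D}|}+6C\beta$ via leave-one-out symmetrization of the cross terms, then apply a Chebyshev/Markov inequality to get the $1/\sqrt{\delta}$ tail), which is precisely the result the paper imports by citation without reproducing a proof. The only cosmetic gap is that you treat the pooled dataset directly rather than the device-weighted sums in the paper's statement, but since the weights $|\mathcal{D}_k|/|\mathcal{D}|$ make the global empirical risk coincide with the average over the pooled samples, this changes nothing.
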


\begin{theorem}\label{theorem:FedLoD_errbound}
In training round $t$, given a LoRA dropout rate $\gamma_{k,t}$ and sparsity regularization $\lambda_t$, if the conditions in Definition \ref{definition} are met (i.e., \eqref{Eq:PHS}), then for some constant $C$, with probability $1-\delta$,

     \begin{equation} \label{equ:errorbound}
        \begin{aligned}
                    R(M,\mathcal{D}) & \leq \hat{R}(M,\mathcal{D}) \\
                    & \quad+ \sqrt{\frac{C^2+ \sum_{k=1}^K  \frac{24C\eta^2}{\Lambda_{k,t,\text{min}} +2\lambda_t (2\gamma_{k,t}-\gamma_{k,t}^2) }}{2|\mathcal{D}|\delta}}.
        \end{aligned}
    \end{equation}   
\end{theorem}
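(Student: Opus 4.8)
The plan is to treat the final theorem as a direct corollary obtained by substituting the server-side pointwise hypothesis stability bound of Theorem~\ref{theorem:FedLoDrop_PHS} into the generic stability-to-generalization Lemma that immediately precedes it. The global FedLoDrop procedure, viewed as a single learning algorithm $M$ operating on the aggregated dataset $\mathcal{D}$ and minimizing the regularized objective $\mathcal{L}^{\prime}_{\{\nu_{k,t}\}}$, is precisely the algorithm whose PHS constant Theorem~\ref{theorem:FedLoDrop_PHS} controls. Hence I would first read off the PHS parameter
\begin{equation}
\beta = \sum_{k=1}^K \frac{2\eta^2}{\left(\Lambda_{k,t,\text{min}} + 2\lambda_t(2\gamma_{k,t}-\gamma_{k,t}^2)\right)|\mathcal{D}|},
\end{equation}
and then invoke the preceding Lemma with this $\beta$, together with the assumed bounded-loss-difference constant $C$ (guaranteed by the hypothesis ``for some constant $C$'') and the conditions of Definition~\ref{definition}, which are assumed to hold.

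Second, I would carry out the only piece of algebra, namely the simplification of the term $12C|\mathcal{D}|\beta$ appearing inside the square root of the Lemma's bound. Substituting the expression above, the factor $|\mathcal{D}|$ in the outer numerator cancels the $|\mathcal{D}|$ in each summand's denominator, yielding
\begin{equation}
12C|\mathcal{D}|\beta = \sum_{k=1}^K \frac{24C\eta^2}{\Lambda_{k,t,\text{min}} + 2\lambda_t(2\gamma_{k,t}-\gamma_{k,t}^2)}.
\end{equation}
Inserting this into the Lemma's conclusion $\hat{R}(M,\mathcal{D}) + \sqrt{(C^2 + 12C|\mathcal{D}|\beta)/(2|\mathcal{D}|\delta)}$ reproduces exactly the claimed bound \eqref{equ:errorbound}, which then holds with probability $1-\delta$.

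The main obstacle is not analytical but a matter of bookkeeping: verifying that the aggregated quantity in Theorem~\ref{theorem:FedLoDrop_PHS} legitimately serves as the PHS constant of the composite global algorithm in the precise sense the Lemma demands. Concretely, I must confirm that the expectation in Theorem~\ref{theorem:FedLoDrop_PHS} is taken over the removal of a single sample from the global dataset $\mathcal{D}$, consistent with the hypothesis $j\sim U(n)$ in the Lemma, and that the per-device weighting $|\mathcal{D}_k|/|\mathcal{D}|$ used when assembling $\mathcal{L}^{\prime}_{\{\nu_{k,t}\}}$ matches the weighting built into the definitions of $R(M,\mathcal{D})$ and $\hat{R}(M,\mathcal{D})$. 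Once this consistency is checked, the substitution is immediate and no further estimation is required; all the genuine work has already been discharged in Theorem~\ref{theorem:FedLoDrop_PHS} and the stability lemma.
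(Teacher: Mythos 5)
Your proposal is correct and follows exactly the route the paper intends: the paper presents Theorem~\ref{theorem:FedLoD_errbound} without a separate written proof precisely because it is the immediate corollary of plugging the aggregated PHS constant from Theorem~\ref{theorem:FedLoDrop_PHS} into the stability-to-generalization lemma, with the $|\mathcal{D}|$ cancellation in $12C|\mathcal{D}|\beta$ yielding the $24C\eta^2$ terms. Your bookkeeping caveat about the sample-removal semantics and the $|\mathcal{D}_k|/|\mathcal{D}|$ weighting is a fair observation of a point the paper itself leaves implicit, but it does not change the argument.
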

Theorem \ref{theorem:FedLoD_errbound} elucidates the connection between generalization error and dropout rate. It is shown that increasing the dropout rate decreases the gap between empirical and generalization errors but concurrently increases empirical error. Therefore, an appropriate dropout rate would help balance a trade-off between overfitting and underfitting. \textcolor{black}{By enforcing sparsity more stringently, a larger $\lambda_t$ leads to more compact parameter updates and a lower theoretical generalization error, thereby reducing overfitting risk and improving expected out-of-sample performance.} While the theorem does not explicitly characterize the relationship between generalization error and LoRA rank, it suggests that decreasing the rank-enhancing parameter sparsity-typically correlates with reduced generalization error. This implies that a model with a lower effective rank tends to have better generalization performance \cite{fu2023effectiveness}.

\subsection{Upper Bounding Loss Descent}
However, since $\hat{R}(M,\mathcal{D})$ lacks an analytical expression, the problem is not directly solvable. To address this, upper bounds for $\hat{R}(M,\mathcal{D})$ is derived. Without loss of generality, the following assumptions are made \cite{yang2022transformers,10763424}. We assume a constant batch size and learning rate in the following analysis. 

\begin{assumption}[Lipschitz Continuous Gradient]\label{assumption1}
	$\nabla \mathcal{L} (\Delta \boldsymbol{\theta})$ is Lipschitz continuous with $ \eta > 0$ such that, $		\|  \nabla \mathcal{L}(\Delta \boldsymbol{\theta}_2) - \nabla \mathcal{L}(\Delta \boldsymbol{\theta}_1) \|_F\leq \eta \|\Delta \boldsymbol{\theta}_2-\Delta \boldsymbol{\theta}_1\|_F,$
	% \begin{equation}\label{eqn:smooth}
	% 	\|  \nabla \mathcal{L}(\Delta \boldsymbol{\theta}_2) - \nabla \mathcal{L}(\Delta \boldsymbol{\theta}_1) \|_F\leq \eta \|\Delta \boldsymbol{\theta}_2-\Delta \boldsymbol{\theta}_1\|_F,
	% \end{equation} 
    where $\| \cdot \|_F$ calculates the Frobenius-norm (F-norm).
\end{assumption}

\begin{assumption}[Bounded Gradient Matrix] \label{assumption2} 
The F-norm of gradient matrix is upper bounded, i.e., $\mathbb{E}\left[ \| \boldsymbol{H}(\cdot) \|_F^2 \right] \leq \mathcal{H}^2$.
% \begin{equation}
%     \mathbb{E}\left[ \| \boldsymbol{H}(\cdot) \|_F^2 \right] \leq \mathcal{H}^2.
% \end{equation}
\end{assumption}

% \begin{assumption}[Bounded gradient]\label{assumption3}
% The F-norm of gradients is upper bounded, i.e.,
% \begin{equation}
%     \begin{aligned}
%                  \mathbb{E}\left[ \| \Delta\boldsymbol{A}_{k,t}\|_F^2 \right], \mathbb{E}\left[ \| \Delta\boldsymbol{B}_{k,t}\|_F^2 \right] \leq \mathcal{G}^2, \forall k,t.
%     \end{aligned}
% \end{equation}   
% \end{assumption}

\begin{assumption}[Small Dropout Rate]
\label{assumption3}
The dropout rate of each device is small so that the higher-order terms can be ignored.
% \begin{equation}
%     \| \hat{\boldsymbol{A}}_{k,t}-\boldsymbol{A} \|^2, \| \hat{\boldsymbol{B}}_{k,t}-\boldsymbol{B} \|^2 \rightarrow 0, \quad \forall \boldsymbol{w}.
% \end{equation}
\end{assumption}

\begin{assumption}[Bounded weight]\label{assumption4}
The F-norm of the parameter vector is upper bounded, i.e., $\mathbb{E}\left[ \| {\boldsymbol{A}}_{t} \|^2_F \right],  \mathbb{E}\left[ \| {\boldsymbol{B}}_{t} \|^2_F \right]  \leq \mathcal{G}^2, \forall k,t.$
% \begin{equation}
%          \mathbb{E}\left[ \| {\boldsymbol{A}}_{t} \|^2_F \right],  \mathbb{E}\left[ \| {\boldsymbol{B}}_{t} \|^2_F \right]  \leq \mathcal{G}^2, \forall k,t.
% \end{equation}   
\end{assumption}

\begin{assumption}[Polyak-Łojaciewicz inequality]\label{assumption5}
Optimal loss function value of $\mathcal{L}$ exists, denoted $\mathcal{L}^*$, a constant $\mu \geq 0$ exists, and $\rho =\mathcal{L}(\Delta\boldsymbol{\theta}) - \mathcal{L}^*$ that satisfies $\|\nabla\mathcal{L}(\Delta\boldsymbol{\theta}) \|_F^2  \geq 2\mu\rho.$
% \begin{equation}
%          \|\nabla\mathcal{L}(\Delta\boldsymbol{\theta}) \|_F^2  \geq 2\mu\rho.
% \end{equation}   
\end{assumption}
Specifically, it has been proven in \cite{yang2022transformers,10763424} that the transformer-based LLM possesses a Lipschitz continuous gradient, satisfying Assumption \ref{assumption1}. The PL Assumption \ref{assumption5} is weaker than strong convexity, and usually considered in the analysis of non-convex cases.

% However, the property of Lipschitz smoothness does not necessarily hold when applying LoRA adaptation. Specifically, even if the original function $\mathcal{L}(W)$ is Lipschitz smooth, meaning that the gradient of $\mathcal{L}(W)$ satisfies the Lipschitz continuity condition (as stated in Assumption \ref{assumption1}), this smoothness property is generally lost when the function is expressed in the adapted form $\mathcal{L}(W_0+BA)$.

\begin{lemma}\label{lemma:loragradie}
Similar to \cite{10763424}, the update of $\Delta\boldsymbol{W}_{u,t}$ can be expressed as
% \begin{equation}
%     \begin{aligned}
%         &\Delta\boldsymbol{W}_{u,t}
%         =\boldsymbol{B}_{u,t}\boldsymbol{A}_{u,t}\\
%         & =( {\boldsymbol{B}}_{u,t-1} - \alpha_B \Delta\hat{\boldsymbol{B}}_{u,t-1})( {\boldsymbol{A}}_{u,t-1} - \alpha_A \Delta\hat{\boldsymbol{A}}_{u,t-1})\\
%         % & = \boldsymbol{B}_{u,t-1}\boldsymbol{A}_{u,t-1}-\alpha_A \boldsymbol{B}_{u,t-1}\Delta\hat{\boldsymbol{A}}_{u,t-1}\\
%         % & \quad \quad -\alpha_B \Delta\hat{\boldsymbol{B}}_{u,t-1}\boldsymbol{A}_{u,t-1} +\alpha_B\alpha_A\Delta\hat{\boldsymbol{B}}_{u,t-1}\Delta\hat{\boldsymbol{A}}_{u,t-1}\\
%         & \stackrel{(a)} \approx \boldsymbol{B}_{u,t-1}\boldsymbol{A}_{u,t-1}-\alpha_A \boldsymbol{B}_{u,t-1}\Delta\hat{\boldsymbol{A}}_{u,t-1}\\
%         & \quad \quad -\alpha_B \Delta\hat{\boldsymbol{B}}_{u,t-1}\boldsymbol{A}_{u,t-1}\\
%         & \stackrel{(b)} = \Delta\boldsymbol{W}_{u,t-1} - \alpha(\boldsymbol{B}_{u,t-1}\Delta\hat{\boldsymbol{A}}_{u,t-1}+\Delta\hat{\boldsymbol{B}}_{u,t-1}\boldsymbol{A}_{u,t-1})\\
%         & \stackrel{(c)} = \Delta\boldsymbol{W}_{u,t-1}  - \alpha (\boldsymbol{G}_{u,t-1} - \boldsymbol{J}_{u,t-1}),
%     \end{aligned}
% \end{equation}
\begingroup
\renewcommand{\arraystretch}{0.85}
\setlength{\jot}{-3pt}
\begin{equation}
    \begin{aligned}
        \Delta\boldsymbol{W}_{u,t}
        &= \boldsymbol{B}_{u,t}\boldsymbol{A}_{u,t} \\
        &= ( \boldsymbol{B}_{u,t-1} - \alpha_B \Delta\hat{\boldsymbol{B}}_{u,t-1} )
           ( \boldsymbol{A}_{u,t-1} - \alpha_A \Delta\hat{\boldsymbol{A}}_{u,t-1} ) \\
        &\stackrel{(a)}{\approx} \boldsymbol{B}_{u,t-1}\boldsymbol{A}_{u,t-1}
           - \alpha_A \boldsymbol{B}_{u,t-1}\Delta\hat{\boldsymbol{A}}_{u,t-1} \\
        &\quad - \alpha_B \Delta\hat{\boldsymbol{B}}_{u,t-1}\boldsymbol{A}_{u,t-1} \\
        &\stackrel{(b)}{=} \Delta\boldsymbol{W}_{u,t-1}
           - \alpha ( \boldsymbol{B}_{u,t-1}\Delta\hat{\boldsymbol{A}}_{u,t-1}
           + \Delta\hat{\boldsymbol{B}}_{u,t-1}\boldsymbol{A}_{u,t-1} ) \\
        &\stackrel{(c)}{=} \Delta\boldsymbol{W}_{u,t-1}
           - \alpha ( \boldsymbol{G}_{u,t-1} - \boldsymbol{J}_{u,t-1} ),
    \end{aligned}
\end{equation}
\endgroup
where (a) comes from the slight term of the gradient matrices product caused by the $\alpha_A\alpha_B$, (b) comes from letting $\alpha_A=\alpha_B=\alpha$. In (c), $\boldsymbol{G}_{u,t-1}$ is the desired ideal global gradient matrix, $\boldsymbol{J}_{u,t-1}$ is the global gradient error matrix caused by LoRA dropout, and can be written as $\boldsymbol{J}_{u,t-1}= \boldsymbol{G}_{u,t-1} - \hat{\boldsymbol{G}}_{u,t-1}$, and $\hat{\boldsymbol{G}}_{u,t-1}=\boldsymbol{B}_{u,t-1}\Delta\hat{\boldsymbol{A}}_{u,t-1}+\Delta\hat{\boldsymbol{B}}_{u,t-1}\boldsymbol{A}_{u,t-1}$.
\end{lemma}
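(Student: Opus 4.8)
The plan is to proceed by direct algebraic substitution, starting from the defining relation $\Delta\boldsymbol{W}_{u,t}=\boldsymbol{B}_{u,t}\boldsymbol{A}_{u,t}$ in \eqref{Eq:UpdateMatrix} and inserting the aggregated update rules \eqref{equ:aggre_1} and \eqref{equ:aggre_2}. First I would substitute $\boldsymbol{B}_{u,t}=\boldsymbol{B}_{u,t-1}-\alpha_B\Delta\hat{\boldsymbol{B}}_{u,t-1}$ and $\boldsymbol{A}_{u,t}=\boldsymbol{A}_{u,t-1}-\alpha_A\Delta\hat{\boldsymbol{A}}_{u,t-1}$, then expand the product into four terms:
\begin{equation*}
\begin{aligned}
& \boldsymbol{B}_{u,t-1}\boldsymbol{A}_{u,t-1}
-\alpha_A\boldsymbol{B}_{u,t-1}\Delta\hat{\boldsymbol{A}}_{u,t-1}\\
& \quad -\alpha_B\Delta\hat{\boldsymbol{B}}_{u,t-1}\boldsymbol{A}_{u,t-1}
+\alpha_A\alpha_B\Delta\hat{\boldsymbol{B}}_{u,t-1}\Delta\hat{\boldsymbol{A}}_{u,t-1}.
\end{aligned}
\end{equation*}
The leading term is exactly $\Delta\boldsymbol{W}_{u,t-1}$ by \eqref{Eq:UpdateMatrix}, and the two first-order terms form the descent direction, so the whole derivation reduces to controlling the remaining quadratic term.

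Next I would establish approximation (a), namely that the cross term $\alpha_A\alpha_B\Delta\hat{\boldsymbol{B}}_{u,t-1}\Delta\hat{\boldsymbol{A}}_{u,t-1}$ is negligible. Under Assumption \ref{assumption2} the gradient matrices have bounded F-norm, so by Cauchy--Schwarz its F-norm is bounded by $\alpha_A\alpha_B\mathcal{H}^2$; since both learning rates are small, this is of second order relative to the retained $O(\alpha)$ terms and can be discarded with controlled error as the step size tends to zero. This is the only non-exact step and is therefore the crux of the argument — I expect the main obstacle to be making ``$\approx$'' rigorous, i.e. quantifying that the discarded term is dominated by the first-order terms uniformly over rounds, which requires invoking Assumption \ref{assumption2} to keep $\|\Delta\hat{\boldsymbol{B}}_{u,t-1}\|_F$ and $\|\Delta\hat{\boldsymbol{A}}_{u,t-1}\|_F$ bounded.

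Finally, steps (b) and (c) are bookkeeping. For (b) I would set $\alpha_A=\alpha_B=\alpha$ and factor, obtaining $\Delta\boldsymbol{W}_{u,t-1}-\alpha(\boldsymbol{B}_{u,t-1}\Delta\hat{\boldsymbol{A}}_{u,t-1}+\Delta\hat{\boldsymbol{B}}_{u,t-1}\boldsymbol{A}_{u,t-1})$. For (c) I would name the dropout-perturbed gradient $\hat{\boldsymbol{G}}_{u,t-1}\triangleq\boldsymbol{B}_{u,t-1}\Delta\hat{\boldsymbol{A}}_{u,t-1}+\Delta\hat{\boldsymbol{B}}_{u,t-1}\boldsymbol{A}_{u,t-1}$ and the ideal dropout-free gradient $\boldsymbol{G}_{u,t-1}$, define the error matrix $\boldsymbol{J}_{u,t-1}\triangleq\boldsymbol{G}_{u,t-1}-\hat{\boldsymbol{G}}_{u,t-1}$, and substitute $\hat{\boldsymbol{G}}_{u,t-1}=\boldsymbol{G}_{u,t-1}-\boldsymbol{J}_{u,t-1}$ to arrive at the claimed form $\Delta\boldsymbol{W}_{u,t-1}-\alpha(\boldsymbol{G}_{u,t-1}-\boldsymbol{J}_{u,t-1})$, which completes the proof.
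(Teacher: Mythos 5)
Your proposal is correct and follows essentially the same route as the paper, which presents this derivation inline within the lemma itself: substitute the aggregated updates \eqref{equ:aggre_1}--\eqref{equ:aggre_2} into $\Delta\boldsymbol{W}_{u,t}=\boldsymbol{B}_{u,t}\boldsymbol{A}_{u,t}$, expand, discard the $\alpha_A\alpha_B$ cross term as higher order, set $\alpha_A=\alpha_B=\alpha$, and define $\hat{\boldsymbol{G}}_{u,t-1}$ and $\boldsymbol{J}_{u,t-1}$ by renaming. Your additional remark that the discarded term can be bounded via submultiplicativity of the Frobenius norm under Assumption \ref{assumption2} only makes the paper's informal ``slight term'' justification slightly more explicit.
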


\begin{lemma}\label{lemma_layer_error}
    The layer-wise LoRA dropout error $\mathbb{E}\left[\| \boldsymbol{J}_{u,t-1} \|_F^2 \right] $ can be bounded as 
\begin{equation}
    \begin{aligned}
            & \mathbb{E}\left[\| \boldsymbol{J}_{u,t-1} \|_F^2 \right] \\
            & \leq 2 \sum_{k=1}^K  \frac{\left|\mathcal{D}_k\right|}{|\mathcal{D}|} \mathcal{G}^2 \mathcal{H}^2 \mathcal{G}^2 n_2 \gamma_{k,t} + 2 \sum_{k=1}^K  \frac{\left|\mathcal{D}_k\right|}{|\mathcal{D}|} \mathcal{G}^2 \mathcal{H}^2 \mathcal{G}^2 n_1 \gamma_{k,t}\\
            & = 2 (n_1+n_2) \mathcal{H}^2 \mathcal{G}^4 \sum_{k=1}^K  \frac{\left|\mathcal{D}_k\right|}{|\mathcal{D}|} \gamma_{k,t}.
    \end{aligned}
\end{equation}
\begin{proof}
    Please refer to Appendix \ref{proof_of_lemma_layer}.
\end{proof}
\end{lemma}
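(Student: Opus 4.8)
The plan is to start from the representation of the dropout error matrix established in Lemma~\ref{lemma:loragradie}, $\boldsymbol{J}_{u,t-1} = \boldsymbol{G}_{u,t-1} - \hat{\boldsymbol{G}}_{u,t-1}$, and substitute $\hat{\boldsymbol{G}}_{u,t-1} = \boldsymbol{B}_{u,t-1}\Delta\hat{\boldsymbol{A}}_{u,t-1} + \Delta\hat{\boldsymbol{B}}_{u,t-1}\boldsymbol{A}_{u,t-1}$ together with its dropout-free counterpart $\boldsymbol{G}_{u,t-1} = \boldsymbol{B}_{u,t-1}\Delta\boldsymbol{A}_{u,t-1} + \Delta\boldsymbol{B}_{u,t-1}\boldsymbol{A}_{u,t-1}$. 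Subtracting term by term isolates the two sources of error, one from masking the columns of $\boldsymbol{A}$ and one from masking the rows of $\boldsymbol{B}$:
\begin{equation*}
\boldsymbol{J}_{u,t-1} = \boldsymbol{B}_{u,t-1}(\Delta\boldsymbol{A}_{u,t-1}-\Delta\hat{\boldsymbol{A}}_{u,t-1}) + (\Delta\boldsymbol{B}_{u,t-1}-\Delta\hat{\boldsymbol{B}}_{u,t-1})\boldsymbol{A}_{u,t-1}.
\end{equation*}
Applying $\|\boldsymbol{X}+\boldsymbol{Y}\|_F^2 \le 2\|\boldsymbol{X}\|_F^2 + 2\|\boldsymbol{Y}\|_F^2$ reduces the claim to bounding these two contributions separately, which is exactly the structure of the two summands in the statement.

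Next I would bound each contribution by combining the submultiplicativity $\|\boldsymbol{X}\boldsymbol{Y}\|_F \le \|\boldsymbol{X}\|_F\|\boldsymbol{Y}\|_F$ with Assumption~\ref{assumption4} to pull out $\mathbb{E}\|\boldsymbol{B}_{u,t-1}\|_F^2 \le \mathcal{G}^2$ (resp. $\mathbb{E}\|\boldsymbol{A}_{u,t-1}\|_F^2 \le \mathcal{G}^2$), leaving the gradient mismatch $\mathbb{E}\|\Delta\boldsymbol{A}_{u,t-1}-\Delta\hat{\boldsymbol{A}}_{u,t-1}\|_F^2$ to be controlled. Since the aggregated gradient is the convex combination $\Delta\boldsymbol{A}_{u,t-1}=\sum_k \frac{|\mathcal{D}_k|}{|\mathcal{D}|}\Delta\boldsymbol{A}_{u,k,t-1}$ (and likewise for the masked version), Jensen's inequality with the weights $|\mathcal{D}_k|/|\mathcal{D}|$, which sum to one, moves the expectation inside and produces the per-device sum $\sum_k \frac{|\mathcal{D}_k|}{|\mathcal{D}|}\mathbb{E}\|\Delta\boldsymbol{A}_{u,k,t-1}-\Delta\hat{\boldsymbol{A}}_{u,k,t-1}\|_F^2$ that appears in the statement.

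The crux is evaluating the per-device gradient mismatch. Here I would use the factorization $\Delta\boldsymbol{A}_{u,k,t-1}=\boldsymbol{B}_{u,t-1}^\top\boldsymbol{H}_{u,k,t}$ and $\Delta\boldsymbol{B}_{u,k,t-1}=\boldsymbol{H}_{u,k,t}\boldsymbol{A}_{u,t-1}^\top$ read off from \eqref{equ:backpropa_pri_1}--\eqref{equ:backpropa_pri_2}, where $\boldsymbol{H}_{u,k,t}$ collects the sample-averaged output-side derivative, and observe that LoRA dropout zeros the columns of $\boldsymbol{A}$ (resp. rows of $\boldsymbol{B}$) selected by the Bernoulli mask. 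Invoking Assumption~\ref{assumption3} to discard the $O(\gamma_{k,t}^2)$ cross terms generated by perturbing the forward and backward passes simultaneously, the mismatch collapses to the part of the ideal gradient supported on the dropped coordinates; taking the Bernoulli expectation replaces each indicator by its mean $\gamma_{k,t}$, and summing over the $n_2$ columns of $\boldsymbol{A}$ (resp. the $n_1$ rows of $\boldsymbol{B}$) supplies the dimensional prefactor. Bounding the retained energy through $\|\boldsymbol{B}_{u,t-1}\|_F^2\le\mathcal{G}^2$ (Assumption~\ref{assumption4}) and $\|\boldsymbol{H}_{u,k,t}\|_F^2\le\mathcal{H}^2$ (Assumption~\ref{assumption2}) then gives $\mathbb{E}\|\Delta\boldsymbol{A}_{u,k,t-1}-\Delta\hat{\boldsymbol{A}}_{u,k,t-1}\|_F^2 \le n_2\,\mathcal{G}^2\mathcal{H}^2\gamma_{k,t}$, and symmetrically $n_1\,\mathcal{G}^2\mathcal{H}^2\gamma_{k,t}$ for the $\boldsymbol{B}$-term. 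Collecting the two contributions, together with the leading $\mathcal{G}^2$ pulled out earlier and the factor $2$, reproduces $2(n_1+n_2)\mathcal{H}^2\mathcal{G}^4\sum_k \frac{|\mathcal{D}_k|}{|\mathcal{D}|}\gamma_{k,t}$.

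I expect the main obstacle to lie in the per-device mismatch step rather than in the norm algebra. Because dropout perturbs both the forward activations $\hat{\boldsymbol{h}}$ and the backward factor entering $\Delta\hat{\boldsymbol{A}}$, the exact difference $\Delta\boldsymbol{A}_{u,k,t-1}-\Delta\hat{\boldsymbol{A}}_{u,k,t-1}$ is not merely a masked copy of the dropout-free gradient, and the reduction to such a masked copy is only justified after the small-dropout-rate linearization of Assumption~\ref{assumption3} removes the coupled higher-order terms. Care is then needed to take the Bernoulli expectation consistently with the norm bounds, so that the dimensional factors $n_1$ and $n_2$ attach to the correct contribution and no spurious $O(\gamma_{k,t}^2)$ terms are retained.
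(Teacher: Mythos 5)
Your proposal is correct and arrives at the paper's bound, and its outer skeleton --- splitting $\boldsymbol{J}_{u,t-1}$ into the $\boldsymbol{B}_{u,t-1}(\Delta\boldsymbol{A}_{u,t-1}-\Delta\hat{\boldsymbol{A}}_{u,t-1})$ and $(\Delta\boldsymbol{B}_{u,t-1}-\Delta\hat{\boldsymbol{B}}_{u,t-1})\boldsymbol{A}_{u,t-1}$ contributions, applying $\|\boldsymbol{X}+\boldsymbol{Y}\|_F^2\le 2\|\boldsymbol{X}\|_F^2+2\|\boldsymbol{Y}\|_F^2$ together with Jensen over the weights $|\mathcal{D}_k|/|\mathcal{D}|$, and extracting $\mathcal{G}^2$ via Assumption~\ref{assumption4} --- is exactly that of Appendix~\ref{proof_of_lemma_layer}. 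Where you genuinely diverge is the per-device mismatch. The paper treats the gradient as a map of the weights and Taylor-expands it, writing $\Delta\hat{\boldsymbol{A}}_{u,k,t-1}\approx\Delta\boldsymbol{A}_{u,k,t-1}+\boldsymbol{H}(\boldsymbol{A}_{u,k,t-1})(\hat{\boldsymbol{A}}_{u,k,t-1}-\boldsymbol{A}_{u,k,t-1})$, so that $\mathcal{H}$ enters as a bound on the Hessian-type operator of Assumption~\ref{assumption2} and the Bernoulli computation is carried out on the weight perturbation $\boldsymbol{A}_{u,k,t-1}(\mathrm{diag}(\boldsymbol{m}_{A,k,t-1})-\boldsymbol{I})$, yielding $\mathcal{H}^2\mathcal{G}^2 n_2\gamma_{k,t}$. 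You instead unfold the chain rule $\Delta\boldsymbol{A}=\boldsymbol{B}^\top\boldsymbol{H}$ from \eqref{equ:backpropa_pri_2} and read the mismatch as the masked portion of the ideal gradient, so $\mathcal{H}$ plays the role of the backpropagated signal and the inner $\mathcal{G}^2$ comes from $\|\boldsymbol{B}^\top\|_F^2$ rather than from $\|\boldsymbol{A}\|_F^2$; the two routes assign the constants to different objects but land on the same $n_2\mathcal{G}^2\mathcal{H}^2\gamma_{k,t}$ (and symmetrically $n_1$ for the $\boldsymbol{B}$-term). Both lean on Assumption~\ref{assumption3} to the same extent: the paper silently drops the remainder of its Taylor expansion and the fact that $\boldsymbol{B}$ is masked simultaneously, while you drop the coupling induced by the perturbed forward pass --- the obstacle you flag at the end is present, in disguised form, in the paper's own argument, so your proof is no less rigorous than the original. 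A minor observation: in your route $\mathbb{E}\bigl[\|\Delta\boldsymbol{A}(\boldsymbol{I}-\mathrm{diag}(\boldsymbol{m}))\|_F^2\bigr]$ equals $\gamma_{k,t}\|\Delta\boldsymbol{A}\|_F^2$ exactly, so the dimensional factor $n_2$ appears only because you (like the paper) bound each column's energy by the full Frobenius budget; either derivation could in principle be tightened to remove the $n_1+n_2$ prefactor.
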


By leveraging Assumptions \ref{assumption1}-\ref{assumption5} and Lemma \ref{lemma_layer_error}, we derive the expected upper bound of the optimality gap between consecutive training rounds by elucidating how the gradient error matrix $\boldsymbol{J}_{t-1}$ affects the FT procedure.

\begin{theorem} \label{theorem:loss_descent}
    Setting the learning rate as $\alpha=\frac{1}{\eta}$, we get the upper bounding loss descent by gradient norms,
    \begin{equation}
    \setlength{\jot}{0.5\jot} % 压缩多行公式之间的行间距
        \begin{aligned}
                    & \mathbb{E} [\mathcal{L}\left(\Delta\boldsymbol{\theta}_{t}\right) ] - \mathcal{L}\left(\Delta \boldsymbol{\theta}_{t-1}\right) \\
                    & \leq - \frac{\mu\rho}{\eta} + \frac{U^{\prime}(n_1+n_2) \mathcal{H}^2 \mathcal{G}^4 \sum_{k=1}^K  \frac{\left|\mathcal{D}_k\right|}{|\mathcal{D}|} \gamma_{k,t}}{\eta}.
        \end{aligned}
    \end{equation}
\end{theorem}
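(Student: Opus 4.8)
The plan is to run the standard smoothness-based descent argument on the aggregate trainable parameters $\Delta\boldsymbol{\theta}$, treating the LoRA update of Lemma~\ref{lemma:loragradie} as an inexact gradient step whose error is controlled by Lemma~\ref{lemma_layer_error}. First I would invoke the Lipschitz-continuous-gradient property (Assumption~\ref{assumption1}), which yields the descent inequality
\begin{equation*}
\mathcal{L}(\Delta\boldsymbol{\theta}_{t}) \leq \mathcal{L}(\Delta\boldsymbol{\theta}_{t-1}) + \langle \nabla\mathcal{L}(\Delta\boldsymbol{\theta}_{t-1}), \Delta\boldsymbol{\theta}_{t}-\Delta\boldsymbol{\theta}_{t-1}\rangle + \frac{\eta}{2}\|\Delta\boldsymbol{\theta}_{t}-\Delta\boldsymbol{\theta}_{t-1}\|_F^2.
\end{equation*}
Summed across the $U^{\prime}$ LoRA layers, Lemma~\ref{lemma:loragradie} gives the update direction $\Delta\boldsymbol{\theta}_{t}-\Delta\boldsymbol{\theta}_{t-1} = -\alpha(\boldsymbol{G}_{t-1}-\boldsymbol{J}_{t-1})$, where I identify the ideal aggregate gradient $\boldsymbol{G}_{t-1}$ with the true gradient $\nabla\mathcal{L}(\Delta\boldsymbol{\theta}_{t-1})$ and $\boldsymbol{J}_{t-1}$ with the dropout-induced error matrix.

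Next I would substitute this step into the descent inequality and set $\alpha=\frac{1}{\eta}$. Expanding $\|\boldsymbol{G}_{t-1}-\boldsymbol{J}_{t-1}\|_F^2 = \|\boldsymbol{G}_{t-1}\|_F^2 - 2\langle\boldsymbol{G}_{t-1},\boldsymbol{J}_{t-1}\rangle + \|\boldsymbol{J}_{t-1}\|_F^2$, the cross term $\langle\boldsymbol{G}_{t-1},\boldsymbol{J}_{t-1}\rangle$ arising from the linear contribution cancels against the one from the quadratic contribution, and the choice $\alpha=1/\eta$ collapses the $\|\boldsymbol{G}_{t-1}\|_F^2$ coefficients to $-\frac{1}{2\eta}$. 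This leaves the clean bound
\begin{equation*}
\mathcal{L}(\Delta\boldsymbol{\theta}_{t}) - \mathcal{L}(\Delta\boldsymbol{\theta}_{t-1}) \leq -\frac{1}{2\eta}\|\boldsymbol{G}_{t-1}\|_F^2 + \frac{1}{2\eta}\|\boldsymbol{J}_{t-1}\|_F^2,
\end{equation*}
with no residual inner-product term, so no zero-mean hypothesis on $\boldsymbol{J}_{t-1}$ is required.

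Finally I would take the expectation and bound the two surviving terms separately. Since $\boldsymbol{G}_{t-1}=\nabla\mathcal{L}(\Delta\boldsymbol{\theta}_{t-1})$, the PL inequality (Assumption~\ref{assumption5}) gives $\|\boldsymbol{G}_{t-1}\|_F^2 \geq 2\mu\rho$, so the first term is at most $-\frac{\mu\rho}{\eta}$. For the error term I would sum the layer-wise bound of Lemma~\ref{lemma_layer_error} over the $U^{\prime}$ layers, i.e. $\mathbb{E}[\|\boldsymbol{J}_{t-1}\|_F^2]=\sum_{u=1}^{U^{\prime}}\mathbb{E}[\|\boldsymbol{J}_{u,t-1}\|_F^2] \leq 2U^{\prime}(n_1+n_2)\mathcal{H}^2\mathcal{G}^4\sum_{k=1}^{K}\frac{|\mathcal{D}_k|}{|\mathcal{D}|}\gamma_{k,t}$, and the factor $\frac{1}{2\eta}$ then reproduces exactly the stated second term. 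The main obstacle is the identification in the first paragraph: justifying that the aggregated LoRA descent direction $\boldsymbol{G}_{t-1}$ genuinely coincides with $\nabla\mathcal{L}(\Delta\boldsymbol{\theta}_{t-1})$ rests on the first-order approximation of Lemma~\ref{lemma:loragradie}, which discards the $\alpha_A\alpha_B$ second-order product under the small-dropout-rate premise (Assumption~\ref{assumption3}); confirming that these neglected higher-order terms do not distort the descent estimate is the delicate point, whereas the cancellation algebra and the two bounding steps are routine.
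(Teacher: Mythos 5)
Your proposal is correct and follows essentially the same route as the paper's proof in Appendix C: the smoothness descent inequality with $\alpha=1/\eta$ yielding $-\frac{1}{2\eta}\|\nabla\mathcal{L}(\Delta\boldsymbol{\theta}_{t-1})\|_F^2+\frac{1}{2\eta}\mathbb{E}[\|\boldsymbol{J}_{t-1}\|_F^2]$, then the PL inequality for the first term and the layer-wise sum of Lemma~\ref{lemma_layer_error} for the second. You simply make explicit the cross-term cancellation and the identification of $\boldsymbol{G}_{t-1}$ with the true gradient (together with its reliance on the first-order approximation in Lemma~\ref{lemma:loragradie}), both of which the paper leaves implicit.
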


\begin{proof}
    Please refer to Appendix \ref{proof_of_theorem_conver}.     
\end{proof}

\textcolor{black}{
\subsection{Convergence Analysis} \label{subsect:conver}
In this subsection, we investigate the convergence behavior of FedLoDrop by capturing the learning error caused by the LoRA dropout. }
\textcolor{black}{
Based on \eqref{equ:proof_whole_model}, we have 
\begin{equation}\label{equ:conver}
        \begin{aligned}
                    & \frac{1}{2\eta} \mathbb{E}\left[ \|\nabla\mathcal{L}(\Delta\boldsymbol{\theta}_{t-1}) \|_F^2 \right] \\
                    & \leq \mathbb{E} [\mathcal{L}\left(\Delta\boldsymbol{\theta}_{t-1}\right)  - \mathcal{L}\left(\Delta \boldsymbol{\theta}_{t}\right) ]  + \frac{1}{2\eta}\mathbb{E}\left[\| \boldsymbol{J}_{t-1} \|_F^2 \right].
        \end{aligned}
    \end{equation}
Next, averaging both sides across communication rounds $t=2, 3, \ldots, T+1$, and combining with Lemma \ref{lemma_layer_error}, the average F-norm of global gradient is bounded by
\begin{equation}\label{Equ:conver_final}
\setlength{\jot}{0.5\jot} % 压缩多行公式之间的行间距
    \begin{aligned}
        & \frac{1}{T}\sum_{t=2}^{T+1} \mathbb{E}\left[ \|\nabla\mathcal{L}(\Delta\boldsymbol{\theta}_{t-1}) \|_F^2 \right] 
        \leq \frac{2\eta}{T}\left(\mathcal{L}\left(\Delta\boldsymbol{\theta}_{1}\right) - \mathcal{L}\left(\Delta\boldsymbol{\theta}^{*}\right)\right) \\
        & \quad + \frac{1}{T}\sum_{t=2}^{T+1} 2 (n_1+n_2) U^{\prime} \mathcal{H}^2 \mathcal{G}^4 \sum_{k=1}^K  \frac{\left|\mathcal{D}_k\right|}{|\mathcal{D}|} \gamma_{k,t},
    \end{aligned}
\end{equation}
where $\Delta\boldsymbol{\theta}_{1}$ and $\Delta\boldsymbol{\theta}^{*}$ denote the initialized model and the optimal model, respectively. It can be observed that the average F-norm of global gradient matrix is bounded by two terms: the first captures the discrepancy between $\Delta\boldsymbol{\theta}_{1}$ and $\Delta\boldsymbol{\theta}^{*}$, while the second reflects the optimality gap introduced by LoRA dropout. As the number of communication rounds tends to infinity, the right-hand side of \eqref{Equ:conver_final} converges to zero. Furthermore, the convergence rate decreases with an increasing dropout rate, since FedLoDrop reduces per-round communication overhead at the cost of requiring more rounds to achieve convergence. 
}

\section{Edge Implementation of FedLoDrop}
\subsection{System Model}
\subsubsection{Network Model}
A single-cell network implements the FedLoDrop framework, as shown in Fig. \ref{fig:system}. The system includes one edge server with a single-antenna access point (AP) and $K$ single-antenna edge devices that cooperatively perform FL via wireless links. Channels are frequency non-selective, static within each round but vary across rounds. Using orthogonal frequency division multiplexing (OFDM), different users are allocated distinct subcarriers to avoid intra-cell interference. Each device uses separate subcarriers for downloading and uploading. The server obtains channel state information (CSI) of channels of all devices by using effective channel estimation methods (see, e.g., \cite{9470915}).

\subsubsection{Latency and Energy Consumption Models}
Consider an arbitrary communication round $t$ and device $k$. We ignore latency and energy consumption in the sub-adapter generation and aggregation steps. The other three steps are modeled as follows:
\begin{itemize}
     \item \emph{Downloading Step}: Denote the latency of device $k$ on subcarrier $s$ to download its assigned sub-adapter as $ T_{k,t,s}^{\rm {com,dl}}$. If subcarrier $s$ is not allocated to $k$, i.e., $z_{k,t,s}=0$, $ T_{k,t,s}^{\rm {com,dl}}=0$. Otherwise,
\begin{equation}
    T_{k,t,s}^{\rm {com,dl}}=\dfrac{\hat{M}_{k,t,s} Q}{R_{k,t,s}^{{\rm dl}}}, \; \forall z_{k,t,s}=1,
\end{equation}
where $Q$ is the quantization bits used for one parameter, $\hat{M}_{k,t,s}$ is the number of parameters uploaded by device $k$ on subcarrier $s$. $R_{k,t,s}^{\rm dl}$ is the channel capacity, given as
\begin{equation}\label{equ:r_downlink}
    \begin{aligned}
    R_{k,t,s}^{\rm dl}
    &= B\log_2 \left(1+\frac{|h_{k,t,s}^{\rm dl}|^2 P_{k,t,s}^{\rm{{com,dl}}}}{\sigma^2} \right),
    \end{aligned}
\end{equation}
where $\sigma^2$ is the power of additive white Gaussian noise, $h_{k,t,s}^{\rm dl}$ is the downlink channel gain, and $P_{k,t,s}^{\rm{com,dl}}$ is the downlink transmit power of device $k$ on subcarrier $s$, respectively. Then, the overall uploading latency of device $k$ is decided
by the slowest subcarrier:
\begin{equation}\label{eqn:communicationdl_time}
     T_{k,t}^{\rm {com,dl}}=\max_s \; T_{k,t,s}^{\rm {com,dl}}.
\end{equation}

The energy consumption in this step is to receive the model, which is included in the circuit energy consumption $\xi_k$.

\item \emph{Local Tuning Step}: The latency of device $k$ is given by 
\begin{equation}\label{eqn:computation_time}
    T_{k,t}^{\rm {cmp }}=\dfrac{{C}_{k,t}|\mathcal{D}_k|}{f_{k,t}},
\end{equation}
where $f_{k,t}$ (in cycle/s) is the CPU/GPU frequency of device $k$, ${C}_{k,t}$ is the number of processor operations to update the subnet using one data sample. 
Then, following \cite{you2017energy}, the energy consumption is given by
\begin{equation}\label{eqn:computation_energy}
     E_{k,t}^{\rm {cmp}}=\Omega_kT_{k,t}^{\rm {cmp }} (f_{k,t})^3,
\end{equation}
where $\Omega_k$ is a constant characterizing the local computation performance of the processor on device $k$.

\item \emph{Uploading Step}: Denote the latency of device $k$ on subcarrier $s$ to upload its assigned sub-adapter as $ T_{k,t,s}^{\rm {com,ul}}$. If subcarrier $s$ is not allocated to $k$, i.e., $z_{k,t,s}=0$, $ T_{k,t,s}^{\rm {com,ul}}=0$. Otherwise,
\begin{equation}
    T_{k,t,s}^{\rm {com,ul}}=\dfrac{\hat{M}_{k,t,s} Q}{R_{k,t,s}^{\rm ul}}, \; \forall z_{k,t,s}=1,
\end{equation}
where $R_{k,t,s}^{\rm ul}$ is the channel capacity, given by
\begin{equation} \label{equ:r_k,t}
    R_{k,t,s}^{\rm ul}= B\log_2 \left(1+\frac{|h_{k,t,s}^{\rm ul}|^2P_{k,t,s}^{\rm{com,ul}}}{\sigma^2} \right).
\end{equation}
In \eqref{equ:r_k,t}, $h_{k,t,s}^{\rm ul}$ is the uplink channel gain,  and other notations follow that in \eqref{equ:r_downlink}. Based on \eqref{equ:r_k,t}, the uplink transmit power is derived as
\begin{equation}\label{Eq:TransmitPower}
    P_{k,t,s}^{\rm{com,ul}}=\left(2^{\frac{R_{k,t,s}^{\rm ul}}{B}}-1 \right)\frac{\sigma^2}{|h_{k,t,s}^{\rm ul}|^2}.
\end{equation}
Then, the overall uploading latency of device $k$ is decided by the slowest subcarrier:
\begin{equation}\label{eqn:communicationul_time}
     T_{k,t}^{\rm {com,ul}}=\max_s \; T_{k,t,s}^{\rm {com,ul}}.
\end{equation}

If subcarrier $s$ is not allocated to $k$, i.e., $z_{k,t,s}=0$, $ E_{k,t,s}^{\rm{com,ul}}=0$. Otherwise, 
$E_{k,t,s}^{\rm{com,ul}} =z_{k,t,s} P_{k,t,s}^{\rm {com,ul}}T_{k,t,s}^{\rm {com,ul}}.$
% \begin{equation}
%      \begin{aligned}
%          E_{k,t,s}^{\rm{com,ul}}
%          & =z_{k,t,s} P_{k,t,s}^{\rm {com,ul}}T_{k,t,s}^{\rm {com,ul}}.%\\
%          %& =\left(2^{\frac{M_kQ}{T_{k,t}^{\text {com,ul}}\rho_{k,t}B}}-1 \right)\frac{N_0}{h_{k,t}^{ul}}T_{k,t}^\text{{com,ul}},
%      \end{aligned}
% \end{equation}

The total uploading energy consumption of device $k$ is the sum of uploading energy consumption over all subcarriers, given by
\begin{equation} \label{eqn:communicationul_energy}
    \begin{aligned}
        E_{k,t}^{\rm{com,ul}}
         & =  \sum_{s=1}^S E_{k,t,s}^{\rm{com,ul}}.
         % \\
         % & =  \sum_{s=1}^S \frac{z_{k,t,s}\left(2^{\frac{R_{k,t,s}^{\rm ul}}{B}}-1 \right) \sigma^2\hat{M}_{k,t,s} Q} {|h_{k,t,s}^{\rm ul}|^2 R_{k,t,s}^{\rm ul}}
    \end{aligned}
\end{equation}

In summary, the overall latency of device $k$ in this round is derived as
\begin{equation}\label{Eq:DeviceLatency}
    T_{k,t}=T_{k,t}^{\rm {com,dl}}+T_{k,t}^{\rm {cmp}}+T_{k,t}^{\rm {com,ul}},\; \forall k,
\end{equation}
where $T_{k,t}^{\rm {com,dl}}$ is the downlink communication latency defined in \eqref{eqn:communicationdl_time},
$T_{k,t}^{\rm {cmp}}$ is the computation latency defined in \eqref{eqn:computation_time}, and $T_{k,t}^{\rm {com,ul}}$ is the uplink communication latency defined in \eqref{eqn:communicationul_time}.\\
The total energy consumption of device $k$ in this round is 
\begin{equation}\label{Eq:DeviceEnergy}
    E_{k,t}=E_{k,t}^{\rm {com,ul}}+E_{k,t}^{\rm{cmp}}+\xi_k, \; \forall k,
\end{equation}
where $E_{k,t}^{\rm{cmp}}$ is the computation energy consumption in \eqref{eqn:computation_energy}, $E_{k,t}^{\rm{com,ul}}$ is the communication energy consumption in \eqref{eqn:communicationul_energy}, and $\xi_k$ is the circuit energy consumption for global model reception.

\end{itemize}

\subsection{Problem Formulation}
Based on the error bound provided in \eqref{equ:errorbound}, this subsection formulates an optimization problem to jointly design the adaptive dropout rate and resource allocation, aiming to minimize this bound. Since probability $\delta$ and other parameters, i.e., $C$ and $\eta$ are constants, minimizing the right side of \eqref{equ:errorbound} is equivalent to
% \begin{figure*}
    \begin{equation} \label{equ:optim}
        \begin{aligned}
            \min\limits_{\{{\gamma}_{k,t},
				{z}_{k,t,s},
                    \hat{M}_{k,t,s}\}}\;\; 
    & \frac{U^{\prime}(n_1+n_2) \mathcal{H}^2 \mathcal{G}^4 \sum_{k=1}^K  \frac{\left|\mathcal{D}_k\right|}{|\mathcal{D}|} \gamma_{k,t}}{\eta} \\
    & + \sqrt{\frac{\sum_{k=1}^K  \frac{12C\eta^2}{\Lambda_{k,t,\text{min}} +2\lambda_t (2\gamma_{k,t}-\gamma_{k,t}^2) }}{|\mathcal{D}|\delta}}.
        \end{aligned}
\end{equation}
% \end{figure*}

Due to the limited network resources, there are several constraints on the latency, energy consumption, subcarrier allocation, and LoRA dropout rate in each round $t$, as elaborated below.
\subsubsection{Per-Round Latency Constraint}
The latency of each device should not exceed the maximum permitted latency $T_0$ to complete this round. Based on the devices' latency $\{T_{k,t}\}$ derived in \eqref{Eq:DeviceLatency}, the latency constraint is given by 
%\begin{equation} \label{Eq:con_lat}
% T_{k,t}\leq T_0, \; 1\leq k \; \leq K,
%\end{equation}
%where $T_0$ is the maximum allowed latency to complete this round.  By substituting $T_{k,t}$ in \eqref{Eq:DeviceLatency}, the constraint in \eqref{Eq:con_lat} is  derived as
\begin{equation} \label{Eq:con_lat}
 T_{k,t}^{\rm {com,dl}}+T_{k,t}^{\rm {cmp}}+T_{k,t}^{\rm {com,ul}} \leq T_0, \; \forall k.
\end{equation}
%where $T_0$ is the maximum permitted latency to complete this round. 
By substituting $T_{k,t}^{\rm {com,dl}}$ given in \eqref{eqn:communicationdl_time},
$T_{k,t}^{\rm {cmp}}$ given in \eqref{eqn:computation_time}, and $T_{k,t}^{\rm {com,ul}}$ given in \eqref{eqn:communicationul_time} into the latency constraint, we have
\begin{equation}
    \mathcal{C}_1: \; \max_s \; T_{k,t,s}^{\rm {com,dl}}+T_{k,t}^{\rm {cmp}}+ \max_s \; T_{k,t,s}^{\rm {com,ul}} \leq T_0, \; \forall z_{k,t,s}=1.
\end{equation}
% where $\hat{M}_{k,t} = (1-\gamma_{k,t})M_{\rm ori} $ is the number of parameters of the sub-adapter and $\hat{C}_{k,t}=C_{\text{fr}}+(2\gamma_{k,t}-\gamma_{k,t}^2)C_{\text{BA}} + (1-\gamma_{k,t})C_{\text{ba}} $ is the number of processor operations on one data sample. 

\subsubsection{Energy Consumption Constraint}
The total energy consumption of each device
should be no larger than its energy budget $E_{k,0}$. Based on the energy consumption $\{E_{k,t}\}$ of devices given in \eqref{Eq:DeviceEnergy}, the energy consumption constraint is given by 
%\begin{equation} \label{Eq:con_enr}
%E_{k,t}\leq E_{k,0}.
%\end{equation}
%$E_{k,0}$ is the maximal energy consumption for an arbitrary round and can be different over different rounds. By substituting $E_{k,t}$ in \eqref{Eq:DeviceEnergy}, the constraints in \eqref{Eq:con_enr} can be derived as
\begin{equation} \label{Eq:con_enr}
    \mathcal{C}_2:\; E_{k,t}^{\rm {com,ul}}+E_{k,t}^{\rm{cmp}}+\xi_k \leq E_{k,0}, \; \forall k,
\end{equation}
where $E_{k,t}^{\rm{cmp}}$ is defined in \eqref{eqn:computation_energy} and $E_{k,t}^{\rm{com,ul}}$ is defined in  \eqref{eqn:communicationul_energy}.
% the energy consumption constraint is derived as
% \begin{equation}
%     \mathcal{C}_2:\;  E_{k,t,s}^{\rm {com,ul}}+E_{k,t}^{\rm{cmp}}+\xi_k \leq E_{k,0}, \forall z_{k,t,s}=1.
% \end{equation}

\subsubsection{Subcarrier Assignment Constraint}
Each subcarrier can be allocated to one worker:
\begin{equation}
\mathcal{C}_3:\; \left\{\begin{array}{l}
z_{k, t, s} \in \{0,1\}, \; \forall k,\\
\sum_{k=1}^K  z_{k, t, s}= 1.
\end{array}\right.
\end{equation}
where $\sum_{k=1}^K  z_{k, t, s}= 1$ represents that the subcarrier $s$ is allocated to device $k$. 

\subsubsection{Parameter Constraint}
For each device, the total uploaded
number of parameters on all subcarriers should be no smaller than its allocated subnet size:
\begin{equation}
    \mathcal{C}_4:\; \sum_{s=1}^S z_{k, t, s} \hat{M}_{k,t,s} \geq \hat{M}_{k,t}, \; \forall k,
\end{equation}
where $\hat{M}_{k,t}=(1-\gamma_{k,t})M$.
% , and $M$ is the number of parameters in the original work.

\subsubsection{Dropout Rate Constraint}
Based on the definition, the dropout rate of each device in each communication round should be limited between 0 and 1, namely,
\begin{equation}
\mathcal{C}_5:\; 0 \leq \gamma_{k,t} < 1, \; \forall k.
\end{equation}

Under the constraints above, the optimization problem is formulated as 
\eqref{equ:P_1}.
\begin{figure*}
    \begin{equation}\label{equ:P_1}
\mathscr P_1: \quad
	\begin{array}{cl}
		\underset
		{
			\underset
			{
				k \in \mathcal{K} 
			}
			{
				\left \{
				{\gamma}_{k,t},
				{z}_{k,t,s},
                    \hat{M}_{k,t,s}
				\right \}
			}
		}
		{
			\min
		}
		&	
		\frac{U^{\prime}(n_1+n_2) \mathcal{H}^2 \mathcal{G}^4 \sum_{k=1}^K  \frac{\left|\mathcal{D}_k\right|}{|\mathcal{D}|} \gamma_{k,t}}{\eta} + \sqrt{\frac{ \sum_{k=1}^K  \frac{12C\eta^2}{\Lambda_{k,t,\text{min}} +2\lambda_t (2\gamma_{k,t}-\gamma_{k,t}^2) }}{|\mathcal{D}|\delta}},
		\\
		\text{s.t.}
		& \mathcal{C}_1 \sim
         \mathcal{C}_5.
	\end{array}
\end{equation}\vspace{-0.5cm}
\end{figure*}
%\vspace{-0.5cm}
This is a mixed-integer non-linear problem (MINLP) and thus NP-hard. In the sequel, an arbitrary round is considered and the notation $t$ is omitted for simplicity.

\section{Optimized FedLoDrop Implementation}
This section first presents an algorithm based on the B\&B framework to jointly optimize $\mathscr{P}_1$, which is a well-established approach for handling mixed binary-continuous optimization problems. To reduce the complexity, a P-SCA-based algorithm is then proposed. In the context of the OFDM systems considered in this work, the problem is further complicated by the binary subcarrier allocation among devices, along with the associated inter-subcarrier parameter allocation for each device. \textcolor{black}{The resource allocation is dynamically adjusted during training, making it adaptive to changing conditions and highly practical for real-world environments. To perform this optimization, the system requires the CSI, data resource profile, and computational capacity of the devices.}

\subsection{An Equivalent Problem Formulation}
To handle the integer variables, $\hat{M}_{k,s}$ are relaxed to be continuous for computational simplicity. These values are rounded to integers for implementation, with negligible loss due to the typically large magnitudes \cite{9470915}. For the binary variables $z_{k,s}$, the B\&B-based algorithm is employed. The core idea of B\&B is to iteratively partition the feasible region and prune suboptimal subspaces through bound comparisons. Initially, the root node (i.e., the original problem) is relaxed by ignoring integrality constraints, in order to compute a bounding value.
% If the relaxed solution violates integrality, branching divides the problem into subproblems by restricting discrete variables. Each subproblem is then solved as a relaxed program, updating the global bound and pruning branches where bounds cannot improve the best-known feasible solution.
To the end, the following variables are
used to transform $\mathscr P_1$ into a convex problem.

\begin{equation}\label{equ:aul_val}
\; \left\{\begin{array}{l}
\Tilde{\gamma}_{k} = 1 - \gamma_k,\\
\Tilde{M}_{k,s}=z_{k,s}\hat{M}_{k,s}.
\end{array}\right.
\end{equation}
Thus, for a given $z_{k,s}$, the primal problem of $\mathscr P_1$ can be written as
% \begin{figure*}
% \centering
% \begin{minipage}{0.9\textwidth}
% \begin{equation}\label{equ:P_2}
% \begin{aligned}
% \mathscr{P}_2 \quad \min_{\substack{\{\tilde{\gamma}_{k}\}, \{\tilde{M}_{k,s}\},  \{T_{k}^{\mathrm{com,ul}}\}, \{T_{k}^{\mathrm{com,dl}}\}}} 
% & \frac{I}{\eta} \sum_{k=1}^K \frac{|\mathcal{D}_k|}{|\mathcal{D}|} (1 - \tilde{\gamma}_{k}) 
% + \frac{V\eta}{\sqrt{|\mathcal{D}|}} \sqrt{ \sum_{k=1}^K \frac{1}{-\tilde{\gamma}_{k}^2 + a_k }},
% \end{aligned}
% \end{equation}
% \vspace{-1.5\baselineskip}
% \begin{align*}
% \text{subject to} \quad
% & T_{k}^{\mathrm{com,dl}} + \frac{C_k |\mathcal{D}_k|}{f_{k}} + T_{k}^{\mathrm{com,ul}} \leq T_0, \\
% & \sum_{s=1}^S \frac{\tilde{M}_{k,s} \left(2^{\frac{R_{k,s}^{\mathrm{ul}}}{B}} - 1 \right) \sigma^2 Q}{|h_{k,s}^{\mathrm{ul}}|^2 R_{k,s}^{\mathrm{ul}}} + C_k |\mathcal{D}_k| \Omega_k f_k^2 + \xi_k \leq E_{k,0}, \\
% & \sum_{s=1}^S \tilde{M}_{k,s} \geq \tilde{\gamma}_k M, \quad \forall k, \\
% & 0 \leq \tilde{\gamma}_{k} < 1, \quad \forall k, \\
% & T_{k}^{\mathrm{com,dl}} \geq \frac{\tilde{M}_{k,s} Q}{R_{k,s}^{\mathrm{dl}}}, \quad \forall (k,s), \\
% & T_{k}^{\mathrm{com,ul}} \geq \frac{\tilde{M}_{k,s} Q}{R_{k,s}^{\mathrm{ul}}}, \quad \forall (k,s).
% \end{align*}
% \end{minipage}
% \end{figure*}

\textcolor{black}{
\begin{equation*}
\label{equ:P_2}
\begin{aligned}
\mathscr{P}_2:\quad 
\min_{\substack{
    \{\Tilde{\gamma}_k\},\; \{\Tilde{M}_{k,s}\}, \\
    \{T_k^{\rm com,ul}\},\; \{T_k^{\rm com,dl}\}
}} 
\quad & \frac{I}{\eta} \sum_{k=1}^K \frac{|\mathcal{D}_k|}{|\mathcal{D}|} (1 - \Tilde{\gamma}_k) \\
&+ \frac{V\eta}{\sqrt{|\mathcal{D}|}} \sqrt{ \sum_{k=1}^K \frac{1}{ -\Tilde{\gamma}_k^2 + a_k } },
\end{aligned}
\end{equation*}
\begin{align*}
\text{s.t.} \quad
& T_k^{\rm com,dl} + \frac{C_k |\mathcal{D}_k|}{f_k} + T_k^{\rm com,ul} \leq T_0, \\
& \sum_{s=1}^S \frac{ \Tilde{M}_{k,s} \left( 2^{\frac{R_{k,s}^{\rm ul}}{B}} - 1 \right) \sigma^2 Q }{ |h_{k,s}^{\rm ul}|^2 R_{k,s}^{\rm ul} } + C_k |\mathcal{D}_k| \Omega_k f_k^2 + \xi_k \leq E_{k,0}, \\
& \sum_{s=1}^S \Tilde{M}_{k,s} \geq \Tilde{\gamma}_k M, \quad \forall k, \\
& 0 \leq \Tilde{\gamma}_k < 1, \quad \forall k, \\
& T_k^{\rm com,dl} \geq \frac{ \Tilde{M}_{k,s} Q }{ R_{k,s}^{\rm dl} }, \quad \forall (k,s), \\
& T_k^{\rm com,ul} \geq \frac{ \Tilde{M}_{k,s} Q }{ R_{k,s}^{\rm ul} }, \quad \forall (k,s),
\end{align*}
\noindent where $I = U'(n_1 + n_2) \mathcal{H}^2 \mathcal{G}^4$, $V = \sqrt{\frac{6C}{\delta \lambda}}$, and $a_k = \frac{2\lambda + \Lambda_{k,\min}}{2\lambda}$.
}

\begin{lemma}\label{lemma_convex}
    Problem $\mathscr P_2$ is convex.
\begin{proof}
    Please refer to Appendix \ref{proof_of_lemma_convex}.
\end{proof}
\end{lemma}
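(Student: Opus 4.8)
The plan is to establish convexity of $\mathscr{P}_2$ by treating the feasible region and the objective separately: a minimization problem is convex once I show its constraint set is convex and its objective is a convex function of the decision variables $\{\tilde\gamma_k\}$, $\{\tilde M_{k,s}\}$, $\{T_k^{\rm com,ul}\}$, $\{T_k^{\rm com,dl}\}$. Because $\mathscr{P}_2$ is solved for a fixed subcarrier assignment $z_{k,s}$ (a leaf of the B\&B tree), and the rates $R_{k,s}^{\rm ul}$, $R_{k,s}^{\rm dl}$ together with $f_k$ enter only as given constants, most of the work reduces to inspecting the algebraic form of each term.

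For the feasible region I would verify that every constraint is affine in the decision variables. The latency constraint adds the two communication-time variables to the constant $C_k|\mathcal{D}_k|/f_k$, hence is affine. In the energy constraint the coefficient $(2^{R_{k,s}^{\rm ul}/B}-1)\sigma^2 Q/(|h_{k,s}^{\rm ul}|^2 R_{k,s}^{\rm ul})$ multiplying $\tilde M_{k,s}$ is a positive constant and $C_k|\mathcal{D}_k|\Omega_k f_k^2+\xi_k$ is constant, so it is affine in $\tilde M_{k,s}$; the parameter constraint $\sum_s \tilde M_{k,s}\ge \tilde\gamma_k M$, the box $0\le\tilde\gamma_k<1$, and the epigraph constraints $T_k^{\rm com,dl}\ge \tilde M_{k,s}Q/R_{k,s}^{\rm dl}$ and $T_k^{\rm com,ul}\ge \tilde M_{k,s}Q/R_{k,s}^{\rm ul}$ are all linear. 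Hence the feasible set is polyhedral and therefore convex.

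For the objective the first term $\frac{I}{\eta}\sum_k \frac{|\mathcal{D}_k|}{|\mathcal{D}|}(1-\tilde\gamma_k)$ is affine, so I would focus on the square-root term. Using $2\gamma_k-\gamma_k^2=1-(1-\gamma_k)^2=1-\tilde\gamma_k^2$, the summand becomes $q_k(\tilde\gamma_k)=\frac{1}{a_k-\tilde\gamma_k^2}$ with $a_k=\frac{2\lambda+\Lambda_{k,\min}}{2\lambda}\ge 1$; since $\tilde\gamma_k\in[0,1)$ we get $a_k-\tilde\gamma_k^2>0$, so every term is well defined on the feasible box. It then remains to show that $g(\tilde\gamma)=\sqrt{\sum_{k=1}^K q_k(\tilde\gamma_k)}$ is jointly convex.

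The hard part will be exactly this composition, because the outer square root is concave, so the usual ``convex nondecreasing of convex'' rule does not apply to $\sqrt{h}$ with $h=\sum_k q_k$, even though each $q_k$ is itself convex ($q_k''=2(a_k+3x^2)/(a_k-x^2)^3>0$). I see two clean routes. The first writes $g=\|(\sqrt{q_1},\dots,\sqrt{q_K})\|_2$, notes that each component $\sqrt{q_k(x)}=(a_k-x^2)^{-1/2}$ is convex (its second derivative is $(a_k-x^2)^{-5/2}(a_k+2x^2)>0$), and then invokes the vector composition rule: the $\ell_2$-norm is convex and, on the nonnegative orthant, nondecreasing in each argument, so composing it with the convex nonnegative maps $\sqrt{q_k}$ preserves convexity. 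The second, more self-contained route forms the Hessian $\nabla^2 g=\tfrac{1}{2} h^{-3/2}\big(h\,D-\tfrac{1}{2} vv^\top\big)$ with $D=\mathrm{diag}(q_k'')$ and $v=(q_k')_k$, and reduces positive semidefiniteness, via Cauchy--Schwarz, to the scalar inequalities $2q_k q_k''\ge(q_k')^2$, which hold since $2q_kq_k''-(q_k')^2=4(a_k+2x^2)/(a_k-x^2)^4>0$. Either route makes the square-root term convex, and combined with the affine first term and the polyhedral feasible set this yields convexity of $\mathscr{P}_2$.
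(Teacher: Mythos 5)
Your proposal is correct, and it is both more complete and more general than the argument the paper actually gives. The paper takes the same overall decomposition you do (affine constraints, affine first term, convexity of the square-root term), but for the key step it only writes out the $2\times 2$ Hessian of $\sqrt{1/(a_1-\tilde\gamma_1^2)+1/(a_2-\tilde\gamma_2^2)}$ explicitly, asserts without justification that this matrix ``is non-negative,'' and waves at higher dimensions with the phrase ``generalized with inductions and definitions.'' Your second route is precisely the missing general-$K$ argument: writing $\nabla^2 g=\tfrac12 h^{-3/2}\bigl(hD-\tfrac12 vv^\top\bigr)$ for the separable inner function $h=\sum_k q_k$ and reducing positive semidefiniteness via Cauchy--Schwarz to the per-coordinate inequality $2q_kq_k''\ge (q_k')^2$, which you verify in closed form as $4(a_k+2x^2)/(a_k-x^2)^4>0$; this turns the paper's unproven claim into a proof. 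Your first route (viewing $g$ as the $\ell_2$-norm of the vector of convex nonnegative functions $(a_k-\tilde\gamma_k^2)^{-1/2}$ and invoking the vector composition rule) is a genuinely different and cleaner argument that avoids Hessian computations altogether; the paper does not use it. One small point worth making explicit in either route is that $a_k-\tilde\gamma_k^2>0$ on the feasible set, which follows from $a_k=1+\Lambda_{k,\min}/(2\lambda)\ge 1$ and $\tilde\gamma_k\in[0,1)$, so all the expressions are well defined; you note this in passing and it is the only domain condition needed.
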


\subsection{Global Optimal Solution}
As sub-problem $\mathscr P_2$ is convex, its optimal solution is obtained by the primal-dual method, which is summarized in Algorithm \ref{algo:global}. 

\begin{remark}
   Computational Complexity Analysis: In our B\&B-based method, the problem is solved iteratively. At each iteration, the primal problem, which has a complexity of $\mathcal{O}(K^2 S)$, is solved using the primal-dual method. Let $I_{\text{iter}}$ be the number of iterations for the master problem. Thus, the total complexity is $\mathcal{O}(I_{\text{iter}} \cdot K^2 S)$, reaching up to $\mathcal{O}(2^{KS} \cdot K^2 S)$ in the worst case. 
   
   % In the proposed B\&B-based method, the problem is solved iteratively, where at each iteration the primal problem is addressed using the primal-dual method due to its convexity. Consequently, the overall computational complexity involves solving both the master and primal problems. Let $I_{\text{iter}}$ denote the number of iterations required to solve the master problem. At each iteration, let ${O}_a$ represent the computational complexity order of solving the primal problem, which is $\mathcal{O}(K^2 S)$, where $K$ is the number of workers and $S$ is the number of subcarriers. Therefore, the total computational complexity of the B\&B-based method is on the order of $\mathcal{O}(I_{\text{iter}} \cdot O_a)$, which can reach up to $\mathcal{O}(2^{KS} \cdot K^2 S)$ in the worst case.
\end{remark}

\vspace{-0.5cm}
\begin{algorithm}[]\label{algo:global}
	\caption{B\&B-based Primal-dual Algorithm}\label{Alg:Solution}
	\LinesNumbered
	\KwIn{ \{$R_k^{\rm{dl}}$\},  \{$R_k^{\rm{ul}}$\}  \{$P_k^{\rm{com,ul}}$\}, \{$f_k$\}, $T_0$, \{$E_{k,0}$\}.} 

     \textbf{Initialize} \{$\zeta_s^{(0)}$\},  \{$\iota_k^{(0)}$\},\{$\chi_k^{(0)}$\}, \{$\kappa_k^{(0)}$\}, 
     \{$\phi_k^{(0)}$\}, \{$\psi_k^{(0)}$\}, the step sizes \{$\eta_{\zeta s}$\}, \{$\eta_{\iota k}$\}, \{$\eta_{\chi k}$\}, \{$\eta_{\kappa k}$\}, \{$\eta_{\phi k}$\}, \{$\eta_{\psi k}$\}, and $i=0$. \\
    \textbf{Loop}.\\   
    Update the multipliers as
    $$
    \zeta_s^{(i+1)}= \max \left\{ \zeta_s^{(i)} + \eta_{\zeta_s} \frac{\partial \mathcal{L}_{\mathrm{P2}} }{\partial \zeta_s}, \; 0 \right\}, 
    $$
    $$
    \iota_k^{(i+1)}= \max \left\{ \iota_k^{(i)} +  \eta_{\iota k} \frac{\partial \mathcal{L}_{\mathrm{P2}} }{\partial \iota_k},\; 0 \right\}, \forall k,
    $$    
    \[
    	\begin{split}
	\chi_k^{(i+1)} 
       = ~& \max \left\{ \chi_k^{(i)} + \eta_{\chi k}\frac{\partial \mathcal{L}_{\mathrm{P2}} }{\partial \chi_k}, \; 0 \right\}, \forall k,
	    \end{split}
    \]   
    \\
    Calculate \{$\Tilde{\gamma}_{k}$\} and \{$ \Tilde{M}_{k,s}$\}.\\ 
    Get \{$\gamma_{k}$\} and \{$ \hat{M}_{k,s}$\} using \eqref{equ:aul_val}. \\
    $i = i + 1$.\\
    \textbf{Until Convergence}.\\
    \KwOut{\{$\gamma_{k}$\} and \{$ \hat{M}_{k,s}$\}.}
\end{algorithm}

\subsection{Low-complexity suboptimal solution}
Although the proposed B\&B-based algorithm obtains the globally optimal solution, it may incur high computational complexity due to the iterative partitioning and search for the best-known feasible solution \cite{li2006nonlinear, 9140412}. To address this limitation, a low-complexity P-SCA-based algorithm is proposed as an alternative, aiming to strike a balance between computational efficiency and system performance.

Specifically, the non-convex constraint (C3) is converted to an equivalent form with continuous variables as follows:
\begin{equation}
     \mathcal{C}_3 \rightarrow 
    \begin{cases}
   \mathcal{C}_6: z_{k,s} (1 - z_{k,s}) \leq 0, \\
    \mathcal{C}_7: 0 \leq z_{k,s} \leq 1.
    \end{cases}
\end{equation}
By applying the penalty method, objective function of $\mathscr P_2$ is transformed to $\frac{I}{\eta}\sum_{k=1}^K \frac{  \left|\mathcal{D}_k\right|}{|\mathcal{D}|} (1-\Tilde{\gamma}_{k}) +  \frac{V\eta}{\sqrt{|\mathcal{D}|}} \sqrt{ \sum_{k=1}^K  \frac{1}{-\Tilde{\gamma}_{k}^2 + a_k }} + \tau \sum_{k=1}^K\sum_{s=1}^S (z_{k,s}-z_{k,s}^2)$. Approximating the objective function by using the first-order Taylor expansion. $z_{k,s}-z_{k,s}^2$ is rewritten as a difference-of-convex (DC) function shown as $z_{k,s}-z_{k,s}^2 = \Psi_1(z_{k,s}) - \Psi_2(z_{k,s})$, where $\Psi_1(z_{k,s})=z_{k,s}$, $\Psi_2(z_{k,s})=z_{k,s}^2$. Thus, the convex lower bound of $\Psi_2(z_{k,s})$ can be obtained by the first-order Taylor expansion, i.e., 
% \begin{equation}
%     \begin{aligned}
%              \bar{\Psi}_2(z_{k,s}) 
%              & = {\Psi}_2(\bar{z}_{k,s}) + \nabla {\Psi}_2(\bar{z}_{k,s}) (z_{k,s} - \bar{z}_{k,s})\\
%              & = 2 \bar{z}_{k,s}z_{k,s} - (\bar{z}_{k,s})^2,
%     \end{aligned}
% \end{equation}
\begingroup
\renewcommand{\arraystretch}{0.85}
\setlength{\jot}{-2pt}
\begin{equation}
    \begin{aligned}
        \bar{\Psi}_2(z_{k,s})
        &= \Psi_2(\bar{z}_{k,s}) + \nabla\Psi_2(\bar{z}_{k,s})\,(z_{k,s} - \bar{z}_{k,s}) \\
        &= 2\,\bar{z}_{k,s} z_{k,s} - (\bar{z}_{k,s})^2,
    \end{aligned}
\end{equation}
\endgroup
where $\bar{z}_{k,s}$ is represents a feasible solution to Problem $\mathscr P_2$. With $ \bar{\Psi}_2(z_{k,s}) $, $z_{k,s}-z_{k,s}^2 $ is approximated by its convex upper bound denoted as $z_{k,s}-z_{k,s}^2 \leq \Psi_1(z_{k,s}) - \bar{\Psi}_2(z_{k,s}) $.
% = z_{k,s}-2 \bar{z}_{k,s}z_{k,s} + (\bar{z}_{k,s})^2
An upper bound of the problem can be found by solving 
\begin{equation*} \label{equ:P_3}
\begin{aligned}
& \mathscr P_3    \min_{\substack{\{\Tilde{\gamma}_{k}\}, \{\Tilde{M}_{k,s}\}, \{{z}_{k,s}\}\\ \{T_{k}^{\rm {com,ul}}\},\{T_{k}^{\rm {com,dl}}\}}} 
         \frac{I}{\eta}\sum_{k=1}^K \frac{  \left|\mathcal{D}_k\right|}{|\mathcal{D}|} (1-\Tilde{\gamma}_{k}) \\
        &+  \frac{V\eta}{\sqrt{|\mathcal{D}|}} \sqrt{ \sum_{k=1}^K  \frac{1}{-\Tilde{\gamma}_{k}^2 + a_k }} + \tau \sum_{k=1}^K\sum_{s=1}^S (1 -2 \bar{z}_{k,s})z_{k,s} + (\bar{z}_{k,s})^2.
\end{aligned}
\end{equation*}
Note that problem $\mathscr P_3$ is convex and can be solved by using the primal-dual method as well.
% standard convex numerical solvers, such as CVX.

\begin{remark}
   Computational Complexity Analysis: solving problem $\mathscr P_3$ is with the complexity order of $\mathcal{O}(I_{\text{iter}}^{\prime} \cdot K^2 S)$, where $I_{\text{iter}}^{\prime}$ represents the iteration numbers.
\end{remark}

\section{Simulation Results}
We evaluate the superiority of the proposed FedLoDrop framework and compare it with other baseline methods on two major NLP tasks: natural language understanding (NLU) and question answering (QA). We first fine-tune two LLMs including RoBERTa large (355M) and LLaMA (7B) on two benchmarks including GLUE and MMLU. Following the configurations in the original LoRA paper \cite{hu2021lora}, the LoRA modules are applied to the self-attention layers. For RoBERTa large, LoRA rank is set as $4$ in a three-client cross-silo federated setting, and $r = 16$ with $10$ devices for LLaMA in \cite{sun2024improving,singhal2024exact}. \textcolor{black}{The local batch size, epoch and learning rate are set to be 64, 1, and 3e-4, respectively. The AdamW optimizer is adopted during the training.} The downlink channel gains of devices are assumed to be Rayleigh fading. The path loss is $10^{-3}$. For data distribution among clients, a common method of randomly sampling non-IID data for each client is employed, as implemented in \cite{zhang2024towards,lai2022fedscale}. \textcolor{black}{Experiments are performed using 4 NVIDIA RTX 4090 (24 GB each) GPUs.}

\subsection{Effects of Dropout Rate}
\label{subsect:fix_drop}

% \vspace{-0.6cm}
\begin{table*}[htbp]
\centering
\caption{Glue Benchmark}
\label{table:glue}
\begin{tabular}{|c|c|c|c|c|c|c|c|}
\hline
Method                                                                                             & CoLa (Mcc)                   & RTE                          & MRPC                         & SST-2                        & QNLI                         & STS-B (Corr)                 & Avg                          \\ \hline
FedIT in \cite{zhang2024towards}                                                   & 62.08                        & 77.26                        & \textbf{90.44}               & 95.99                        & 94.86                        & \textbf{91.54}               & 85.36                        \\ \hline
\begin{tabular}[c]{@{}c@{}}FedLoDrop\\ (0.2)\end{tabular}                                          & \textbf{63.56}               & 78.70                        & 89.46                        & \textbf{96.67}               & 94.95                        & 91.34                        & 85.78                        \\ \hline
\begin{tabular}[c]{@{}c@{}}FedLoDrop\\ (0.3)\end{tabular}                                          & 62.84                        & \textbf{81.6}                & 88.48                        & 96.56                        & \textbf{95.17}               & 90.84                        & \textbf{85.92}               \\ \hline
FFA in \cite{sun2024improving}                                                   & 58.32                        & 70.76                        & 87.49                        & 95.59                        & 93.85                        & 90.24                        & 82.71                        \\ \hline
\begin{tabular}[c]{@{}c@{}}FFA with Bernoulli Dropout\\ (0.2)\end{tabular}                         & 62.37                        & 74.73                        & 83.33                        & 95.76                        & 94.55                        & 89.12                        & 83.31                        \\ \hline
\begin{tabular}[c]{@{}c@{}}FFA with Bernoulli Dropout\\ (0.3)\end{tabular}                         & 61.83                        & 69.31                        & 78.18                        & 96.22                        & 94.29                        & 88.84                        & 81.45                        \\ \hline
{\color[HTML]{000000} \begin{tabular}[c]{@{}c@{}}FedIT with Gaussian Dropout\\ (0.2)\end{tabular}} & {\color[HTML]{000000} 62.08} & {\color[HTML]{000000} 77.98} & {\color[HTML]{000000} 90.20} & {\color[HTML]{000000} 96.22} & {\color[HTML]{000000} 94.25} & {\color[HTML]{000000} 91.72} & {\color[HTML]{000000} 85.41} \\ \hline
{\color[HTML]{000000} \begin{tabular}[c]{@{}c@{}}FedIT with Gaussian Dropout\\ (0.3)\end{tabular}} & {\color[HTML]{000000} 61.82} & {\color[HTML]{000000} 77.98} & {\color[HTML]{000000} 89.22} & {\color[HTML]{000000} 95.41} & {\color[HTML]{000000} 93.85} & {\color[HTML]{000000} 91.47} & {\color[HTML]{000000} 84.96} \\ \hline
\end{tabular}\vspace{-0.6cm}
\end{table*}

\begin{figure*}[h] 
    \begin{center}
    \subfigure[Accuracy v.s. dropout rate on RoBERTa large]
    {
    \label{Fig:FedLoDrop_rate_Bert}
    \includegraphics[width=0.4\textwidth]{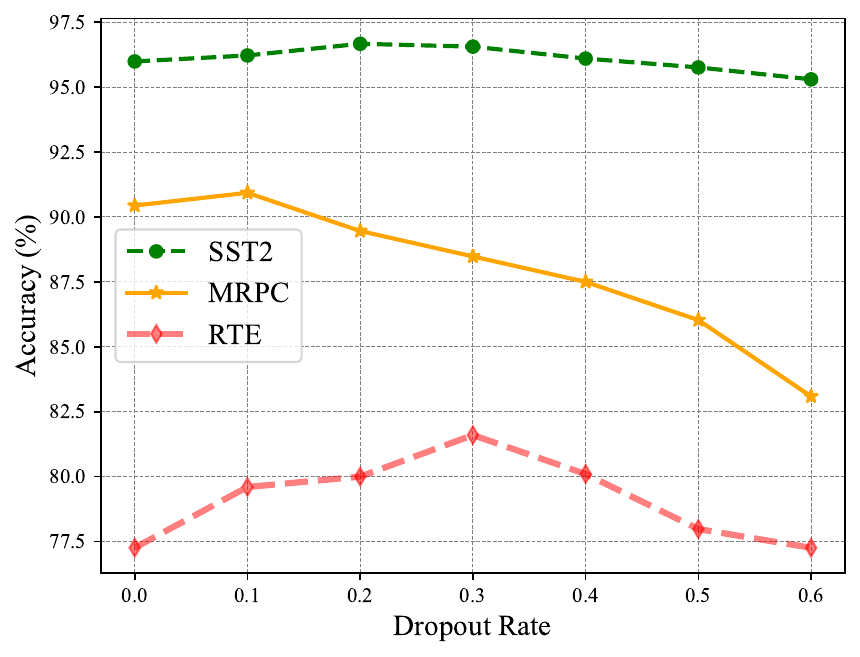}
    }    
    \subfigure[Accuracy v.s. dropout rate on LLaMA]
    {
    \label{Fig:FedLoDrop_rate_LLA}
    \includegraphics[width=0.4\textwidth]{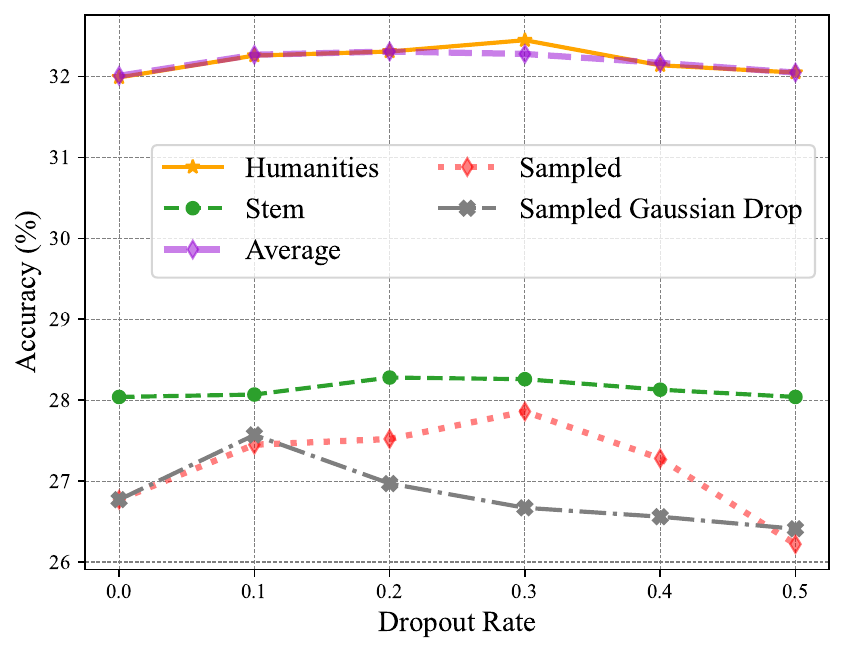}
    } 
    \caption{Effects of dropout rate on the performance.}% 
    \label{Fig:FedLoDrop_rate}
    \end{center}\vspace{-0.6cm}
\end{figure*}

\begin{figure*}[h] 
    \begin{center}
    \subfigure[Accuracy v.s. per-round latency on RoBERTa large]
    {
    \label{Fig:FedLoDrop_delay_rober}
    \includegraphics[width=0.4\textwidth]{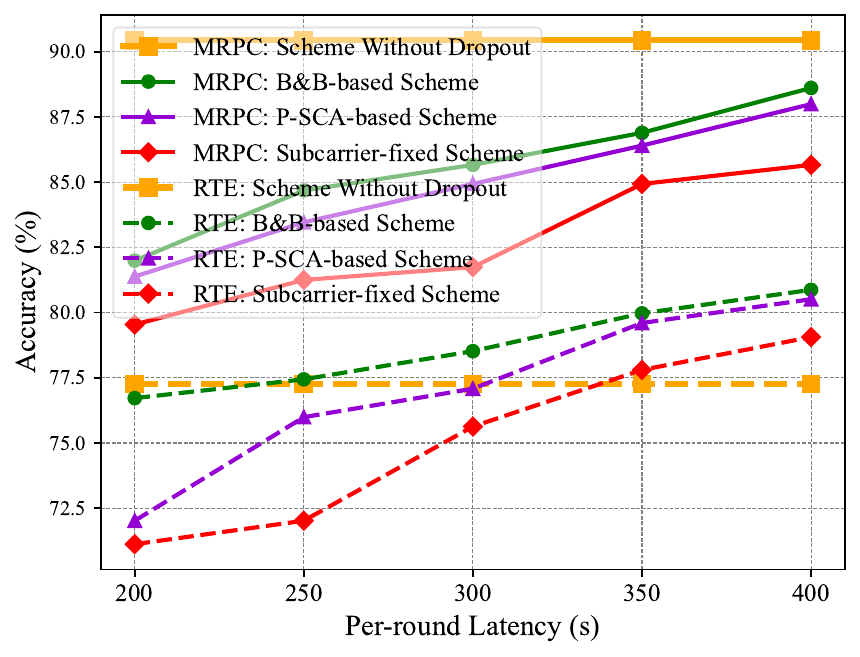}
    }    
    \subfigure[Accuracy v.s. per-round latency on LLaMA]
    {
    \label{Fig:FedLoDrop_delay_LLA}
    \includegraphics[width=0.4\textwidth]{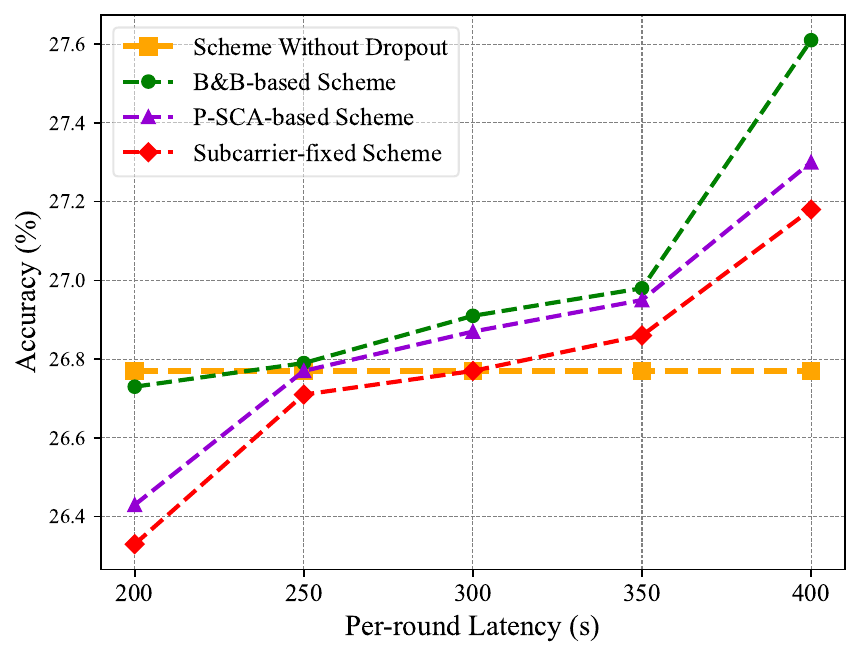}
    } 
    \caption{Effects of per-round latency on the performance.}% 
    \label{Fig:FedLoDrop_delay}
    \end{center}\vspace{-0.8cm}
\end{figure*}

\begin{figure}[htbp]
  \centering
  \includegraphics[width=0.45\textwidth]{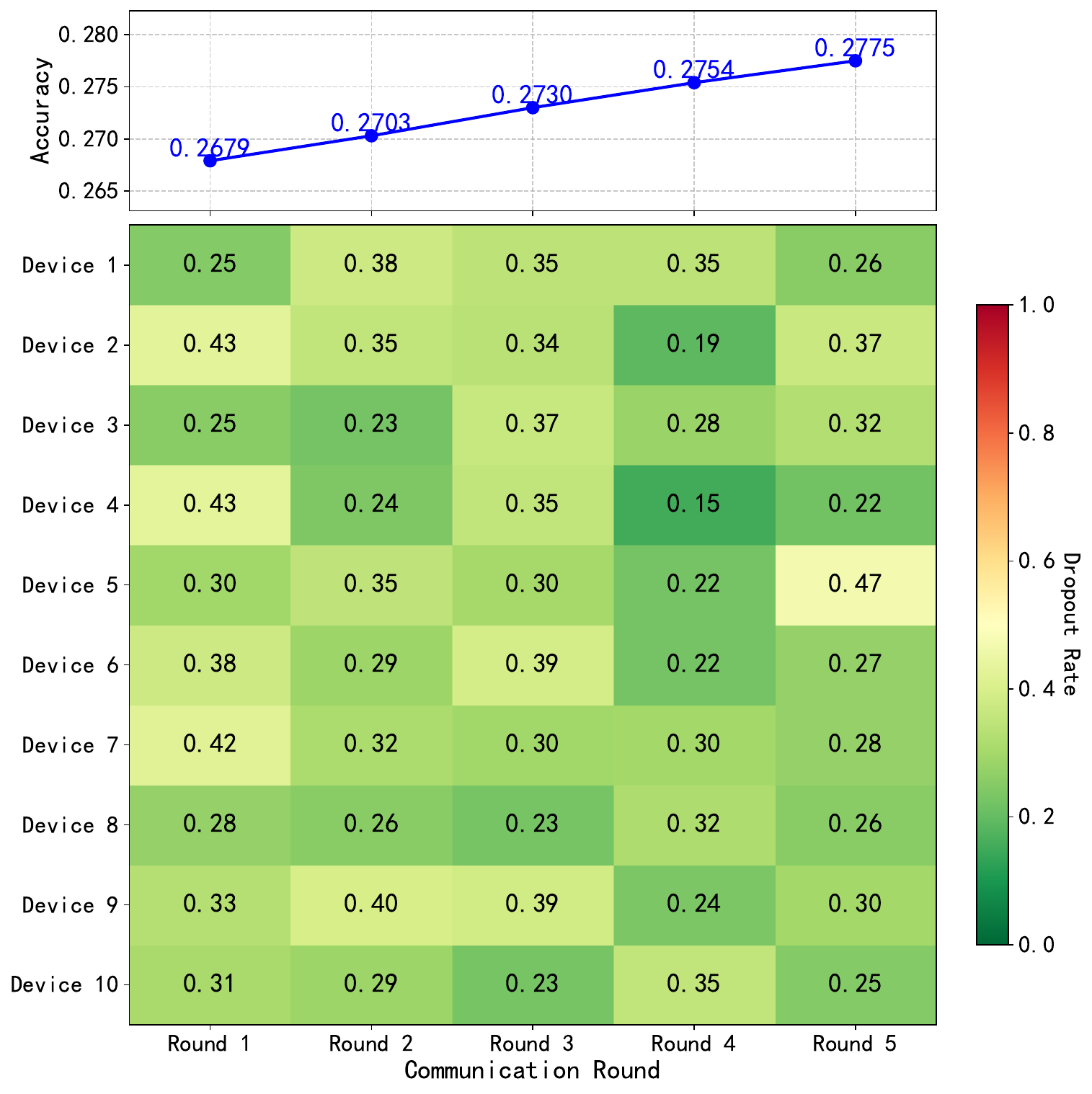}
  \caption{The evolution of the optimized dropout rates under different rounds.}
  \label{Fig:drop_change}\vspace{-0.7cm}
\end{figure}

\begin{figure*}[h] 
    \begin{center}
    \subfigure[Testing accuracy v.s. communication round]
    {
    \label{Fig:Convergence performance acc}
    \includegraphics[width=0.4\textwidth]{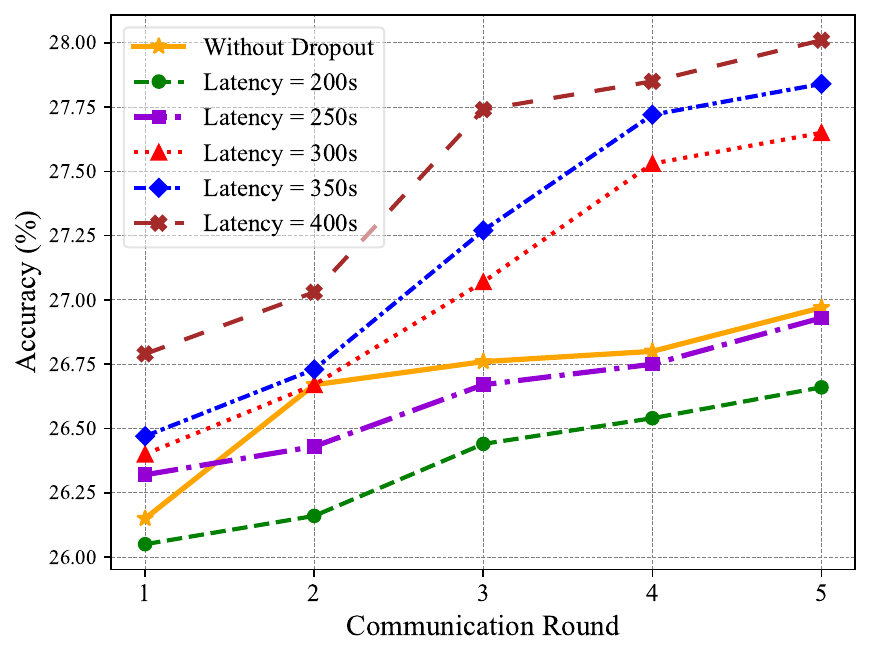}
    }    
    \subfigure[Training loss v.s. epoch]
    {
    \label{Fig:Convergence performance loss}
    \includegraphics[width=0.5\textwidth]{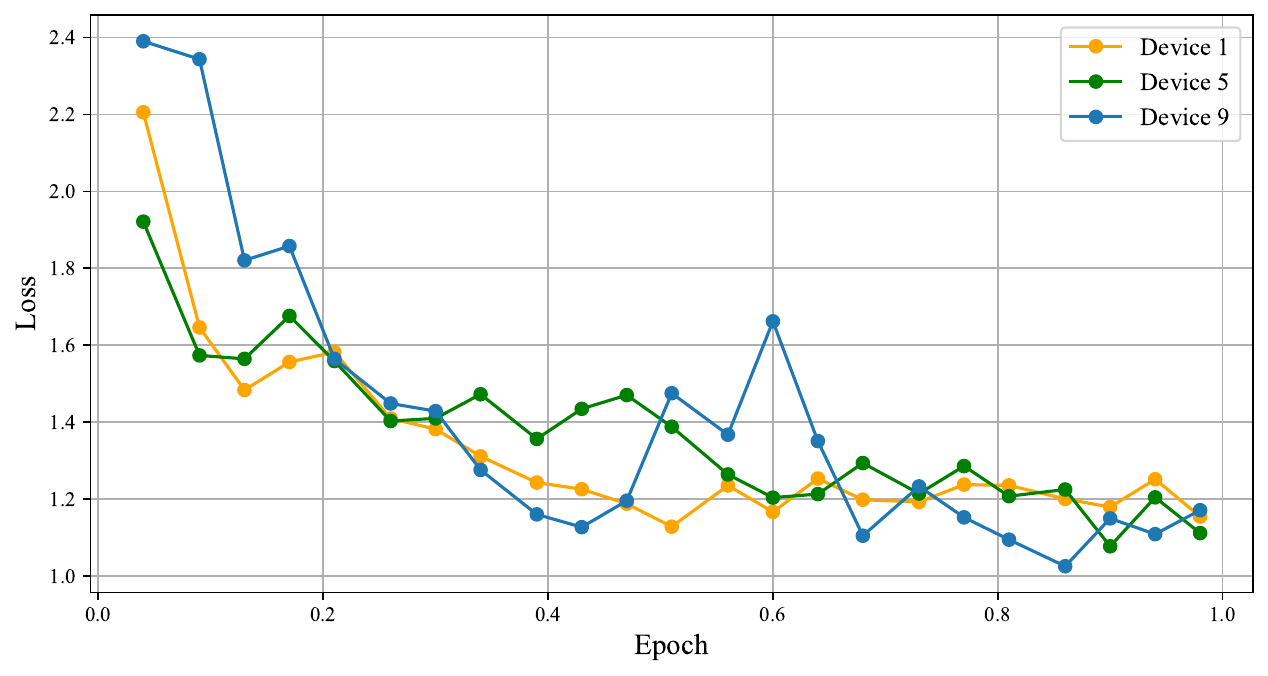}
    } 
    \caption{Convergence performance.}% 
    \label{Fig:Convergence performance}
    \end{center}\vspace{-0.6cm}
\end{figure*}

\begin{figure*}[h] 
    \begin{center} 
    \subfigure[Effects of number of devices on RoBERTa large]
    {
    \label{Fig:devicenum}
    \includegraphics[width=0.31\textwidth]{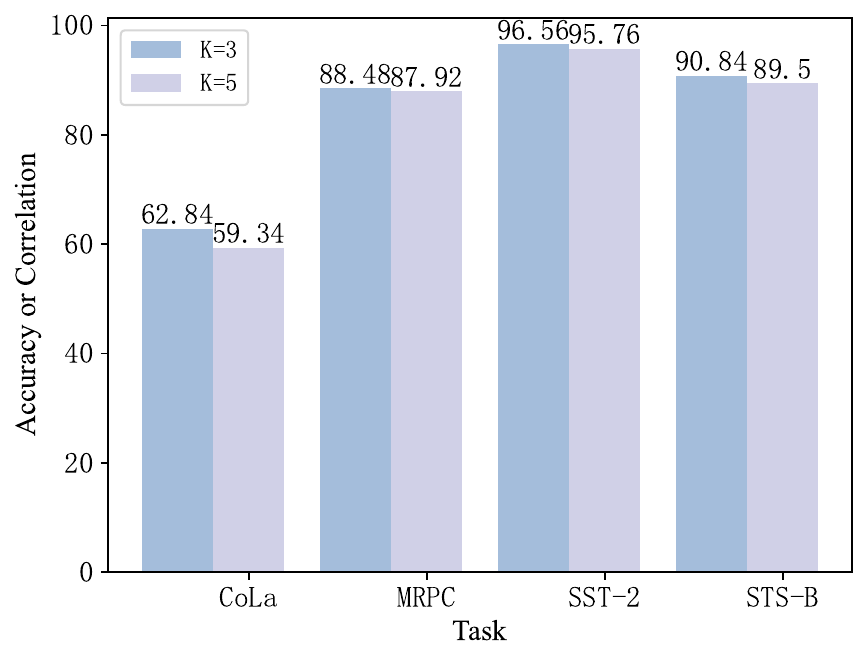}
    }
    \textcolor{black}{
    \subfigure[Effects of number of devices on LLaMA (7B)]
    {
    \label{Fig:devicenum-llama}
    \includegraphics[width=0.31\textwidth]{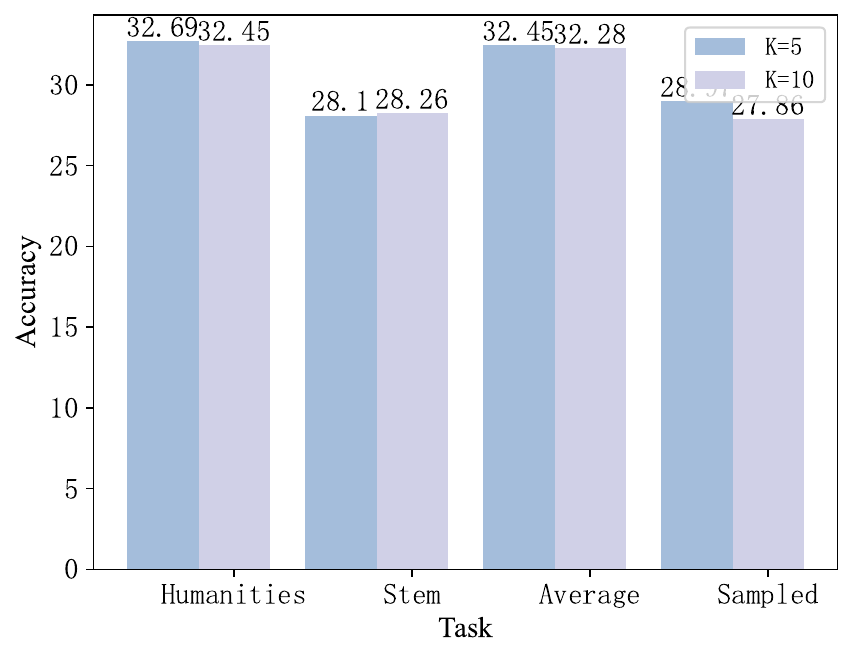}
    } 
    }
    \subfigure[\textcolor{black}{Effects of ranks}]
    {
    \label{Fig:rank}
    \includegraphics[width=0.31\textwidth]{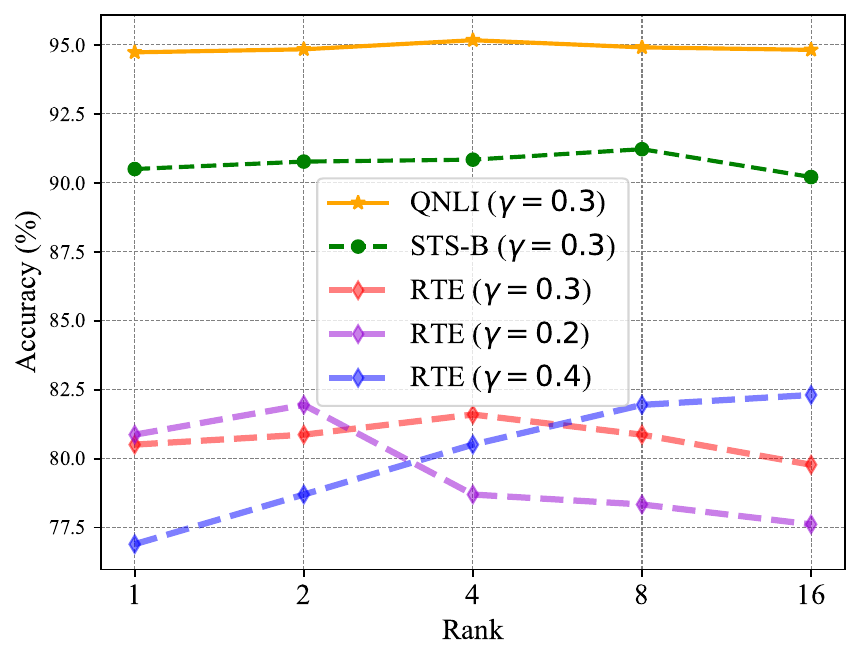}
    } 
    \textcolor{black}{\caption{Effects of parameters on the performance.}}% 
    \label{fig:para}
    \end{center}\vspace{-0.7cm}
\end{figure*}

\textcolor{black}{Fixed-rate experiments are conducted for ablation and sensitivity analysis, which establish a baseline performance landscape to compare how different levels of regularization (dropout rates) affect final model accuracy and convergence.} Table \ref{table:glue} presents the results of RoBERTa-large on the GLUE benchmark. It can be observed that models incorporating the dropout method generally achieve superior performance, suggesting that the proposed FedLoDrop framework helps LoRA-based models control overfitting and enhance generalization on downstream tasks. The framework is also applicable to FFA \cite{sun2024improving}. In contrast to FedLoDrop, when the dropout rate is set to $0.3$, the average score begins to decline. This is attributed to fewer parameters being updated under this setting, which limits model adaptability and consequently reduces performance. 

To visualize the trend, we plot the effects of dropout rate on the performance, as shown in Fig. \ref{Fig:FedLoDrop_rate}. For LLaMA, we selected 1444 samples from the dataset for a quick and comprehensive evaluation, noted as Sampled. For MMLU, among 57 disciplines, Humanities, Stem, and Average are plotted. In both figures, as the dropout rate increases, the accuracy first improves and then drops. This aligns with the theoretical derivation that a proper dropout rate would help balance the adaptation function's empirical risk minimization and complexity. A small dropout rate might fail to introduce sufficient sparsity and lead to overfitting, while an excessively large dropout rate would result in too few trainable parameters, making the adapter lose its expressive power. \textcolor{black}{This contrast underscores the necessity of finding an optimal, and potentially dynamic, dropout strategy.}

\textcolor{black}{
To evaluate sensitivity to the dropout distribution, we compare Bernoulli and Gaussian dropout. Gaussian dropout multiplies activations by a continuous random variable drawn from $\mathcal{N}(1, \sigma_{gaus}^2)$ instead of setting them to zero. The results on RoBERTa-large and LLaMA models are presented in Table \ref{table:glue} and Fig. \ref{Fig:FedLoDrop_rate_LLA}, respectively. The comparable performance of both methods demonstrates the robustness to the specific implementation of dropout. Bernoulli dropout provides more stable results across a wider range of rates, while Gaussian dropout yields slightly better performance at lower rates (e.g., 0.2 for RoBERTa-large, 0.1 for LLaMA), likely because its continuous noise injection prevents the complete silencing of neurons. In contrast, higher dropout rates lead to exploding variance, potentially destabilizing the training process.
}

\subsection{Effects of Network Resources}
To verify the superiority of the proposed two schemes, the following benchmark schemes are compared under different network resources. 
\emph{1) Scheme Without Dropout:} Sufficient resource is allocated and dropout is not considered, which is an ideal benchmark used to compare with the proposed schemes. 
\emph{2) Subcarrier-fixed Scheme:} Subcarrier is fixed, and then optimize the remaining variables the same as the proposed ones.
Fig. \ref{Fig:FedLoDrop_delay} illustrates the model accuracy v.s. different per-round latencies. \textcolor{black}{It can be observed that, under a given per-round latency threshold, both the proposed B\&B-based and the P-SCA-based algorithms outperform the subcarrier-fixed scheme when the constraint is satisfied.} The P-SCA-based algorithm is capable of obtaining a suboptimal solution with performance close to that of the optimal method. Furthermore, as the allowable per-round training delay increases, the model performance tends to improve. This is attributed to a looser constraint allowing for a lower dropout rate, which effectively reduces the noise introduced during training and thereby enhances overall performance.

\textcolor{black}{
The evolution of the optimized dropout rates for a subset of clients over the training rounds can be shown in Fig. \ref{Fig:drop_change}. Fig. \ref{Fig:drop_change} is plotted under the per-round latency of 400s. This visualization clearly shows how our algorithm dynamically adjusts the dropout rate (regularization strength) per device based on their real-time CSI and state. Simultaneously, the testing accuracy improves as the number of communication rounds increases, indicating effective learning despite adaptive parameter reduction.
}

\textcolor{black}{
In addition to reporting the final accuracy to evaluate generalization performance, we present the global testing accuracy v.s. communication round and the device training loss v.s. epoch to illustrate the convergence behavior, as shown in Fig. \ref{Fig:Convergence performance acc} and Fig. \ref{Fig:Convergence performance loss}, respectively. For clarity, we plot the training loss of three representative devices. The standard federated LoRA scheme is denoted by 'Without Dropout'. It can be observed that the convergence rate improves under a looser per-round latency constraint, which aligns with the theoretical analysis in Section~\ref{subsect:conver}.
}

\textcolor{black}{
\subsection{Effects of Resource Utilization Efficiency}
}

\textcolor{black}{
\subsubsection{Per-round Latency}
As shown in Fig. \ref{Fig:FedLoDrop_delay_rober} and Fig. \ref{Fig:FedLoDrop_delay_LLA},  given a fixed target accuracy, the proposed B\&B-based and P-SCA-based algorithms achieve the same performance with shorter per-round latency compared to the subcarrier-fixed scheme. This improvement is attributed to the reduced number of transmitted parameters. Notably, when targeting a higher accuracy (e.g., 27.2\% on LLaMA), the subcarrier-fixed scheme and the scheme without dropout fail to achieve the desired performance (27.2\% testing accuracy), highlighting the limitations of fixed resource allocation under stringent latency and accuracy requirements.
}

\textcolor{black}{
\subsubsection{Communication Overhead}
To quantify communication overhead, we compare the communicated parameters of the full model FT, FedIT, and FedLoDrop across 5 communication rounds. The results show that FedLoDrop achieves the lowest communication cost, transmitting only 7.6\% of the parameters required by full FT, while FedIT transmits 9.5\%.  By transmitting only a subset of the model parameters, FedLoDrop significantly reduces the volume of data exchanged per round compared to full FT approaches.
% \vspace{-0.6cm}
% \begin{table}[htbp]
%     \centering
%     \caption{The ratio of communicated parameter numbers to Full FT.}
%     \label{table:parameter}
%     \begin{tabular}{lcc} % l表示左对齐，c表示居中，r表示右对齐
%         \toprule
%         FT & FedIT & FedLoDrop \\
%         \midrule
%         1 & 0.095   & 0.076   \\
%         \bottomrule
%     \end{tabular}\vspace{-0.7cm}
% \end{table}
}

\subsection{Effects of Parameters and Models}

Fig. \ref{fig:para} presents the effects of parameters on the performance. From Fig. \ref{Fig:devicenum} and Fig. \ref{Fig:devicenum-llama}, one can see that in both models, a larger number of devices generally results in a lower testing accuracy due to the difficulty in adapting to one global model. \textcolor{black}{We further evaluate the sampled setting with 50 devices on LLaMA, where the accuracy decreases modestly to 26.57. As shown in Fig. \ref{Fig:rank}, with $\gamma = 0.3$, performance first improves with rank but slightly declines when the rank becomes large due to overfitting. For RTE, we further include results under dropout rates of 0.2 and 0.4. It can be observed that insufficient dropout (0.2) leads to overfitting at high ranks, while strong dropout (0.4) can initially cause underfitting but enables higher ranks to be effectively utilized.}

\textcolor{black}{
As a final stress test for FedLoDrop, we scale up to Yi-34B. Due to the high training cost, we only report the results for given dropout rates. The LoRA rank is set as $16$, with the learning rate of $3e-4$. Experiments are performed on 2 NVIDIA A100 (40 GB each) GPUs. As shown in Table \ref{table:yi34B}, FedLoDrop with 0.2 dropout rate achieves the best performance, consistent with the findings in Section \ref{subsect:fix_drop}. For brevity and due to space constraints, we omit the full results.
\vspace{-0.7cm}
\begin{table}[htbp]
    \centering
    \caption{Accuracy v.s. dropout rate on Yi-34B.}
    \label{table:yi34B}
    \begin{tabular}{lcccc} % l表示左对齐，c表示居中，r表示右对齐
        \toprule
        Dropout Rate & 0 & 0.1 & 0.2 & 0.3 \\
        \midrule
        Accuracy &  43.04  & 43.35  & \textbf{43.39}   &  42.54\\
        \bottomrule
    \end{tabular}\vspace{-0.7cm}
\end{table}
}

\section{Conclusion}
In this paper, we propose a new framework, called FedLoDrop, which effectively incorporates a parameter-efficient finetuning technique, known as LoRA, to facilitate local training.  This method reduces computational and communication overheads for local edge devices that have limited resources. We analytically demonstrate the trade-off between underfitting and overfitting by deriving mechanisms that elucidate these dynamics. The application of dropout alleviates the imbalanced update of the parameter matrix and mitigates parameter overfitting in LoRA. Specifically, PHS-based analysis reveals that increasing the dropout rate narrows the gap between empirical and generalization errors but also increases empirical error. Based on this insight, we propose two schemes, B\&B-based and P-SCA-based algorithms to minimize the generalization error bound by jointly optimizing the dropout rate and network resource allocation, thereby enhancing the learning performance. Frameworks for collaborative federated sensing and communication systems will be left for future directions. \textcolor{black}{Besides, in future work, the proposed FedLoDrop can be extended to multi-modal large language models (MLLMs) to dynamically adjust local dropout strategies based on modality importance or client-specific resource constraints.}

\vspace{-0.3cm}
\appendices
\section{Proof Of Theorem \ref{theorem:phs}}\label{proof_of_theorem}
\subsection{Proof of Theorem \ref{theorem:FedLoDrop_PHS}}

\begin{lemma} \label{Lem:regul}
    Consider the learning algorithm M optimizing the following loss function:
    \begin{equation}
        \min_{\boldsymbol{\theta}_{k,t}} \ell_{k,\lambda_t}({\boldsymbol{\theta}}_{k,t}) =  \min_{\boldsymbol{\theta}_{k,t}} \ell_{k,t}({\boldsymbol{\theta}}_{k,t}) + \lambda_t \| {\boldsymbol{\theta}}_{k,t} - \boldsymbol{\theta}_0 \|_2^2.
    \end{equation}
    If the requirements in Definition \ref{definition} are met, then it has PHS $\beta=\frac{2\eta^2}{ (\Lambda_{k,t,\text{min}} +2\lambda_t) \left|\mathcal{D}_k\right|}$, which is
    \begin{equation}
        \begin{aligned}
                    &     \mathbb{E}_{\mathcal{D}_k, j \sim U(n)}\left|\ell_{k, \lambda_t} \left( \boldsymbol{\theta}_{ \ell_{k,\lambda_t}}(\mathcal{D}_k^j),\boldsymbol{x}_j \right)- \ell_{k,\lambda_t} \left( \boldsymbol{\theta}_{ \ell_{k,\lambda}}(\mathcal{D}_k),\boldsymbol{x}_j \right) \right| \\
                    & \quad\quad \leq \frac{2\eta^2}{ (\Lambda_{k,t,\text{min}} +2\lambda_t) \left|\mathcal{D}_k\right|}.
        \end{aligned}
    \end{equation}
\end{lemma}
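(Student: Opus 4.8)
The plan is to reduce the pointwise hypothesis stability to a bound on how far the minimizer moves when a single training sample is deleted, and then to control that displacement using the strong convexity injected by the regularizer $\lambda_t\|\boldsymbol{\theta}-\boldsymbol{\theta}_0\|_2^2$. Write $n\triangleq|\mathcal{D}_k|$, $\boldsymbol{\theta}_S\triangleq\boldsymbol{\theta}_{\ell_{k,\lambda_t}}(\mathcal{D}_k)$, $\boldsymbol{\theta}_{S^j}\triangleq\boldsymbol{\theta}_{\ell_{k,\lambda_t}}(\mathcal{D}_k^j)$, and $\mu_{k,t}\triangleq\Lambda_{k,t,\text{min}}+2\lambda_t$. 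Since the per-sample loss is $\eta$-Lipschitz in $\boldsymbol{\theta}$, the quantity inside the PHS expectation obeys $|\ell(\boldsymbol{\theta}_{S^j},\boldsymbol{x}_j)-\ell(\boldsymbol{\theta}_S,\boldsymbol{x}_j)|\leq\eta\|\boldsymbol{\theta}_{S^j}-\boldsymbol{\theta}_S\|_2$, so it suffices to prove $\|\boldsymbol{\theta}_{S^j}-\boldsymbol{\theta}_S\|_2\leq 2\eta/(\mu_{k,t}n)$. Because this estimate is uniform over $\mathcal{D}_k$ and $j$, the expectation $\mathbb{E}_{\mathcal{D}_k,\,j\sim U(n)}$ of the absolute value is bounded by the same constant, yielding the claim.

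First I would establish a local strong-convexity inequality for the regularized objective. A second-order Taylor expansion of $\ell_{k,\lambda_t}$ about its minimizer $\boldsymbol{\theta}_S$, where $\nabla\ell_{k,\lambda_t}(\boldsymbol{\theta}_S)=0$, gives $\ell_{k,\lambda_t}(\boldsymbol{\theta})-\ell_{k,\lambda_t}(\boldsymbol{\theta}_S)\approx\tfrac12(\boldsymbol{\theta}-\boldsymbol{\theta}_S)^\top\nabla^2\ell_{k,\lambda_t}(\boldsymbol{\theta}_S)(\boldsymbol{\theta}-\boldsymbol{\theta}_S)$. The data-fit Hessian has smallest eigenvalue $\Lambda_{k,t,\text{min}}$ and the regularizer contributes $2\lambda_t\boldsymbol{I}$, so the condition $\lambda_t\geq-\tfrac12\Lambda_{k,t,\text{min}}$ ensures $\mu_{k,t}>0$ and the objective is $\mu_{k,t}$-strongly convex on the neighborhood containing both minimizers. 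This furnishes the one-sided lower bound $\ell_{k,\lambda_t}(\boldsymbol{\theta}_{S^j})-\ell_{k,\lambda_t}(\boldsymbol{\theta}_S)\geq\tfrac{\mu_{k,t}}{2}\|\boldsymbol{\theta}_{S^j}-\boldsymbol{\theta}_S\|_2^2$.

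Next I would produce a matching upper bound on the same gap. Under the standard empirical normalization the leave-one-out regularized objective differs from $\ell_{k,\lambda_t}$ only through the deleted data-fit term, $\ell_{k,\lambda_t}(\boldsymbol{\theta})-\ell_{k,\lambda_t}^{(j)}(\boldsymbol{\theta})=\tfrac1n\ell(\boldsymbol{\theta},\boldsymbol{x}_j)$, because the regularizer is dataset-independent. Optimality of $\boldsymbol{\theta}_{S^j}$ for $\ell_{k,\lambda_t}^{(j)}$ gives $\ell_{k,\lambda_t}^{(j)}(\boldsymbol{\theta}_{S^j})-\ell_{k,\lambda_t}^{(j)}(\boldsymbol{\theta}_S)\leq 0$, so $\ell_{k,\lambda_t}(\boldsymbol{\theta}_{S^j})-\ell_{k,\lambda_t}(\boldsymbol{\theta}_S)\leq\tfrac1n[\ell(\boldsymbol{\theta}_{S^j},\boldsymbol{x}_j)-\ell(\boldsymbol{\theta}_S,\boldsymbol{x}_j)]\leq\tfrac{\eta}{n}\|\boldsymbol{\theta}_{S^j}-\boldsymbol{\theta}_S\|_2$ by Lipschitzness. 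Chaining the lower and upper bounds and cancelling one power of $\|\boldsymbol{\theta}_{S^j}-\boldsymbol{\theta}_S\|_2$ yields $\|\boldsymbol{\theta}_{S^j}-\boldsymbol{\theta}_S\|_2\leq 2\eta/(\mu_{k,t}n)$, and substituting this into the Lipschitz estimate of the first paragraph closes the argument. Note the factor $2$ originates precisely from the $\tfrac12$ in the strong-convexity lower bound.

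The hard part will be justifying the passage from the local Hessian condition at the optimum to the strong-convexity inequality used above, since for a non-convex LLM loss $\ell_{k,\lambda_t}$ is not globally strongly convex. I would handle this by confining every Taylor expansion to the neighborhood in which $\boldsymbol{\theta}_S$ and $\boldsymbol{\theta}_{S^j}$ both lie and treating the higher-order remainder as negligible there, which is exactly the regime licensed by the hypothesis that $\boldsymbol{\theta}_{\ell_{k,\lambda_t}}(\mathcal{D}_k)$ is close to $\boldsymbol{\theta}_{\ell_{k,\lambda_t}}(\mathcal{D}_k^j)$. A secondary, purely bookkeeping subtlety is the normalization of the leave-one-out empirical loss: using the $\tfrac1n$ convention lets the single removed term contribute cleanly, and any $\tfrac1{n-1}$ variant only perturbs the constants at order $1/n^2$, which does not affect the stated bound.
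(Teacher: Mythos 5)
Your proposal is correct and follows essentially the same route as the paper's proof: a second-order Taylor expansion at the minimizer yielding local strong convexity with modulus $\Lambda_{k,t,\text{min}}+2\lambda_t$, an upper bound on the loss gap from optimality of the leave-one-out minimizer, and the $\eta$-Lipschitz property to convert the resulting displacement bound $\|\boldsymbol{\theta}_{S^j}-\boldsymbol{\theta}_S\|_2\leq 2\eta/\bigl((\Lambda_{k,t,\text{min}}+2\lambda_t)|\mathcal{D}_k|\bigr)$ into the stated PHS bound. The only cosmetic difference is your $\tfrac{1}{n}$ normalization of the leave-one-out objective versus the paper's $\tfrac{1}{n-1}$ decomposition, which, as you note, does not affect the final constant.
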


\begin{proof}
   Denote $\boldsymbol{\theta}_{ \ell_{k,\lambda_t}}(\mathcal{D}_k)$ as $\Hat{\boldsymbol{\theta}}_{k,t}$, and $\Delta \hat{\boldsymbol{\theta}}_{k,t} = \hat{\boldsymbol{\theta}}_{k,t} - \boldsymbol{\theta}$. Consider the second-order Taylor expansion of $\ell_{k,\lambda_t}$ at local optimal $\hat{\boldsymbol{\theta}}_{k,t}$, $\nabla_{\ell_{k, \lambda_t}}(\hat{\boldsymbol{\theta}}_{k,t})=0$. Thus, for $\forall v$ close to $\hat{\boldsymbol{\theta}}_{k,t}$, 
   
% \begin{figure*}[htbp]
%     \centering
    \begin{equation} \label{equ:taylor}
    \begin{aligned}
        \ell_{k, \lambda_t}(v) 
        &= \ell_{k,\lambda_t}(\hat{\boldsymbol{\theta}}_{k,t}) + \frac{1}{2}(v-\hat{\boldsymbol{\theta}}_{k,t})^{\top}\nabla^2{\ell_{k,\lambda_t}}(\hat{\boldsymbol{\theta}}_{k,t})(v-\hat{\boldsymbol{\theta}}_{k,t}).
    \end{aligned}
\end{equation}
% \vspace*{8pt}
% \hrulefill
% \end{figure*}

Then, expand $\nabla^2{\ell_{k, \lambda_t}}(\hat{\boldsymbol{\theta}}_k)$ in \eqref{equ:taylor}, 
\begin{equation} 
    \begin{aligned}
        & \nabla^2{\ell_{k, \lambda_t}}(\hat{\boldsymbol{\theta}}_k)
        = \nabla^2_{\hat{\boldsymbol{\theta}}_{k,t}}(\ell_k(\hat{\boldsymbol{\theta}}_{k,t})+ \lambda_t \| \hat{\boldsymbol{\theta}}_{k,t} - \boldsymbol{\theta}_0 \|_2^2)\\
        & = \nabla^2 \ell_k(\hat{\boldsymbol{\theta}}_{k,t}) + 2\lambda_t I = U_{k,t} \text{diag}(\Lambda_{k,t})U_{k,t}^{-1} + 2\lambda_t I\\
        & = U_{k,t} (\text{diag}(\Lambda_{k,t})+ 2\lambda_t I ) U_{k,t}^{-1} \\
        & = U_{k,t} \text{diag}(\sqrt{\Lambda_{k,t,1}+2\lambda_t}, \cdots, \sqrt{\Lambda_{k,t,d}+2\lambda_t}) U_{k,t}^{-1} U_{k,t} \\
        & \quad \text{diag}(\sqrt{\Lambda_{k,t,1}+2\lambda_t}, \cdots, \sqrt{\Lambda_{k,t,d}+2\lambda_t}) U_{k,t}^{-1}.
    \end{aligned}
\end{equation}
Take this back to \eqref{equ:taylor},
\begin{equation} \label{equ:basic1}
    \begin{aligned}
        & \ell_{k, \lambda_t}(v) 
        - \ell_{k, \lambda_t}(\hat{\boldsymbol{\theta}}_{k,t}) \\ 
        & = \frac{1}{2} \| (U_{k,t} \text{diag}(\sqrt{\Lambda_{k,t,1}+2\lambda_t}, \cdots, \sqrt{\Lambda_{k,t,d}+2\lambda_t}) U_{k,t}^{-1}) \\
         & \quad (v-\hat{\boldsymbol{\theta}}_{k,t}) \|_2^2 \quad \geq \frac{1}{2} (\Lambda_{k,t,\text{min}} +2\lambda_t) \|  v-\hat{\boldsymbol{\theta}}_{k,t} \|_2^2.
    \end{aligned}
\end{equation}
According to the definition of $\ell_{k,\lambda_t}(\hat{\boldsymbol{\theta}}_{k,t})$, for $\forall u,v$ close to $\hat{\boldsymbol{\theta}}_{k,t}$, \eqref{equ:basic2} is got.
\begin{figure*}[htbp]
    \centering
    \begin{equation}\label{equ:basic2}
        \begin{aligned}
            & \ell_{k, \lambda_t}(u) 
        - \ell_{k, \lambda_t}(v) 
        = \left(\frac{1}{\left|\mathcal{D}_k\right|} \sum_{x_i\in\mathcal{D}_k}\ell_{k} \left( u,\boldsymbol{x}_i \right)+ \lambda_t\|u-{\boldsymbol{\theta}}_0 \|_2^2 \right)- \left(\frac{1}{\left|\mathcal{D}_k\right|} \sum_{x_i\in\mathcal{D}_k}\ell_{k} \left( v,\boldsymbol{x}_i \right)+ \lambda_t\|v-{\boldsymbol{\theta}}_0 \|_2^2 \right)\\
        % &= \left(\frac{1}{\left|\mathcal{D}_k\right|} \sum_{\boldsymbol{x}_i\in\mathcal{D}_k, i\neq j}\ell_{k} \left( u,\boldsymbol{x}_i \right)+ \lambda_t\|u-{\boldsymbol{\theta}}_0 \|_2^2 \right) -\left(\frac{1}{\left|\mathcal{D}_k\right|} \sum_{\boldsymbol{x}_i\in\mathcal{D}_k, i\neq j}\ell_{k} \left(\boldsymbol{x}_i; v \right)+ \lambda_t\|v-{\boldsymbol{\theta}}_0 \|_2^2 \right) + \frac{\ell_{k} \left(u,\boldsymbol{x}_j \right)- \ell_{k} \left(v,\boldsymbol{x}_j \right)}{\left|\mathcal{D}_k\right|}\\
        &= \left(1-\frac{1}{\left|\mathcal{D}_k\right|}\right) \left(\frac{1}{\left|\mathcal{D}_k\right|-1} \sum_{ i\neq j}\ell_{k} \left( u,\boldsymbol{x}_i \right)+ \lambda_t\|u-{\boldsymbol{\theta}}_0 \|_2^2 \right) - \left(1-\frac{1}{\left|\mathcal{D}_k\right|}\right) \left(\frac{1}{\left|\mathcal{D}_k\right|-1} \sum_{ i\neq j}\ell_k \left( v,\boldsymbol{x}_i \right)+ \lambda_t\|v-{\boldsymbol{\theta}}_0 \|_2^2 \right)\\
        & \quad\quad + \dfrac{\lambda_t (\|u-{\boldsymbol{\theta}}_0 \|_2^2-\|v-{\boldsymbol{\theta}}_0 \|_2^2)}{\left|\mathcal{D}_k\right|}  + \frac{\ell_k \left( u,\boldsymbol{x}_j \right)- \ell_{k} \left( v, \boldsymbol{x}_j \right)}{\left|\mathcal{D}_k\right|}\\
        & = \left(1-\frac{1}{\left|\mathcal{D}_k\right|}\right) \left[ \left(\frac{1}{\left|\mathcal{D}_k\right|-1} \sum_{ i\neq j}\ell_{k} \left( u,\boldsymbol{x}_i \right)+ \lambda_t\|u-{\boldsymbol{\theta}}_0 \|_2^2 \right) - \left(\frac{1}{\left|\mathcal{D}_k\right|-1} \sum_{ i\neq j}\ell_k \left( v,\boldsymbol{x}_i \right)+ \lambda_t\|v-{\boldsymbol{\theta}}_0 \|_2^2 \right) \right]\\
        & \quad\quad + \dfrac{\ell_{k, \lambda_t} \left( u,\boldsymbol{x}_j \right)- \ell_{k, \lambda_t} \left( v,\boldsymbol{x}_j \right)}{\left|\mathcal{D}_k\right|}.
        \end{aligned}
\end{equation}
\vspace*{-20pt}
\hrulefill
\end{figure*}
Take $u=\boldsymbol{\theta}_{ \ell_{k,\lambda_t}}(\mathcal{D}_k^j)$ and $v=\boldsymbol{\theta}_{ \ell_{k,\lambda_t}}(\mathcal{D}_k)$,
\begin{equation}\label{equ:basic3}
    \begin{aligned}
            & \ell_{k, \lambda_t}(\boldsymbol{\theta}_{ \ell_{k,\lambda_t}}(\mathcal{D}_k^j)) - \ell_{k, \lambda}(\boldsymbol{\theta}_{ \ell_{k,\lambda_t}}(\mathcal{D}_k)) \\
            & \leq \frac{\ell_{k, \lambda_t} \left( \boldsymbol{\theta}_{ \ell_{k, \lambda_t}}(\mathcal{D}_k^j),\boldsymbol{x}_j \right)- \ell_{k, \lambda_t} \left( \boldsymbol{\theta}_{ \ell_{k, \lambda_t}}(\mathcal{D}_k),\boldsymbol{x}_j \right)}{\left|\mathcal{D}_k\right|},
    \end{aligned}
\end{equation}
take \eqref{equ:basic1} into \eqref{equ:basic3}, 
\begin{equation}\label{equ:basic4}
    \begin{aligned}
            & \frac{1}{2} (\Lambda_{k,t,\text{min}} +2\lambda_t) \|  \boldsymbol{\theta}_{ \ell_{k, \lambda_t}}(\mathcal{D}_k^j) -\boldsymbol{\theta}_{ \ell_{k, \lambda_t}}(\mathcal{D}_k) \|_2^2 \\
            & \leq \frac{\ell_{k, \lambda_t} \left( \boldsymbol{\theta}_{ \ell_{k,\lambda_t}}(\mathcal{D}_k^j),\boldsymbol{x}_j \right)- \ell_{k, \lambda_t} \left( \boldsymbol{\theta}_{ \ell_{k, \lambda_t}}(\mathcal{D}_k),\boldsymbol{x}_j \right)}{\left|\mathcal{D}_k\right|}.
    \end{aligned}
\end{equation}
Because the loss function is $\eta$-Lipschitz, thus,
\begin{equation}\label{equ:basic5}
    \begin{aligned}
         & \frac{ \left|\ell_{k, \lambda_t} \left( \boldsymbol{\theta}_{ \ell_{k, \lambda_t}}(\mathcal{D}_k^j),\boldsymbol{x}_j \right)- \ell_{k,\lambda_t} \left( \boldsymbol{\theta}_{ \ell_{k, \lambda_t}}(\mathcal{D}_k),\boldsymbol{x}_j \right) \right|}{\left|\mathcal{D}_k\right|}  \\ 
         & \leq  \frac{\eta \|\boldsymbol{\theta}_{ \ell_{k, \lambda_t}}(\mathcal{D}_k^j)-\boldsymbol{\theta}_{ \ell_{k, \lambda_t}}(\mathcal{D}_k) \|}{\left|\mathcal{D}_k\right|}.
    \end{aligned}
\end{equation}
Take \eqref{equ:basic5} into \eqref{equ:basic4}, and basic calculation,
\begin{equation}\label{equ:basic6}
    \|\boldsymbol{\theta}_{ \ell_{k,\lambda}}(\mathcal{D}_k^j)-\boldsymbol{\theta}_{ \ell_{k, \lambda}}(\mathcal{D}_k) \| \leq \frac{2\eta}{ (\Lambda_{k,\text{min}} +2\lambda) \left|\mathcal{D}_k\right|}.
\end{equation}
Plug \eqref{equ:basic6} into \eqref{equ:basic5},
% \begin{equation}
%     \begin{aligned}
%             & \left|\ell_{k,\lambda_t} \left( \boldsymbol{\theta}_{ \ell_{k,\lambda_t}}(\mathcal{D}_k^j),\boldsymbol{x}_j \right)- \ell_{\lambda_t} \left( \boldsymbol{\theta}_{ \ell_{k,\lambda_t}}(\mathcal{D}_k),\boldsymbol{x}_j \right) \right| \\
%             & \leq \frac{2\eta^2}{ (\Lambda_{k,t,\text{min}} +2\lambda_t) \left|\mathcal{D}_k\right|}.
%     \end{aligned}
% \end{equation}
and as this holds for any $j$ and $\mathcal{D}_k$, the proof of Lemma \ref{Lem:regul} is finished.
% \begin{equation}
%     \mathbb{E}_{\mathcal{D}_k, j \sim U(n)}\left|\ell_{k, \lambda_t} \left(x_j; \boldsymbol{\theta}_{ \ell_{k,\lambda_t}}(\mathcal{D}_k^j) \right)- \ell_{k,\lambda_t} \left(x_j; \boldsymbol{\theta}_{ \ell_{k,\lambda_t}}(\mathcal{D}_k) \right) \right| \leq \frac{2\eta^2}{ (\Lambda_{k,t,\text{min}} +2\lambda_t) \left|\mathcal{D}_k\right|}.
% \end{equation}
\end{proof}

\subsection{PHS upper bound of FedLoDrop} \label{proof_of_theorem_local_drop}
Consider loss function with sparsity regularization,
\begin{equation}
    \begin{aligned}
           & \ell_{k,\lambda_t} (\boldsymbol{\theta}_{k,t})
           \\
            &= \ell_{k,t}\left(\boldsymbol{\theta}_k\right) + \lambda_t \mathbb{E}_{\boldsymbol{d}_{k,t} \sim \text{Bern}(2\gamma_{k,t}-\gamma_{k,t}^2)} \sum_j d_{k,t,j}^2 (\boldsymbol{\theta}_{k,t,j}-\boldsymbol{\theta}_{0,j})^2\\
            & = \ell_{k,t}\left(\boldsymbol{\theta}_k\right) + \lambda_t \sum_j (\boldsymbol{\theta}_{k,t,j}-\boldsymbol{\theta}_{0,j})^2 \mathbb{E}_{\boldsymbol{d}_{k,t,j} \sim \text{Bern}(2\gamma_{k,t}-\gamma_{k,t}^2)} d_{k,j}^2\\
            & = \ell_{k,t}\left(\boldsymbol{\theta}_k\right) + \lambda_t \sum_j (\boldsymbol{\theta}_{k,t,j}-\boldsymbol{\theta}_{0,j})^2 (2\gamma_{k,t}-\gamma_{k,t}^2)\\    
            & = \ell_{k,t}\left(\boldsymbol{\theta}_k\right) + \lambda_t (2\gamma_{k,t}-\gamma_{k,t}^2) \|\boldsymbol{\theta}_{k,t}-\boldsymbol{\theta}_0 \|^2.
    \end{aligned}
\end{equation}
Integrating the result above to Lemma \ref{Lem:regul} and substituting the regularization coefficient finalizes the proof.

\vspace{-0.3cm}
\section{Proof Of Lemma \ref{lemma_layer_error}}\label{proof_of_lemma_layer}
Denote $\hat{\boldsymbol{J}}_{u,k,A,t-1}=\Delta{\boldsymbol{A}}_{u,k,t-1}-\Delta\hat{\boldsymbol{A}}_{u,k,t-1}$, $\hat{\boldsymbol{J}}_{u,k,B,t-1}=\Delta{\boldsymbol{B}}_{u,k,t-1}-\Delta\hat{\boldsymbol{B}}_{u,k,t-1}$, thus 

\begin{equation}\label{equ:errorbound1}
    \begin{aligned}
    & \mathbb{E}\left[\| \boldsymbol{J}_{u,t-1} \|_F^2 \right] \\
    & = \mathbb{E}\left[\| \sum_{k=1}^K  \frac{\left|\mathcal{D}_k\right|}{|\mathcal{D}|} ( \boldsymbol{G}_{u,k,t-1}-\hat{\boldsymbol{G}}_{u,k,t-1}) \|_F^2 \right]\\
    & = \mathbb{E}\left[\|\sum_{k=1}^K  \frac{\left|\mathcal{D}_k\right|}{|\mathcal{D}|} (\boldsymbol{B}_{u,t-1} \hat{\boldsymbol{J}}_{u,k,A,t-1} + \hat{\boldsymbol{J}}_{u,k,B,t-1}\boldsymbol{A}_{u,t-1} )\|_F^2 \right]\\
    % & \stackrel{(a)} \leq 2 \mathbb{E}\left[ \|\sum_{k=1}^K  \frac{\left|\mathcal{D}_k\right|}{|\mathcal{D}|} \boldsymbol{B}_{u,t-1} \hat{\boldsymbol{J}}_{u,k,A,t-1} \|_F^2\right] \\
    % & \quad\quad + 2 \mathbb{E}\left[ \|\sum_{k=1}^K  \frac{\left|\mathcal{D}_k\right|}{|\mathcal{D}|} \boldsymbol{A}_{u,t-1} \hat{\boldsymbol{J}}_{u,k,B,t-1} \|_F^2 \right]\\
    & \stackrel{(a)} \leq 2 \sum_{k=1}^K  \frac{\left|\mathcal{D}_k\right|}{|\mathcal{D}|} \mathbb{E}\left[ \| \boldsymbol{B}_{u,t-1} \hat{\boldsymbol{J}}_{u,k,A,t-1} \|_F^2\right] \\
    & \quad\quad + 2 \sum_{k=1}^K  \frac{\left|\mathcal{D}_k\right|}{|\mathcal{D}|} \mathbb{E}\left[  \boldsymbol{A}_{u,t-1} \hat{\boldsymbol{J}}_{u,k,B,t-1} \|_F^2 \right],
    \end{aligned}
\end{equation}
where (a) comes from the convexity of F-norm. The Taylor expansion is adopted to approximate the weights of neural networks, i.e., $\Delta \hat{\boldsymbol{A}}_{u,k,t-1} 
      = 
     \Delta {\boldsymbol{A}}_{u,k,t-1} + O\left( \hat{\boldsymbol{A}}_{u,k,t-1}-{\boldsymbol{A}}_{u,k,t-1}\right) +\boldsymbol{H} ({\boldsymbol{A}}_{u,k,t-1})\left(\hat{\boldsymbol{A}}_{u,k,t-1}-{\boldsymbol{A}}_{u,k,t-1}\right)$.
Under Assumption \ref{assumption3}, the last higher order on the right side can be ignored. Thus, 
% the expectation can be bounded as
% % \begin{equation}
% %     \hat{\boldsymbol{J}}_{k,A,t-1} = -\boldsymbol{H} ({\boldsymbol{A}}_{k,t-1})\left(\hat{\boldsymbol{A}}_{k,t-1}-{\boldsymbol{A}}_{k,t-1}\right)
% % \end{equation}

\begin{equation}\label{equ:errorbound2}
    \begin{aligned}
            & \mathbb{E}\left[\| \hat{\boldsymbol{J}}_{u,k,A,t-1} \|_F^2 \right] \\
            & = \mathbb{E}\left[\| -\boldsymbol{H} ({\boldsymbol{A}}_{u,k,t-1})\left(\hat{\boldsymbol{A}}_{u,k,t-1}-{\boldsymbol{A}}_{u,k,t-1}\right) \|_F^2 \right]\\
            % & \leq \mathbb{E}\left[\| -\boldsymbol{H} ({\boldsymbol{A}}_{u,k,t-1})\|_F^2 \cdot \| \left(\hat{\boldsymbol{A}}_{u,k,t-1}-{\boldsymbol{A}}_{u,k,t-1}\right) \|_F^2 \right]\\
            & \leq \mathcal{H}^2 \cdot \mathbb{E} \left[ \| {\boldsymbol{A}}_{u,k,t-1}  \left(\mathrm{diag}(\boldsymbol{m}_{A,k,t-1})- \boldsymbol{I}\right) \|_F^2 \right] \\
            & \leq \mathcal{H}^2 \mathcal{G}^2 \mathbb{E} \left[\sum_{q=1}^{n_2} ({m}_{A,k,t-1,q}-1)^2 \right] = \mathcal{H}^2 \mathcal{G}^2 n_2 \gamma_{k,t}.
    \end{aligned}
\end{equation}
Take \eqref{equ:errorbound2} back into \eqref{equ:errorbound1}, and minor changes to $\mathbb{E}\left[\| \hat{\boldsymbol{J}}_{u,k,B,t-1} \|_F^2 \right]$, Lemma \ref{lemma_layer_error} has been proven.

\vspace{-0.3cm}
\section{Proof Of Theorem \ref{theorem:loss_descent}}\label{proof_of_theorem_conver}
According to Assumption \ref{assumption1}, set $\alpha=\frac{1}{\eta}$, and take expectation on both sides,
% \begin{equation}\label{equ:proof_whole_model}
%         \begin{aligned}
%                     & \mathbb{E} [\mathcal{L}\left(\Delta\boldsymbol{\theta}_{t}\right) ] - \mathcal{L}\left(\Delta \boldsymbol{\theta}_{t-1}\right) \\
%                     & \leq - \frac{1}{2\eta}\|\nabla\mathcal{L}(\Delta\boldsymbol{\theta}_{t-1}) \|_F^2 + \frac{1}{2\eta}\mathbb{E}\left[\| \boldsymbol{J}_{t-1} \|_F^2 \right]\\
%                     & \leq - \frac{\mu\rho}{\eta} + \frac{1}{2\eta}\mathbb{E}\left[\| \boldsymbol{J}_{t-1} \|_F^2 \right].
%         \end{aligned}
%     \end{equation}
\begingroup
\renewcommand{\arraystretch}{0.85}
\begin{equation}\label{equ:proof_whole_model}
    \begin{aligned}
        & \mathbb{E} [\mathcal{L}\left(\Delta\boldsymbol{\theta}_{t}\right) ] - \mathcal{L}\left(\Delta \boldsymbol{\theta}_{t-1}\right) \\
        & \leq - \frac{1}{2\eta}\|\nabla\mathcal{L}(\Delta\boldsymbol{\theta}_{t-1}) \|_F^2 + \frac{1}{2\eta}\mathbb{E}\left[\| \boldsymbol{J}_{t-1} \|_F^2 \right]\\
        & \leq - \frac{\mu\rho}{\eta} + \frac{1}{2\eta}\mathbb{E}\left[\| \boldsymbol{J}_{t-1} \|_F^2 \right].
    \end{aligned}
\end{equation}
\endgroup
As ${{\boldsymbol{J}}}_{t-1} \triangleq 
 \{ {{\boldsymbol{J}}}_{u,t-1} \}_{u=1}^{U^{\prime}} $, it can be further written as
 \begin{equation}
    \begin{aligned}
             &\mathbb{E}\left[\| \boldsymbol{J}_{t-1} \|_F^2 \right]  = 
             \mathbb{E}\left[ \left \|
             \begin{pmatrix}
            \boldsymbol{J}_{1,t-1} \\
            \cdots\\
            \boldsymbol{J}_{u,t-1} \\
            \cdots\\
            \boldsymbol{J}_{U^{\prime},t-1}
            \end{pmatrix}
            \right \|_F^2
             \right] \\
            %  & = 
            %  \mathbb{E}\left[ \left \|
            %  \begin{pmatrix}
            % \boldsymbol{J}_{1,t-1} \\
            % \cdots\\
            % \boldsymbol{0} \\
            % \cdots\\
            % \boldsymbol{0}
            % \end{pmatrix} + \cdots +
            % \begin{pmatrix}
            % \boldsymbol{0} \\
            % \cdots\\
            % \boldsymbol{J}_{u,t-1} \\
            % \cdots\\
            % \boldsymbol{0}
            % \end{pmatrix} + \cdots 
            % % +
            % % \begin{pmatrix}
            % % \boldsymbol{0} \\
            % % \cdots\\
            % % \boldsymbol{0} \\
            % % \cdots\\
            % % \boldsymbol{J}_{U^{\prime},t-1}
            % % \end{pmatrix}
            % \right \|_F^2
            %  \right] \\
            %  & = 
            %  \mathbb{E}\left[ \left \|
            %  \begin{pmatrix}
            % \boldsymbol{J}_{1,t-1} \\
            % \cdots\\
            % \boldsymbol{0} \\
            % \cdots\\
            % \boldsymbol{0}
            % \end{pmatrix} \right \|_F^2 + \cdots + 
            % \left \| \begin{pmatrix}
            % \boldsymbol{0} \\
            % \cdots\\
            % \boldsymbol{J}_{u,t-1} \\
            % \cdots\\
            % \boldsymbol{0}
            % \end{pmatrix} \right \|_F^2 + \cdots 
            % % + 
            % % \left \| \begin{pmatrix}
            % % \boldsymbol{0} \\
            % % \cdots\\
            % % \boldsymbol{0} \\
            % % \cdots\\
            % % \boldsymbol{J}_{U^{\prime},t-1}
            % % \end{pmatrix}
            % % \right \|_F^2
            %  \right] \\
             % & =\mathbb{E}\left[ \| \boldsymbol{J}_{1,t-1} \|_F^2 + \cdots + \| \boldsymbol{J}_{u,t-1} \|_F^2  
             % + \cdots + \| \boldsymbol{J}_{U^{\prime},t-1} \|_F^2 \right] \\
             & = \sum_{u=1}^{U^{\prime}}\mathbb{E}\left[\| \boldsymbol{J}_{u,t-1} \|_F^2 \right]
             = {U^{\prime}}\mathbb{E}\left[\| \boldsymbol{J}_{u,t-1} \|_F^2 \right].
    \end{aligned}
 \end{equation}
Taking this back to \eqref{equ:proof_whole_model}, and combining it with Lemma \ref{lemma_layer_error}, the proof is completed.

\vspace{-0.3cm}
 \section{Proof Of Lemma \ref{lemma_convex}}\label{proof_of_lemma_convex}
All the constraints and objective function can be derived by substituting the variable transformation in \eqref{equ:aul_val}. The first term in the objective function is convex. For the second one, considering two dimensions, e.g., $\gamma_1$, $\gamma_2$, its Hessian matrix is 
$$
    \left[
        \begin{array}{clr}
        o_{11} 
        & o_{12}\\
        o_{21}  & o_{22}
        \end{array}
    \right]
$$
where, $o_{11}=\frac{4\gamma_1^2 }{(a_1-\Tilde{\gamma_1}^2)^3\sqrt{\frac{1}{a_1-\Tilde{\gamma_1}^2} + \frac{1}{a_2-\Tilde{\gamma_2}^2}}}-\frac{\gamma_1^2}{(a_1-\Tilde{\gamma_1}^2)^4(\frac{1}{a_1-\Tilde{\gamma_1}^2} + \frac{1}{a_2-\Tilde{\gamma_2}^2})^{3/2}} + \frac{1 }{(a_1-\Tilde{\gamma_1}^2)^2\sqrt{\frac{1}{a_1-\Tilde{\gamma_1}^2} + \frac{1}{a_2-\Tilde{\gamma_2}^2}}}$, $o_{12}=o_{21}=-\frac{\gamma_1\gamma_2}{(a_1-\Tilde{\gamma_1}^2)^2(a_2-\Tilde{\gamma_2}^2)^2 (\frac{1}{a_1-\Tilde{\gamma_1}^2} + \frac{1}{a_2-\Tilde{\gamma_2}^2})^{3/2} }$, $o_{22}=\frac{4\gamma_2^2 }{(a_2-\Tilde{\gamma_2}^2)^3\sqrt{\frac{1}{a_1-\Tilde{\gamma_1}^2} + \frac{1}{a_2-\Tilde{\gamma_2}^2}}}-\frac{\gamma_2^2}{(a_2-\Tilde{\gamma_2}^2)^4(\frac{1}{a_1-\Tilde{\gamma_1}^2} + \frac{1}{a_2-\Tilde{\gamma_2}^2})^{3/2}} + \frac{1 }{(a_2-\Tilde{\gamma_2}^2)^2\sqrt{\frac{1}{a_1-\Tilde{\gamma_1}^2} + \frac{1}{a_2-\Tilde{\gamma_2}^2}}}$, and is non-negative. The higher dimensions can be generalized with inductions and definitions. Therefore, the overall objective function is convex. Besides, all constraints are linear relations, and thus are convex. 
% % Thereby, the feasible region of these constraints are convex sets. 
% Thus, Problem $\mathscr P_2$ is convex.

\vspace{-6mm}
\bibliographystyle{IEEEtran}
\bibliography{refs}

@article{srivastava2014dropout,
  title={Dropout: a simple way to prevent neural networks from overfitting},
  author={Srivastava, Nitish and Hinton, Geoffrey and Krizhevsky, Alex and Sutskever, Ilya and Salakhutdinov, Ruslan},
  journal={J. Mach. Learn. Res.},
  volume={15},
  number={1},
  pages={1929--1958},
  year={2014},
  publisher={JMLR. org}
}

@ARTICLE{you2017energy,
title={Energy-Efficient Resource Allocation for Mobile-Edge Computation Offloading}, 
author={You, Changsheng and Huang, Kaibin and Chae, Hyukjin and Kim, Byoung-Hoon},
journal={IEEE Trans. Wireless Commun.}, 
year={2017},
volume={16},
number={3},
pages={1397-1411}
}

@inproceedings{
babakniya2023slora,
title={{SL}o{RA}: Federated Parameter Efficient Fine-Tuning of Language Models},
author={Sara Babakniya and Ahmed Elkordy and Yahya Ezzeldin and Qingfeng Liu and Kee-Bong Song and MOSTAFA EL-Khamy and Salman Avestimehr},
booktitle={Proc. NeurIPS Workshop},
year={2023}
}

@inproceedings{hu2021lora,
  title={Lo{RA}: Low-rank adaptation of large language models},
  author={Hu, Edward J and Shen, Yelong and Wallis, Phillip and Allen-Zhu, Zeyuan and Li, Yuanzhi and Wang, Shean and Wang, Lu and Chen, Weizhu},
  booktitle={Proc. Int. Conf. Learn. Represent. (ICLR)},
  year={2022}
}

@inproceedings{zhang2024towards,
  title={Towards building the federatedGPT: Federated instruction tuning},
  author={Zhang, Jianyi and Vahidian, Saeed and Kuo, Martin and Li, Chunyuan and Zhang, Ruiyi and Yu, Tong and Wang, Guoyin and Chen, Yiran},
  booktitle={Proc. IEEE Int. Conf. Acoust., Speech Signal Process. (ICASSP)},
  pages={6915--6919},
  year={2024}
}

@inproceedings{zhangadaptive,
  title={Adaptive Budget Allocation for Parameter-Efficient Fine-Tuning},
  author={Zhang, Qingru and Chen, Minshuo and Bukharin, Alexander and He, Pengcheng and Cheng, Yu and Chen, Weizhu and Zhao, Tuo},
  booktitle={Proc. Int. Conf. Learn.
 Represent. (ICLR)},
  year={2023}    
}

@article{lin2024lora,
  title={Lo{RA} dropout as a sparsity regularizer for overfitting control},
  author={Lin, Yang and Ma, Xinyu and Chu, Xu and Jin, Yujie and Yang, Zhibang and Wang, Yasha and Mei, Hong},
  journal={arXiv preprint arXiv:2404.09610},
  year={2024}
}

@ARTICLE{10807365,
  author={Zhu, Meilu and Mao, Axiu and Liu, Jun and Yuan, Yixuan},
  journal={IEEE Trans. Med. Imaging}, 
  title={DEeR: Deviation Eliminating and Noise Regulating for Privacy-Preserving Federated Low-Rank Adaptation}, 
  year={2025},
  volume={44},
  number={4},
  pages={1783-1795},
  keywords={LoRa;Noise;Frequency modulation;Adaptation models;Tuning;Servers;Matrix decomposition;Differential privacy;Training;Regulators;Low-rank adaptation;federated learning;parameter-efficient tuning;foundation models},
  doi={10.1109/TMI.2024.3518539}}

@inproceedings{cho2023heterogeneous,
  title={Heterogeneous {L}o{RA} for federated fine-tuning of on-device foundation models},
  author={Cho, Yae Jee and Liu, Luyang and Xu, Zheng and Fahrezi, Aldi and Barnes, Matt and Joshi, Gauri},
  booktitle={Proc. NeurIPS Workshop},
  year={2023}
}

@article{bousquet2002stability,
  title={Stability and generalization},
  author={Bousquet, Olivier and Elisseeff, Andr{\'e}},
  journal={J. Mach. Learn. Res},
  volume={2},
  pages={499--526},
  year={2002},
  publisher={JMLR. org}
}

@inproceedings{charles2018stability,
  title={Stability and generalization of learning algorithms that converge to global optima},
  author={Charles, Zachary and Papailiopoulos, Dimitris},
  booktitle={Proc. Int. Conf. Mach. Learn. (ICLR)},
  pages={745--754},
  year={2018}
}

@inproceedings{fu2023effectiveness,
  title={On the effectiveness of parameter-efficient fine-tuning},
  author={Fu, Zihao and Yang, Haoran and So, Anthony Man-Cho and Lam, Wai and Bing, Lidong and Collier, Nigel},
  booktitle={Proc. AAAI Conf.
 Artif. Intell. (AAAI)},
  volume={37},
  number={11},
  pages={12799--12807},
  year={2023}
}

@inproceedings{yang2022transformers,
  title={Transformers from an optimization perspective},
  author={Yang, Yongyi and Wipf, David P and others},
  booktitle={Proc. Conf. Neural Inf. Process. Syst. (NeurIPS)},
  volume={35},
  pages={36958--36971},
  year={2022}
}

@inproceedings{
sun2024improving,
title={Improving {L}o{RA} in Privacy-preserving Federated Learning},
author={Youbang Sun and Zitao Li and Yaliang Li and Bolin Ding},
booktitle={Proc. Int. Conf. Learn.
 Represent. (ICLR)},
year={2024}
}

@inproceedings{lai2022fedscale,
  title={Fedscale: Benchmarking model and system performance of federated learning at scale},
  author={Lai, Fan and Dai, Yinwei and Singapuram, Sanjay and Liu, Jiachen and Zhu, Xiangfeng and Madhyastha, Harsha and Chowdhury, Mosharaf},
  booktitle={Proc. Int. Conf. Mach. Learn. (ICML)},
  pages={11814--11827},
  year={2022}
}

@article{singhal2024exact,
  title={Exact Aggregation for Federated and Efficient Fine-Tuning of Foundation Models},
  author={Singhal, Raghav and Ponkshe, Kaustubh and Vepakomma, Praneeth},
  journal={arXiv preprint arXiv:2410.09432},
  year={2024}
}

@ARTICLE{10855336,
  author={Wang, Zixin and Zhou, Yong and Shi, Yuanming and Letaief, Khaled B.},
  journal={IEEE Trans. Wireless Commun.}, 
  title={Federated Fine-Tuning for Pre-Trained Foundation Models Over Wireless Networks}, 
  year={2025},
  volume={24},
  number={4},
  pages={3450-3464},
  keywords={Frequency modulation;LoRa;Performance evaluation;Training;Servers;Convergence;Resource management;Computational modeling;Bandwidth;Artificial intelligence;Federated learning;pre-trained foundation model;parameter-efficient fine-tuning;resource allocation},
  doi={10.1109/TWC.2025.3531128}}

@inproceedings{ye2024openfedllm,
  title={Openfedllm: Training large language models on decentralized private data via federated learning},
  author={Ye, Rui and Wang, Wenhao and Chai, Jingyi and Li, Dihan and Li, Zexi and Xu, Yinda and Du, Yaxin and Wang, Yanfeng and Chen, Siheng},
  booktitle={Proc. ACM SIGKDD Conf. Knowl. Discov. Data Min.},
  pages={6137--6147},
  year={2024}
}

@article{kuo2024federated,
  title={Federated {L}o{ra} with sparse communication},
  author={Kuo, Kevin and Raje, Arian and Rajesh, Kousik and Smith, Virginia},
  journal={arXiv preprint arXiv:2406.05233},
  year={2024}
}

@ARTICLE{10472574,
  author={Qiu, Xihe and Hao, Teqi and Shi, Shaojie and Tan, Xiaoyu and Xiong, Yu-Jie},
  journal={IEEE Signal Process Lett.	}, 
  title={Chain-of-{L}o{RA}: Enhancing the Instruction Fine-Tuning Performance of Low-Rank Adaptation on Diverse Instruction Set}, 
  year={2024},
  volume={31},
  number={},
  pages={875-879},
  keywords={Task analysis;Training;Computational modeling;Adaptation models;Predictive models;Optimization;Graphics processing units;Large language models;low-rank adaptation;instruction fine-tuning},
  doi={10.1109/LSP.2024.3377590}}

@ARTICLE{10558825,
  author={Du, Jun and Lin, Tianyi and Jiang, Chunxiao and Yang, Qianqian and Bader, C. Faouzi and Han, Zhu},
  journal={IEEE Wireless Commun.}, 
  title={Distributed Foundation Models for Multi-Modal Learning in 6{G} Wireless Networks}, 
  year={2024},
  volume={31},
  number={3},
  pages={20-30},
  keywords={6G mobile communication;Frequency modulation;Processor scheduling;Wireless networks;Computational modeling;Parallel processing;Multisensory integration},
  doi={10.1109/MWC.009.2300501}}

@ARTICLE{10908556,
  author={Wu, Hai and Chen, Xu and Huang, Kaibin},
  journal={IEEE Trans. Wireless Commun.}, 
  title={Resource Management for Low-Latency Cooperative Fine-Tuning of Foundation Models at the Network Edge}, 
  year={2025},
  volume={24},
  number={6},
  pages={4839-4852},
  keywords={Computational modeling;Resource management;Artificial intelligence;Costs;Training;Servers;Performance evaluation;Bandwidth;Adaptation models;Tuning;Foundation models (FoMo’s);fine-tuning;edge computing},
  doi={10.1109/TWC.2025.3544333}}

@book{li2006nonlinear,
  title={Nonlinear integer programming},
  author={Li, Duan and Sun, Xiaoling and others},
  volume={84},
  year={2006},
  publisher={Springer}
}

@article{liu2025resource,
  title={Resource-Efficient Federated Fine-Tuning Large Language Models for Heterogeneous Data},
  author={Liu, Jun and Liao, Yunming and Xu, Hongli and Xu, Yang},
  journal={arXiv preprint arXiv:2503.21213},
  year={2025}
}

@article{shen2024large,
  title={Large language models empowered autonomous edge {AI} for connected intelligence},
  author={Shen, Yifei and Shao, Jiawei and Zhang, Xinjie and Lin, Zehong and Pan, Hao and Li, Dongsheng and Zhang, Jun and Letaief, Khaled B},
  journal={IEEE Commun. Mag.},
  volume={62},
  number={10},
  pages={140--146},
  year={2024},
  publisher={IEEE}
}

@article{jiang2024semantic,
  title={Semantic communications using foundation models: Design approaches and open issues},
  author={Jiang, Peiwen and Wen, Chao-Kai and Yi, Xinping and Li, Xiao and Jin, Shi and Zhang, Jun},
  journal={IEEE Wireless Commun.},
  volume={31},
  number={3},
  pages={76--84},
  year={2024},
  publisher={IEEE}
}

@inproceedings{zhu2023prompt,
  title={Prompt-aligned gradient for prompt tuning},
  author={Zhu, Beier and Niu, Yulei and Han, Yucheng and Wu, Yue and Zhang, Hanwang},
  booktitle={Proc. IEEE/CVF Int. Conf. Comput. Vis. (ICCV)},
  pages={15659--15669},
  year={2023}
}

@inproceedings{bai2023knowprefix,
  title={KnowPrefix-tuning: A two-stage prefix-tuning framework for knowledge-grounded dialogue generation},
  author={Bai, Jiaqi and Yan, Zhao and Yang, Ze and Yang, Jian and Liang, Xinnian and Guo, Hongcheng and Li, Zhoujun},
  booktitle={Proc. Euro. Conf. Mach. Learn. (ECML)},
  pages={525--542},
  year={2023}
}

@article{wu2025adaptive,
  title={Adaptive Rank Allocation for Federated Parameter-Efficient Fine-Tuning of Language Models},
  author={Wu, Fei and Hu, Jia and Min, Geyong and Wang, Shiqiang},
  journal={arXiv preprint arXiv:2501.14406},
  year={2025}
}

@ARTICLE{10835069,
  author={Qu, Guanqiao and Chen, Qiyuan and Wei, Wei and Lin, Zheng and Chen, Xianhao and Huang, Kaibin},
  journal={IEEE Commun. Surv. Tutorials}, 
  title={Mobile Edge Intelligence for Large Language Models: A Contemporary Survey}, 
  year={2025},
  volume={},
  number={},
  pages={1-1},
  keywords={Training;Surveys;Artificial intelligence;Reviews;Internet;Servers;Image edge detection;6G mobile communication;Tutorials;Bandwidth;Large language models;foundation models;mobile edge computing;edge intelligence;6G;split learning},
  doi={10.1109/COMST.2025.3527641}}

@ARTICLE{9707474,
  author={Wen, Dingzhu and Jeon, Ki-Jun and Huang, Kaibin},
  journal={IEEE Wireless Commun. Lett.
}, 
  title={Federated Dropout—A Simple Approach for Enabling Federated Learning on Resource Constrained Devices}, 
  year={2022},
  volume={11},
  number={5},
  pages={923-927},
  keywords={Servers;Computational modeling;Solid modeling;Neurons;Adaptation models;Training;Task analysis;Federated learning;dropout;communication and computation overhead},
  doi={10.1109/LWC.2022.3149783}}

@ARTICLE{9140412,
  author={Liu, Jingxian and Xiong, Ke and Ng, Derrick Wing Kwan and Fan, Pingyi and Zhong, Zhangdui and Letaief, Khaled Ben},
  journal={IEEE Trans. Wireless Commun.}, 
  title={Max-Min Energy Balance in Wireless-Powered Hierarchical Fog-Cloud Computing Networks}, 
  year={2020},
  volume={19},
  number={11},
  pages={7064-7080},
  keywords={Task analysis;Delays;Cloud computing;Servers;Edge computing;Wireless communication;Resource management;Computation offloading;fog computing;cloud computing;wireless power transfer;non-convex and combinatorial optimization;general Benders decomposition},
  doi={10.1109/TWC.2020.3007805}}

@ARTICLE{9470915,
  author={Wen, Dingzhu and Jeon, Ki-Jun and Bennis, Mehdi and Huang, Kaibin},
  journal={IEEE Trans. Wireless Commun.}, 
  title={Adaptive Subcarrier, Parameter, and Power Allocation for Partitioned Edge Learning Over Broadband Channels}, 
  year={2021},
  volume={20},
  number={12},
  pages={8348-8361},
  keywords={Resource management;Load modeling;Computational modeling;Training data;Wireless communication;Servers;OFDM;Edge computing;OFDM;partitioned edge learning (PARTEL);subcarrier allocation;parameter allocation;power control},
  doi={10.1109/TWC.2021.3092075}}

@ARTICLE{10877919,
  author={Elbakary, Ahmed and Issaid, Chaouki Ben and ElBatt, Tamer and Seddik, Karim and Bennis, Mehdi},
  journal={IEEE Networking Lett.}, 
  title={MIRA: A Method of Federated MultI-Task Learning for LaRge LAnguage Models}, 
  year={2025},
  volume={},
  number={},
  pages={1-1},
  keywords={Adaptation models;Training;Computational modeling;Servers;Data models;Matrix decomposition;Costs;Tuning;LoRa;Multitasking;Large Language Models;federated learning;multi-task learning;deep neural networks},
  doi={10.1109/LNET.2025.3539810}}

@ARTICLE{10763424,
  author={Sun, Haofeng and Tian, Hui and Ni, Wanli and Zheng, Jingheng and Niyato, Dusit and Zhang, Ping},
  journal={IEEE Trans. Wireless Commun.}, 
  title={Federated Low-Rank Adaptation for Large Models Fine-Tuning Over Wireless Networks}, 
  year={2025},
  volume={24},
  number={1},
  pages={659-675},
  keywords={Wireless networks;Array signal processing;OFDM;Training;LoRa;Artificial intelligence;Wireless sensor networks;Performance evaluation;Massive MIMO;Tuning;Low-rank adaption;large language model;federated learning;over-the-air computation;convergence analysis},
  doi={10.1109/TWC.2024.3497998}}

@ARTICLE{11026882,
  author={Ni, Wanli and Sun, Haofeng and Ao, Huiqing and Tian, Hui},
  journal={IEEE Internet Things Mag.}, 
  title={Federated Intelligence: When Large {AI} Models Meet Federated Fine-Tuning and Collaborative Reasoning at the Network Edge}, 
  year={2025},
  volume={},
  number={},
  pages={1-8},
  keywords={Foundation models;Collaboration;Cognition;Computational modeling;Wireless networks;Data models;Artificial intelligence;Robots;Training;Robot sensing systems},
  doi={10.1109/MIOT.2025.3575928}}

@ARTICLE{11143883,
  author={Wen, Dingzhu and Xie, Sijing and Cao, Xiaowen and Cui, Yuanhao and Xu, Jie and Shi, Yuanming and Cui, Shuguang},
  journal={IEEE Trans. Wireless Commun.	}, 
  title={Integrated Sensing, Communication, and Computation for Over-the-Air Federated Edge Learning}, 
  year={2025},
  volume={},
  number={},
  pages={1-1},
  keywords={Sensors;Convergence;Atmospheric modeling;Computational modeling;Resource management;Training;Data models;Servers;Artificial intelligence;Wireless sensor networks;Over-the-air federated edge learning;sensing-communication-computation integration;convergence analysis;resource allocation},
  doi={10.1109/TWC.2025.3598997}}

@ARTICLE{10453339,
  author={Ni, Wanli and Ao, Huiqing and Tian, Hui and Eldar, Yonina C. and Niyato, Dusit},
  journal={IEEE Internet Things J.}, 
  title={FedSL: Federated Split Learning for Collaborative Healthcare Analytics on Resource-Constrained Wearable IoMT Devices}, 
  year={2024},
  volume={11},
  number={10},
  pages={18934-18935},
  keywords={Training;Servers;Medical services;Computational modeling;X-ray imaging;Data models;Federated learning;Federated split learning (FedSL);healthcare analytics;Internet of Medical Things (IoMT);user privacy;wearable devices},
  doi={10.1109/JIOT.2024.3370985}}

@article{xie2024federated,
  title={Federated dropout: Convergence analysis and resource allocation},
  author={Xie, Sijing and Wen, Dingzhu and Liu, Xiaonan and You, Changsheng and Ratnarajah, Tharmalingam and Huang, Kaibin},
  journal={arXiv preprint arXiv:2501.00379},
  year={2024}
}

@ARTICLE{11017442,
  author={Shang, Xuening and Liu, Zhan and Gao, Deyun and Yang, Dong and Zhang, Weiting and Foh, Chuan Heng and Zhang, Hongke},
  journal={IEEE J. Sel. Areas Commun.}, 
  title={Computing and Network Load Balancing for Decentralized Deep Federated Learning in Industrial Cyber-Physical Systems: A Multi-Task Approach}, 
  year={2025},
  volume={43},
  number={9},
  pages={2997-3013},
  keywords={Delays;Training;Servers;Load modeling;Computational modeling;Load management;Synchronization;Industrial Internet of Things;Heuristic algorithms;Scalability;Industrial cyber-physical systems;decentralized deep federated learning;computing and network load balancing;multi-task deep reinforcement learning},
  doi={10.1109/JSAC.2025.3574618}}

@article{jiang2025comprehensive,
  title={A comprehensive survey of large {AI} models for future communications: Foundations, applications and challenges},
  author={Jiang, Feibo and Pan, Cunhua and Dong, Li and Wang, Kezhi and Debbah, Merouane and Niyato, Dusit and Han, Zhu},
  journal={arXiv preprint arXiv:2505.03556},
  year={2025}
}

@ARTICLE{10909031,
  author={Ai, Yuhan and Chen, Qimei and Zhu, Guangxu and Wen, Dingzhu and Jiang, Hao and Zeng, Jun and Li, Ming},
  journal={IEEE Trans. Wireless Commun.}, 
  title={Clustered Federated Multi-Task Learning: A Communication-and-Computation Efficient Sparse Sharing Approach}, 
  year={2025},
  volume={24},
  number={6},
  pages={4824-4838},
  keywords={Data models;Training;Optimization;Computational modeling;Adaptation models;Multitasking;Correlation;Wireless communication;Convergence;Channel allocation;Federated multi-task learning;sparse sharing;clustering;model pruning;bandwidth allocation},
  doi={10.1109/TWC.2025.3544318}}

@ARTICLE{ZW_spectrum,
  author={Wang, Zhanwei and Huang, Kaibin and Eldar, Yonina C.},
  journal={IEEE Trans. Wireless Commun.}, 
  title={Spectrum Breathing: Protecting Over-the-Air Federated Learning Against Interference}, 
  year={2024},
  volume={23},
  number={8},
  pages={10058-10071}}

\end{document}